\newtheorem{Theorem}{Theorem}[section]
\newtheorem{Proposition}{Proposition}[section]
\newtheorem{Lemma}{Lemma}[section]
\newtheorem{Remark}{Remark}[section]
\newtheorem{Corollary}{Corollary}[section]
\newtheorem{Example}{Example}[section]
\newenvironment{proofoftheorem}[1]{\noindent{\bf Proof of Theorem #1. }}{ \qed }
\newenvironment{proofofproposition}[1]{\noindent{\bf Proof of Proposition #1. }}{ \qed }
\newenvironment{proofofcorollary}[1]{\noindent{\bf Proof of Corollary #1. }}{ \qed }
\newcommand{\tabincell}[2]{\begin{tabular}{@{}#1@{}}#2\end{tabular}}
\begin{document}

\thispagestyle{empty}
\pagestyle{plain}
\title{Common Decomposition of Correlated Brownian Motions and its Financial Applications}
\author{Tianyao Chen, Xue Cheng, Jingping Yang}
\maketitle

\begin{abstract}
\noindent
In this paper, we develop a theory of \emph{common decomposition} for two correlated Brownian motions, in which, by using change of time method, the correlated Brownian motions are represented by a triplet of processes, $(X,Y,T)$, where $X$ and $Y$ are independent Brownian motions. We show the equivalent conditions for the triplet being independent. We discuss the connection and difference of the common decomposition with the local correlation model. Indicated by the discussion, we propose a new method for constructing correlated Brownian motions which performs very well in simulation. For applications, we use these very general results for pricing two-factor financial derivatives whose payoffs rely very much on the correlations of underlyings. And in addition, with the help of numerical method, we also make a discussion of the pricing deviation when substituting a constant correlation model for a general one.
\end{abstract}
\section{Introduction}\label{introduction}

The correlation between assets plays an important role in finance. Whenever we meet a problem involving two stochastic factors, the correlation risk is unavoidable. The problem may be from areas of asset allocation, pairs trading, risk management and typically, multi-assets derivative's pricing. In financial derivatives' pricing, there are quite a lot chances to meet with the situation of handling two stochastic factors. For example, in stochastic volatility models, the risky price and the stochastic volatility are two factors; in cross-currency derivatives, the evolution of two currencies are driven by different stochastic factors; in two-asset or multi-asset derivatives, the price movements may be modeled by two stochastic processes, etc. Generally speaking, 
 there are two methods in financial modelling to induce dependence between assets, one is by \emph{copula}, the other is in SDE models by assuming a correlation structure for processes driving the model. Hence, modeling the stochastic factors by two Brownian motions has been a common-used method, see, among others, \cite{heston1993closed},\cite{dai2004lookback} and \cite{hurd2010fourier}. In most situations, from a practical aspect, the two stochastic factors (hence the two Brownian motions) should be correlated with each other. Since Brownian motion is the most commonly used driving process stemming from Bachelier, correlation between Brownian motions is crucially important in the latter.

To formulate correlated Brownian motions, many models adopt the constant local correlation assumption, i.e., $d[B,W]_t=\rho dt$ or 
conventionally, $dB_tdW_t=\rho dt$, for Brownian motions $B$ and $W$ and a constant $\rho\in \mathbb R$. However, more and more empirical works proved that the dependence between financial factors varies over time and depending on the economic status, e.g., \cite{bahmani2015relation} for cross-currency derivatives, \cite{engle2001theoretical} for multi-asset and \cite{benhamou2010time} for stochastic volatility models. Other empirical evidences are as follows, \cite{Chiang2007Dynamic} found a significant increasing for correlations between Asian market after the crisis, \cite{syllignakis2011dynamic} and \cite{junior2012correlation} getting similar results for the European and global markets, \cite{xiong2018time} discovered time-varying correlation between policy index and stock return in China and \cite{balcilar2018dynamic} found dynamic correlation between oil price and inflation in South Africa.

Probably for this reason, there is a growing literature in recent years applying dynamic local correlation for financial problems. Since the value of local correlation, i.e., $\rho$ introduced above, must be in $[-1,1]$, these literatures adopted various techniques to assure this. \cite{osajima2007asymptotic} and \cite{fernandez2013static} modeled $\rho$ as a bounded deterministic function of time $t$  for SABR model while \cite{teng2015pricing} adopted the same idea in geometric Brownian motion model and applied it to pricing Quanto option. Note that in these models, $\rho$ is dynamic but nonstochastic. For stochastic $\rho$, \cite{van2006modelling}, \cite{langnau2010dynamic}, \cite{teng2016versatile} and \cite{carr2017bounded} expressed $\rho$ as a bounded function of some stochastic state processes and applied it in derivatives' pricing problems. And some literatures modeled $\rho$ directly by a bounded stochastic process. For example, bounded Jacobi process is a kind of bounded diffusion process driven by Brownian motion and was introduced to model $\rho$ with applications in option pricing and assets management, including vanilla option \citep{teng2016heston}, correlation swap \citep{meissner2016correlation}, Quanto \citep{Ma2009Pricing} and multi-asset option \citep{ma2009stochastic}, and in portfolio selection and risk management \citep{buraschi2010correlation}. \cite{hull2010valuation} modeled the local correlation as a step process where each step is a beta-distributional random variable. \cite{markus2019comparison} made a comparison of several stochastic local correlation models. Moreover, regime switching model is a well used model in finance where all the parameters, including $\rho$, could be driven by a common continuous-time finite-state stationary Markov process, and thus provide another way to model stochastic local correlation, e.g. \cite{zhou2003markowitz}. Wishart process can establish stochastic covariance directly, and the local correlation obtained from covariance matrix is stochastic as well. \cite{da2007option} discussed the Wishart process for multi-asset option pricing and found that there is a correlation leverage affect in call on max style option. Double Heston model also allows a special kind of local correlation between asset and stochastic volatility, see \cite{costabile2012pricing} and \cite{christoffersen2009shape} for more details.

Except correlated Brownian motions, there are also other ways to construct correlated stochastic processes. \cite{wang2009multivariate} obtained correlated variance gamma processes by Brownian motions with constant correlation compound with time changes. \cite{mendoza2016multivariate} and \cite{barndorff2001multivariate} describe correlated stochastic processes by independent background stochastic processes with dependent L\'evy subordinators. \cite{ballotta2016multivariate} proposed factor model for L\'evy process, each asset is governed by a systematic component and a specific component.

The main focus of this paper is on proposing a new method which we call \emph{Common Decomposition} for formulation and analysis of the dependency structure for general correlated Brownian motions. By introducing a time change process, the two correlated Brownian motions can be decomposed as two independent Brownian motions, where the two independent Brownian motions characterize the common and counter movements of the original two correlated Brownian motions. Hence, the key point of dependency structure of two original Brownian motions is the time change process. Comparing with the local correlation, an important advantage of common decomposition is that time change process is observable while the local correlation is usually unobservable. Time change is a developed technique to construct stochastic processes \citep{barndorff2015change}, and is widely applied to mathematical finance \citep{carr2003stochastic,geman2001time}. However, as far as we know, there are few works apply time change technique into modeling correlated Brownian motions. An interesting thing is that we find common decomposition is invariance after change of measure under proper conditions.

Conversely, we also consider how to construct correlated Brownian motions by common decomposition. Comparing with the Euler-Maruyama method \citep{kloeden2013numerical} of \emph{Local Correlation} model, we find that common decomposition method simulate the correlated Brownian motions much faster. Under some conditions, there is no simulation error in common decomposition method which is impossible for Euler-Maruyama method of local correlation model.

After construct correlated Brownian motions, we apply our method into financial derivatives pricing, such as Quantos, covariance and correlation swap, 2-assets option, etc. For 2-assets option, it is hard to obtain closed form directly, hence we provide a analytical solution based on Fourier transform. Fourier transform method in option pricing is proposed by \cite{carr1999option}, more recent papers studied Fourier transform method to price multi-asset options, e.g. \cite{hurd2010fourier} for spread option, \cite{wang2009multivariate} for rainbow options and \cite{leentvaar2008multi} gave a numerical method for multi-asset options without explicit expression. Through Fourier method, we find a unified analytical tractable expression of prices of 2-assets options.

We investigate the pricing error between constant correlation model and stochastic correlation model for 2-assets option by numerical experiments. The numerical results shows that for most out-of-the-money options, the constant correlation model perform poorly while the constant correlation model perform well for at-the-money and in-the-money options.

This paper is organized as follows. In Section \ref{Dependency Structure of Two Correlated Brownian Motions}, we give the definition of common decomposition and discuss the independency properties of stochastic processes obtained from common decomposition. Besides, we consider the relationship between common decomposition method and local correlation model. In Section \ref{Construction of Two Correlated Brownian Motions}, we provide a sufficient condition for constructing correlated Brownian motions and compare the simulation efficiency between common decomposition and traditional method. Financial applications for derivatives pricing are given in Section \ref{Pricing Financial Derivatives by Decomposition of Two Correlated Brownian Motions}. Numerical results are shown in Section \ref{Numerical Results}. Proofs of this paper are given in Section \ref{Some Proofs}.

\section{Common Decomposition of Two Correlated Brownian Motions}
\label{Dependency Structure of Two Correlated Brownian Motions}
In this section, we consider the new method which is called the \emph{common decomposition} of two correlated Brownian motions. Firstly, we propose the definition of common decomposition of two correlated Brownian motions and give some notations. Secondly, we investigate the distribution and independency property of stochastic processes obtained from the common decomposition. Finally, we study the connection of the common decomposition and \emph{local correlation} of two correlated Brownian motions.

In the financial market, if the time interval of observing asset price tends to $0$, then the realized variance of observed asset price tends to the quadratic variation of asset price. Note that the quadratic variation $[\cdot,\cdot]$ of continuous local martingale is same as the predictable quadratic variation $\langle\cdot,\cdot\rangle$ \citep[Chapter IV, Theorem 1.8]{revuz2013continuous}, and the stochastic process involved in this paper are all continuous local martingales, hence we replace $\langle\cdot,\cdot\rangle$ with $[\cdot,\cdot]$ if there is no confusion.
\subsection{Definition of Common Decomposition}\label{model setup}

On a complete probability space $\left(\Omega, \mathcal F, P\right)$, we consider two correlated Brownian motions, $\{B_t\}_{t\ge0}$ and $\{W_t\}_{t\ge0}$, with respect to the same filtration $\mathbb F=\{\mathcal F_t\}_{t\geq0}$ which is assumed to satisfy the usual conditions.

Define
\begin{equation}T_t\triangleq\frac{t+[B,W]_t}2,\ S_t\triangleq\frac{t-[B,W]_t}2,\label{definition of T}\end{equation}
where $[B,W]_t$ denotes the cross variation of $B$ and $W$. Note that $B$ and $W$ are Brownian motions, hence $[B,B]_t=[W,W]_t=t$. Consequently $[\frac{B+W}2,\frac{B+W}2]_t=\frac14([B,B]_t+[W,W]_t+2[B,W]_t)=T_t$, which implies $T$ is quadratic variation of $\frac{B+W}2$. Similarly, $S$ is quadratic variation of $\frac{B-W}2$.
By immediate calculation, when $s<t$ we have
\begin{align}\nonumber-t+s=&\frac{-[B,B]_t-[W,W]_t+[B,B]_s+[W,W]_s}{2}\\
\nonumber\le&[B,W]_t-[B,W]_s\\
\le&\frac{[B,B]_t+[W,W]_t-[B,B]_s-[W,W]_s}2=t-s,\label{[B,W]_t-[B,W]_s}\end{align}
hence
\begin{equation}0\le T_t-T_s\le t-s,\ 0\le S_t-S_s\le t-s.\label{T_t-T_s}\end{equation}
Consequently, $T_t$ and $S_t$ are increasing processes with $T_t+S_t=t$ and thus they are both absolutely continuous with respect to $t$. Then by Radon-Nikodym theorem, $T_t$ and $S_t$ are derivable with respect to $t$.
\begin{Example}If the correlation coefficient $\rho$ of $B$ and $W$ is constant, i.e., $[B,W]_t=\rho t$, then $T_t=\frac{1+\rho}2t$ and $S_t=\frac{1-\rho}2t$. Particularly,
\begin{itemize}
\item when $B$ and $W$ are completely positive correlated, then $[B,W]_t=t$, $T_t=t$ and $S_t=0$;
\item when $B$ and $W$ are completely negative correlated, then $[B,W]_t=-t$, $T_t=0$ and $S_t=t$;
\item when $B$ and $W$ are independent with each other, then $T_t=S_t=\frac t2$.
\end{itemize}
\end{Example}

%

We will explain in Section \ref{Comparison with Local Correlation Model} that $T$ and $S$ could be regarded as special ``timers" that records the time with special correlation information. Next, let
\begin{equation}\label{tau,varsigma}
  {\tau_t}=\inf\{u:T_u>t\},\ {\varsigma_t}=\inf\{u:S_u>t\},\forall t\ge0.
\end{equation}
By definition, $\{{\tau_t}\}_{t\ge0}$ and $\{\varsigma_t\}_{t\ge0}$ are time changes\footnote{A time change $C$ is a family $C_s,s\ge0,$ of stopping times such that the map $s\rightarrow C_s$ are a.s. increasing and right-continuous (\cite{revuz2013continuous},Chapter V, Definition 1.2).} of filtration $\mathbb F$, and on the contrary, $T$ is a time change of $\{\mathcal F_{\tau_t}\}_{t\geq 0}$ and $S$ is a time change of $\{\mathcal F_{\varsigma_t}\}_{t\geq 0}$.

When $\tau_t<\infty$ and $\varsigma_t<\infty$ for any $t>0$, the so-called \emph{common decomposition} in this article could be given through time-changed processes. Let
\begin{equation}X_t\triangleq \frac{B_{\tau_t}+W_{\tau_t}}{2},\  Y_t\triangleq\frac{B_{\varsigma_t}-W_{\varsigma_t}}{2}.\label{definition of X and Y}\end{equation}
If $\tau_{T_t}=t$,it is evident that $X_{T_t}=\frac{B_t+W_t}2$ according to \eqref{definition of X and Y}. If $t<\tau_{T_t}<\infty$, for any $u\in[t,\tau_{T_t}]$, we have $T_u=T_t$ by the continuity of $T$ and the definition of $\tau$. Note that $T$ is the quadratic variation of $\frac{B+W}2$, hence $\frac{B_u+W_u}2=\frac{B_t+W_t}2$ for any $u\in[t,\tau_{T_t}]$ according to \citet{revuz2013continuous}. Consequently, $X_{T_t}=\frac{B_{\tau_{T_t}}+W_{\tau_{T_t}}}2=\frac{B_t+W_t}2$. If $\varsigma_{S_t}<\infty$, by the similar approach, we have $Y_{S_t}=\frac{B_t-W_t}2$. In summary, if $\tau_{T_t}<\infty$ and $\varsigma_{S_t}<\infty$, we have
\begin{equation}B_t=X_{T_t}+Y_{S_t},\ W_t=X_{T_t}-Y_{S_t}.\label{decomposition equation}\end{equation}
Thus we obtained a representation of $(B,W)$ through the three new-defined processes $X,\ Y,\ \mbox{and}\ T$ (it always holds that $S_t=t-T_t$). We call \eqref{decomposition equation} the \emph{common decomposition of $(B,W)$} and the triplet of common decomposition is denoted by $(X,Y,T)$. Note that the concept of common decomposition was first proposed by \cite{chen2018decomposing} for the correlated random walks. In this article, we focus on the common decomposition in correlated Brownian motions.

Given $\omega\in\Omega$,
$$T_{\infty}(\omega)\triangleq\lim_{u\to\infty}T_u(\omega),\ S_\infty(\omega)\triangleq\lim_{u\to\infty}S_u(\omega).$$
Whenever $T_{\infty}(\omega)$ is finite, $X_t(\omega)$ is not well-defined for $t\geq T_\infty(\omega)$. For example, if $B$ and $W$ are completely negative correlated, then $[B,W]_t=-t$, $T_t=0$ for any $t\geq 0$, and ${\tau_t}=\infty$. The same happens to $S$ and $Y$. In order to overcome this limitation, we apply the similar method as in  \citet[Chapter V]{revuz2013continuous} to modify the definition of $X$ and $Y$. We assume the probability space $(\Omega,\mathcal F, P)$ are rich enough to support Brownian motions  that are independent of known Brownian motions and $\mathcal F_{\infty}$. 

By \citet[Chapter V, Proposition 1.8]{revuz2013continuous},
$X_{\infty}\triangleq\lim_{t\to\infty}\frac{B_{t}+W_{t}}{2}$ exists on $\{T_{\infty}<\infty\}$; Similarly,$Y_{\infty}\triangleq\lim_{t\to\infty}\frac{B_{t}-W_{t}}{2}$ exists on $\{S_{\infty}<\infty\}$.
Suppose $\{\tilde X_t,\tilde Y_t\}_{t\ge0}$ is a 2-dimensional Brownian motion independent from $\mathcal F_{\infty}$. We modify the definition of $\{X_{t}\}_{t\ge0}$ and $\{Y_{t}\}_{t\ge0}$ as follows:
\begin{equation}
X(t,\tau_t)\triangleq\begin{cases}\frac{B_{\tau_t}+W_{\tau_t}}{2},\quad&\mbox{if }t<T_\infty\\ X_{\infty}+ \tilde X_{t-T_{\infty}},\quad&\mbox{if }t\geq T_\infty\end{cases},\ \ Y(t,\varsigma_t)\triangleq\begin{cases}\frac{B_{\varsigma_t}-W_{\varsigma_t}}{2},\quad&\mbox{if }t<S_\infty\\ Y_{\infty}+\tilde Y_{t-S_{\infty}},\quad&\mbox{if }t\ge S_\infty\end{cases}.\label{definition of X}\end{equation}
In the following, $X(t,\tau_t)$ and $Y(t,\varsigma_t)$ will be abbreviated as $X_t$ and $Y_t$ when there is no confusion. If $T_t<T_\infty$, note that $\{T_t<T_\infty\}=\{\tau_{T_t}<\infty\}$, hence we have  $X_{T_t}=\frac{B_{\tau_{T_t}}+W_{\tau_{T_t}}}2=\frac{B_t+W_t}2$ from the previous discussion; if $T_t=T_\infty$, since $T$ is quadratic variation of $\frac{B+W}2$, we have $X_{T_t}=X_{\infty}=\frac{B_t+W_t}2$ according to \eqref{definition of X} and \citet[Chapter IV, Proposition 1.13]{revuz2013continuous}. Because $T_t\le T_\infty$ for any $t\ge0$, we have $X_{T_t}=\frac{B_t+W_t}2$ for any $t\ge0$. With the similar proof, we have $Y_{S_t}=\frac{B_t-W_t}2$ for any $t\ge0$. As a consequence, after modifying the definition of $X$ and $Y$, the common decomposition \eqref{decomposition equation} holds for any $t\ge0$.
%

\begin{Remark}
 The choice of $(\tilde X,\tilde Y)$ can only influence the definition of $(X,Y)$, but has no influence on the decomposition of $B$ and $W$. To be more specific, for $\forall t\geq 0$, if $T_t<T_{\infty}$, by definition, $X_{T_t}$ does not depend on $\tilde{X}$; if $T_t=T_\infty$, then $X_{T_t}=X_{T_\infty}=X_{\infty}$, does not depend on $\tilde X$, either. The same is true for $Y$.
\end{Remark}

In the following, suppose $(X,Y,T)$ and $(X^\prime,Y^\prime,T^\prime)$ both satisfy \eqref{decomposition equation}, then
$$T_t=[X_T,X_T]_t=[\frac{B+W}2,\frac{B+W}2]_t=[X^\prime_{T^\prime},X^\prime_{T^\prime}]_t=T^\prime_t\quad a.s.,$$
which implies $T$ is unique in the sense of almost sure. Thus, by the definition of $\tau$, if $\tau_t<\infty$,
$$X_t=\frac{B_{\tau_t}+W_{\tau_t}}2=X^\prime_t,\quad a.s..$$
Note that $\{\tau_t<\infty\}=\{T_{\infty}>t\}$, which indicates $X$ is unique in the interval $[0,T_{\infty})$. Similarly, $Y$ is unique in $[0,S_{\infty})$. In the common decomposition \eqref{decomposition equation}, $X$ and $Y$ are only related with the values in the time interval $[0,T_{\infty})$ and $[0,S_{\infty})$ respectively, hence the common decomposition is unique.

For convenience, we introduce some notations here:
\begin{itemize}
\item $\mathcal F^X_t$: natural filtration of stochastic process $\{X_t\}_{t\ge0}$.
\item $A\perp B|C$: $A$ and $B$ are conditional independent given $C$.
\end{itemize}
It is remarkable that $\mathcal F^T_t=\sigma(T_u:u\le t)=\sigma(S_u:u\le t)=\mathcal F^S_t$.
\subsection{Main Theories of the Common Decomposition}\label{Main Result}

In the previous section, we introduced the so called common decomposition $(X,Y,T)$ of Brownian motions $B$ and $W$. 
In this part we give some basic properties of the decomposition. Proofs can be found in Section \ref{Some Proofs}.

Our first result illustrates the distribution of $X$, $Y$ and the path property of $T$.
\begin{Theorem}\label{decompose BM}
 Given Brownian motions $\{B_t\}_{t\ge0}$ and $\{W_t\}_{t\ge0}$ with respect to $\mathbb F$ and their common decomposition is denoted as $(X,Y,T)$, the following statements hold.
 \begin{description}
 \item[(\romannumeral1)] $\{X_t\}_{t\ge0}$ is a Brownian motion of the filtration $\{\mathcal F^\tau_t\}_{t\geq 0}$, $\{Y_t\}_{t\ge0}$ is a Brownian motion of the filtration $\{\mathcal F^\varsigma_t\}_{t\geq 0}$, $X$ and $Y$ are independent;
 \item[(\romannumeral2)] $\{[B,W]_t\}_{t\ge0}$ is derivable with respect to $t$, and
 $$T_t=\frac{t+\int_0^t\rho_udu}2,\ S_t=\frac{t-\int_0^t\rho_udu}2,$$
 where
 \begin{equation}\rho_t\triangleq\frac{d[B,W]_t}{dt}.\label{def rho}\end{equation}
 \end{description}
\end{Theorem}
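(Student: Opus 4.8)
The plan is to dispatch part (\romannumeral2) quickly from the estimates already in hand, and to obtain part (\romannumeral1) from the one-dimensional Dambis--Dubins--Schwarz theorem together with its multidimensional (Knight) version.

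\emph{Part (\romannumeral2).} Since $B_0=W_0=0$ we have $[B,W]_0=0$ and $[B,W]_t=T_t-S_t=2T_t-t$. The chain of inequalities \eqref{[B,W]_t-[B,W]_s} shows that $t\mapsto[B,W]_t$ is $1$-Lipschitz, hence absolutely continuous, hence differentiable for Lebesgue-a.e.\ $t$ with $[B,W]_t=\int_0^t\rho_u\,du$, where $\rho_t\triangleq\frac{d[B,W]_t}{dt}$ (and $|\rho_t|\le 1$ a.e.). Substituting this into \eqref{definition of T} gives $T_t=\frac{t+\int_0^t\rho_u\,du}{2}$ and $S_t=t-T_t=\frac{t-\int_0^t\rho_u\,du}{2}$. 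This step is essentially the discussion preceding the theorem plus the Lebesgue differentiation theorem, so I expect no difficulty.

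\emph{Part (\romannumeral1).} Write $U_t\triangleq\frac{B_t+W_t}{2}$ and $V_t\triangleq\frac{B_t-W_t}{2}$. Being linear combinations of the $\mathbb F$-Brownian motions $B,W$, both $U$ and $V$ are continuous $\mathbb F$-local martingales vanishing at $0$, with (as observed in Section~\ref{model setup}) $[U,U]_t=T_t$ and $[V,V]_t=S_t$; moreover
\[
  [U,V]_t=\frac14\big([B,B]_t-[B,W]_t+[W,B]_t-[W,W]_t\big)=\frac14\big([B,B]_t-[W,W]_t\big)=0,
\]
so $U$ and $V$ are \emph{orthogonal} continuous local martingales. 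By \eqref{tau,varsigma} and \eqref{definition of X}, $\tau$ (resp.\ $\varsigma$) is the right-continuous inverse of $[U,U]=T$ (resp.\ $[V,V]=S$), and $X_t=U_{\tau_t}$ (resp.\ $Y_t=V_{\varsigma_t}$) is the corresponding time-changed martingale, completed on $\{T_\infty<\infty\}$ (resp.\ $\{S_\infty<\infty\}$) by the independent Brownian motion $\tilde X$ (resp.\ $\tilde Y$). Applying the Dambis--Dubins--Schwarz theorem together with its extension to the case of a bounded bracket (\citet[Chapter V]{revuz2013continuous}) yields that $X$ is a Brownian motion of $\{\mathcal F^\tau_t\}_{t\ge0}$ and $Y$ a Brownian motion of $\{\mathcal F^\varsigma_t\}_{t\ge0}$; applying Knight's multidimensional time-change theorem (\citet[Chapter V, Theorem 1.9]{revuz2013continuous}) to the orthogonal pair $(U,V)$ yields in addition that $X$ and $Y$ are \emph{independent}. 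The enlargement prescribed in \eqref{definition of X}, and the standing assumption that $(\tilde X,\tilde Y)$ is a planar Brownian motion independent of $\mathcal F_\infty$, are precisely the hypotheses under which those statements remain valid on the events where $T$ or $S$ stays bounded.

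I expect the independence of $X$ and $Y$ to be the only delicate point: orthogonality of $U$ and $V$ does \emph{not} by itself make them independent, so the time-change mechanism is genuinely needed. If Knight's theorem were unavailable I would reprove it in this setting by the standard route --- perform the two time changes jointly (they are a.s.\ continuous, and the increments $dT_t$, $dS_t$ partition $dt$ since $T_t+S_t=t$), check via L\'evy's characterization that the resulting $\mathbb R^2$-valued process has bracket matrix $t\,I_2$ and hence is a standard planar Brownian motion, and read off independence of its coordinates --- while also verifying, as in \citet[Chapter V]{revuz2013continuous}, that completing by $(\tilde X,\tilde Y)$ preserves adaptedness to the time-changed filtrations.
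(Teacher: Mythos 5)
Your proposal is correct and follows essentially the same route as the paper: part (\romannumeral2) via the Lipschitz bound \eqref{[B,W]_t-[B,W]_s} giving absolute continuity of $[B,W]$, and part (\romannumeral1) via the orthogonality $[\frac{B+W}2,\frac{B-W}2]_t=0$ combined with the time-change theorem for orthogonal continuous local martingales (\citet[Chapter V]{revuz2013continuous}), which is exactly the reference the paper invokes. Your additional remarks on the completion by $(\tilde X,\tilde Y)$ and on why orthogonality alone would not suffice are accurate but not needed beyond what the cited theorem already covers.
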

In \eqref{def rho}, $\rho$ is called the \emph{local correlation process} of $B$ and $W$. Further discussion of local correlation and common decomposition can be found in Section \ref{Comparison with Local Correlation Model}.

From Theorem \ref{decompose BM}, the common decomposition represents $B$ (resp. $W$) as the sum (resp. difference) of two time-changed Brownian motions. The dependency structure of $B$ and $W$ is embodied in $T$ as well as in the dependencies between $T$ and the two new-defined Brownian motions $X$ and $Y$. Hence for clarity and convenience, the independency of $X$, $Y$ and $T$ is worth studying. In the following theorem, a sufficient and necessary condition is given for mutual independency of them.

\begin{Theorem}\label{condition c1}
Under the conditions and notations as in Theorem \ref{decompose BM}, the common decomposition triplet $X$, $Y$ and $T$ are mutually independent if and only if:
\begin{description}
\item[(C1)]$\mathcal F^B_{\infty}\perp\mathcal F^T_{\infty}|\mathcal F_t^{B,W}$ and $\mathcal F^W_{\infty}\perp\mathcal F^T_{\infty}|\mathcal F_t^{B,W}$.
\end{description}
\end{Theorem}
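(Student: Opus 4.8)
The plan is to reduce, as a first step, the mutual independence of the triplet $(X,Y,T)$ to the single independence $\bigl(\mathcal F^X_\infty\vee\mathcal F^Y_\infty\bigr)\perp\mathcal F^T_\infty$: since $X\perp Y$ is already contained in Theorem~\ref{decompose BM}, the joint law of $(X,Y,T)$ factorises exactly when $(X,Y)$ is independent of $T$. I would also note once and for all that the auxiliary Brownian motions $\tilde X,\tilde Y$ used to extend $X,Y$ on $\{T_\infty<\infty\}$ are independent of $\mathcal F_\infty\supseteq\mathcal F^T_\infty$, so it is enough to deal with the ``genuine'' parts, which are recovered from $(B,W)$ through the $\mathcal F^T_\infty$-measurable time changes by $X_r=\tfrac{B_{\tau_r}+W_{\tau_r}}2$ and $Y_r=\tfrac{B_{\varsigma_r}-W_{\varsigma_r}}2$; equivalently, $X$ (resp.\ $Y$) is the Dambis--Dubins--Schwarz Brownian motion of the continuous local martingale $U:=\tfrac{B+W}2$ (resp.\ $V:=\tfrac{B-W}2$), whose quadratic variation is $T$ (resp.\ $S$).

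For the implication ``$(X,Y)\perp T\Rightarrow$ (C1)'' I would fix $t\ge 0$ and pass to the initially enlarged filtration $\mathcal G_r:=\mathcal F^{B,W}_r\vee\mathcal F^T_\infty$. From $B_r=X_{T_r}+Y_{S_r}$ together with $X\perp Y$ and $(X,Y)\perp\mathcal F^T_\infty$: conditioning on $\mathcal F^T_\infty$ makes $T$ and $S$ deterministic and leaves $X,Y$ fresh independent Brownian motions, so $B$ remains a $\mathcal G$-martingale and $[B]_s=s$ is deterministic. Hence $(B_s-B_t)_{s\ge t}$ is a continuous $\mathcal G$-local martingale with deterministic quadratic variation $s-t$, so by Lévy's characterisation it is, conditionally on $\mathcal G_t$, a standard Brownian motion whose law does not depend on $\mathcal G_t$; therefore $\sigma(B_s-B_t:s\ge t)$ is independent of $\mathcal F^{B,W}_t\vee\mathcal F^T_\infty$, and since $\mathcal F^B_\infty\subseteq\mathcal F^{B,W}_t\vee\sigma(B_s-B_t:s\ge t)$ this yields $\mathcal F^B_\infty\perp\mathcal F^T_\infty\mid\mathcal F^{B,W}_t$. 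The identical computation with $W_r=X_{T_r}-Y_{S_r}$ gives the $W$-part of (C1).

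For the converse ``(C1) $\Rightarrow(X,Y)\perp T$'' the idea is to pin down, for $P$-a.e.\ realisation $T^0$ of $T$, the regular conditional law of $(B,W)$ given $\{\mathcal F^T_\infty=T^0\}$. Both $B$ and $W$ are $\mathcal F^{B,W}$-martingales, and (C1), applied at every $r$, gives $E[B_s-B_r\mid\mathcal F^{B,W}_r\vee\mathcal F^T_\infty]=E[B_s-B_r\mid\mathcal F^{B,W}_r]=0$ for all $r\le s$ (and likewise for $W$), so under $P(\,\cdot\mid\mathcal F^T_\infty=T^0)$ the pair $(B,W)$ is still an $\mathcal F^{B,W}$-martingale, now with the \emph{deterministic} covariations $d[B]_s=d[W]_s=ds$ and $d[B,W]_s=\rho^0_s\,ds$, where $\rho^0$ is the local correlation attached to $T^0$ via $T^0_s=\tfrac s2+\tfrac12\int_0^s\rho^0_u\,du$. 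A continuous vector local martingale with deterministic quadratic covariation is a Gaussian process with independent increments --- apply the exponential-martingale argument to $\langle\lambda,(B,W)\rangle$, whose quadratic variation is deterministic, for each $\lambda$ --- so the conditional law of $(B,W)$ is the centred Gaussian law determined by $\operatorname{Cov}(B_s,B_t)=\operatorname{Cov}(W_s,W_t)=s\wedge t$ and $\operatorname{Cov}(B_s,W_t)=T^0_{s\wedge t}-S^0_{s\wedge t}$, and this is exactly the law of $\bigl(X^0_{T^0}+Y^0_{S^0},\,X^0_{T^0}-Y^0_{S^0}\bigr)$ for a pair of independent Brownian motions $X^0,Y^0$. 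Feeding this back through the reconstruction, conditionally on $\{\mathcal F^T_\infty=T^0\}$ the DDS Brownian motions of $U$ and $V$ --- that is, $X$ and $Y$ --- have the law of the independent standard Brownian motions $X^0,Y^0$, a law that does not depend on $T^0$; hence $(X,Y)\perp\mathcal F^T_\infty$.

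I expect the converse to be the hard part, and inside it the step that upgrades (C1) --- a statement about $B$ and about $W$ \emph{separately} --- into a description of the joint conditional law of $(B,W)$ given $\mathcal F^T_\infty$. The mechanism is that (C1) is precisely what keeps $B$ and $W$ martingales after the initial enlargement by $\mathcal F^T_\infty$, and once that holds the (now deterministic) covariation matrix alone determines the joint law by the Gaussian-martingale fact; one should neither expect nor need the stronger statement $\mathcal F^{B,W}_\infty\perp\mathcal F^T_\infty\mid\mathcal F^{B,W}_t$, which is plainly false since $\mathcal F^T_\infty\subseteq\mathcal F^{B,W}_\infty$. Routine care will also be needed for the event $\{T_\infty<\infty\}$ and the auxiliary motions $\tilde X,\tilde Y$, and for the instants where $|\rho^0_s|=1$, at which the conditional Gaussian law degenerates but the argument goes through unchanged.
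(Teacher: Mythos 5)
Your proposal is correct and follows essentially the same route as the paper's proof: both directions rest on the observation that (C1) is exactly what makes $B$ and $W$ (hence $X_T$ and $Y_S$) martingales in the initially enlarged filtration $\mathcal F^{B,W}_t\vee\mathcal F^T_{\infty}$, after which exponential-martingale arguments with the ($\mathcal F^T_{\infty}$-conditionally deterministic) quadratic variations identify the conditional law, and L\'evy's characterisation handles the other implication. The only difference is packaging --- you phrase the converse via regular conditional distributions and the Gaussian-martingale fact, whereas the paper routes it through its conditional Laplace-functional condition (E) and Lemma \ref{lemma of XYT independent} --- but the underlying computation is the same.
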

As an example to understand the condition, when $B$ and $W$ has a constant correlation say, $\rho$, (C1) is satisfied since $T_t=\frac{1+\rho}2t$ and $\mathcal F^T_{\infty}$ is a trivial $\sigma$-algebra. More general cases will be discussed later.

The above two theorems give a more visual interpretion of the common decomposition. During the two Brownian motions' movings, sometimes they move as if with positive correlation and sometimes quite the contrary. These ``common" or ``opposite" moving times are picked out to form new ``clocks" $T_t$ or $S_t$. And their revolutions are decomposed thereupon according to the new clocks. By Theorem \ref{decompose BM}, under the new clocks, they keep their Brownian-motion features and these features are independent under the two clocks. Thus dependency structures and Brownian features of the original correlated Brownian motions are separated. By Theorem \ref{condition c1}, if they satisfy the condition (C1), their dependency information is only contained in $T$, the decomposition is quite complete and clear. In this case, we can focus on the process $T$ in common decomposition if we want to study the dependency structure of two correlated Brownian motions.

The following proposition gives an equivalent condition of (C1) from another aspect.

\begin{Proposition}\label{girsanov}
Suppose the assumptions in Theorem \ref{decompose BM} hold. Then the condition (C1) is equivalent with the following statement.
\begin{description}
\item[(C2)] Given two processes $\{\phi_t^1\}_{t\ge0}$ and $\{\phi_t^2\}_{t\ge0}$, which are progressively measurable with $\{\mathcal F^T_t\}_{t\geq 0}$ and satisfy
    \begin{equation}E\left[\exp\left(\frac12\int_0^t(\phi_{u}^1)^2dT_u+\frac12\int_0^t(\phi_{u}^2)^2dS_u\right)\right]<\infty,\forall t.\label{phi in condtion c2}\end{equation}
Let
$$D_t^{\phi}\triangleq\exp\left(\int_0^t\phi_u^1dX_{T_u}+\int_0^t
\phi_u^2dY_{S_u}-\frac12\int_0^t(\phi_u^1)^2dT_u-\frac12\int_0^t
(\phi_u^2)^2dS_u\right),\forall t\ge0,$$
then $D^{\phi}$ is a martingale and $\frac{dQ}{dP}|_{\mathcal F_t}=D_t^{\phi}$ defines a probability measure such that
\begin{equation}(X^{\phi}_T,Y^{\phi}_S)_{Q}\overset{d}{=}
(X_{T},Y_{S})_P,\label{identical distribution of X_T}\end{equation}
where $X^{\phi}_{T_t}=X_{T_t}-\int_0^t\phi_u^1dT_u,Y^{\phi}_{S_t}=Y_{S_t}-\int_0^t\phi_u^2dS_u.$
\end{description}
\end{Proposition}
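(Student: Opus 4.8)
The plan is to establish the two implications (C1)$\Rightarrow$(C2) and (C2)$\Rightarrow$(C1) separately, using Theorem~\ref{condition c1} to restate (C1) as: the triplet $X$, $Y$, $T$ is mutually independent. Throughout I would work with the two $\mathbb F$-continuous local martingales $X_{T_\cdot}=\frac{B_\cdot+W_\cdot}{2}$ and $Y_{S_\cdot}=\frac{B_\cdot-W_\cdot}{2}$, which satisfy $[X_T,X_T]_t=T_t$, $[Y_S,Y_S]_t=S_t$ and, crucially, $[X_T,Y_S]_t=\tfrac14([B,B]-[W,W])_t=0$. Since $T_u=(u+[B,W]_u)/2$ is $\mathcal F_u$-measurable we have $\mathcal F^T_t\subseteq\mathcal F_t$, so the $\mathcal F^T$-progressive integrands $\phi^1,\phi^2$ are $\mathbb F$-progressive and $M_t\triangleq\int_0^t\phi^1_u\,dX_{T_u}+\int_0^t\phi^2_u\,dY_{S_u}$ is a well-defined $\mathbb F$-continuous local martingale with $[M]_t=\int_0^t(\phi^1_u)^2\,dT_u+\int_0^t(\phi^2_u)^2\,dS_u$ (the cross term vanishes by orthogonality), and $D^\phi=\mathcal E(M)$.

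For (C1)$\Rightarrow$(C2): condition \eqref{phi in condtion c2} is precisely Novikov's criterion $E[\exp(\tfrac12[M]_t)]<\infty$, so on each finite horizon $D^\phi$ is a uniformly integrable martingale; this yields the martingale claim and lets $Q$ be defined. Conditioning on $\mathcal F^T_\infty=\mathcal F^S_\infty$ freezes $\phi^1,\phi^2,T,S$, and by mutual independence the conditional law of $(X_{T_\cdot},Y_{S_\cdot})$ is that of two independent Brownian motions read along the now-deterministic clocks $T,S$; hence $E[D^\phi_t\mid\mathcal F^T_\infty]=1$, so $Q=P$ on $\mathcal F^T$. By Girsanov's theorem $[X_T,M]_t=\int_0^t\phi^1_u\,dT_u$ and $[Y_S,M]_t=\int_0^t\phi^2_u\,dS_u$, so under $Q$ the processes $X^\phi_{T_\cdot}$ and $Y^\phi_{S_\cdot}$ are continuous local martingales with quadratic variations $T,S$ and zero covariation. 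Conditionally on $\mathcal F^T_\infty$ these clocks are deterministic, so Knight's theorem (the multidimensional Dambis--Dubins--Schwarz representation) exhibits $X^\phi_{T_\cdot},Y^\phi_{S_\cdot}$ under $Q$ as independent Brownian motions time-changed by $T,S$ --- exactly the conditional law of $(X_T,Y_S)$ under $P$. Integrating over $\mathcal F^T_\infty$, on which $Q=P$, gives \eqref{identical distribution of X_T}.

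For (C2)$\Rightarrow$(C1): by Theorem~\ref{condition c1} it suffices to show $X$, $Y$, $T$ are mutually independent, and since $X\perp Y$ is already known from Theorem~\ref{decompose BM}, the target is that the conditional law of $(X_{T_\cdot},Y_{S_\cdot})$ given $\mathcal F^T_\infty$ is, a.s., that of independent Brownian motions along the deterministic clocks $T,S$; once this holds, $(X_T,Y_S)\perp\mathcal F^T_\infty$ with product structure, and inverting the time changes in \eqref{tau,varsigma}--\eqref{definition of X} (the auxiliary pair $(\tilde X,\tilde Y)$ being independent of everything) promotes this to mutual independence of $X$, $Y$, $T$. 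To identify the conditional law I would feed into \eqref{identical distribution of X_T} integrands $\phi^1,\phi^2$ that are piecewise constant along a partition $0=s_0<\cdots<s_n=t$ with levels that are bounded and $\mathcal F^T_{s_{j-1}}$-measurable, and test the resulting distributional identity against exponential functionals of the increments $\Delta_j X_T,\Delta_j Y_S$. After conditioning on $\mathcal F^T_\infty$ this collapses to a relation between the conditional joint Laplace/Fourier transform of $(\Delta_j X_T,\Delta_j Y_S)_j$ and its unconditional counterpart; the finiteness in \eqref{phi in condtion c2} supplies the exponential moments needed to manipulate these transforms and to let the levels exhaust all bounded $\mathcal F^T$-measurable random variables. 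Solving the identities inductively in $j$ forces the conditional transform to be that of independent centered Gaussians with variances $\Delta_jT,\Delta_jS$, and refining the partition and sending $t\to\infty$ gives the claim.

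The main obstacle is the (C2)$\Rightarrow$(C1) direction. Two points need care: (i) the analytic manipulation of the conditional transforms and the approximation step upgrading ``levels in a convenient class'' to ``all bounded $\mathcal F^T$-measurable levels'', which relies on \eqref{phi in condtion c2} for integrability and on a monotone-class argument to reach the full conditional-independence statement; and (ii) the time-change bookkeeping, since $X$ and $Y$ live on the random intervals $[0,T_\infty)$ and $[0,S_\infty)$ and only the compositions $X_{T_\cdot},Y_{S_\cdot}$ --- built with the extensions in \eqref{definition of X} --- are globally defined, so one must keep the conditioning $\sigma$-algebra $\mathcal F^T_\infty=\mathcal F^S_\infty$ and the filtrations $\{\mathcal F^\tau_t\}$, $\{\mathcal F^\varsigma_t\}$ mutually consistent. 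By comparison, (C1)$\Rightarrow$(C2) is a fairly routine conditional Girsanov-plus-Knight argument once mutual independence is in hand.
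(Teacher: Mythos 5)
Your overall architecture matches the paper's: reduce (C1) to mutual independence of $X,Y,T$ via Theorem~\ref{condition c1}, get the martingale property of $D^\phi$ from Novikov's criterion (the paper's Lemma \ref{martingale lemma}), prove the forward direction by a Girsanov argument carried out conditionally on $\mathcal F^T_\infty$ (the paper packages this as Lemma \ref{generalized girsanov} and applies it twice, once for the $Y$-part given $\mathcal F^T_\infty\vee\mathcal F^X_\infty$ and once for the $X$-part), and prove the converse by showing that \eqref{identical distribution of X_T} forces a conditional exponential-moment identity (the paper's condition (E)) from which independence follows by a characteristic-function computation (Lemma \ref{lemma of XYT independent}). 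However, your (C2)$\Rightarrow$(C1) plan has a genuine gap. The identity \eqref{identical distribution of X_T} compares \emph{unconditional} laws under two different measures $Q$ and $P$; you cannot "condition on $\mathcal F^T_\infty$" on both sides and obtain a relation between conditional transforms without first knowing how $Q$ and $P$ compare on $\mathcal F^T_\infty$. The missing idea is that $T$ and $S$ are pathwise functionals of $X^\phi_T$ and $Y^\phi_S$ (their quadratic variations), so \eqref{identical distribution of X_T} immediately yields $(T,S)_Q\overset{d}{=}(T,S)_P$, hence $E^P[D^\phi_{t_n}\mathbf 1_A]=Q(A)=P(A)$ for cylinder sets $A\in\mathcal F^T_{\infty}$, and a $\pi$--$\lambda$ argument plus the martingale property gives $E^P[D^\phi_\infty\mid\mathcal F^T_\infty]=1$. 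That single identity \emph{is} the conditional Laplace-transform statement you are trying to reach by your inductive partition scheme; without the quadratic-variation observation your scheme has no way to get started.

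A second, smaller gap sits in your forward direction: after Girsanov you invoke Knight's theorem "conditionally on $\mathcal F^T_\infty$" to conclude that the conditional law of $(X^\phi_T,Y^\phi_S)$ under $Q$ given $\mathcal F^T_\infty$ is that of independent Brownian motions run along the frozen clocks. Knight's theorem gives independence of the two DDS Brownian motions under $Q$ unconditionally; it does not by itself give their independence from $\mathcal F^T_\infty$, and conditioning can destroy the (local) martingale property needed to apply it. This is exactly the point the paper's Lemma \ref{generalized girsanov} is designed to handle: it performs the change of measure and the optional-stopping argument \emph{inside} the conditional expectation given $\mathcal F^T_\infty$, using the independence of $X$ and $T$ supplied by (C1). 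Your step is repairable along those lines, but as written it asserts the conclusion rather than proving it.
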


This proposition link the independency of the decomposition triplet with conditions similar to Girsanov theorem. Undoubtedly it may attract our attention to consider its application in financial modelling.
\begin{Example}\label{example of girsanove}
In financial models, the Girsanov transform is typically used to change the drift parts of diffusions that modelling the prices. Consider two drifted Brownian motions,
$$\int_0^t\theta^1_udu+B_t,\quad\int_0^t\theta^2_udu+W_t,$$
where $\theta^i,\ i=1,2$ are bounded, progressively measurable with $\{\mathcal F^T_{t}\}_{t\ge0}$. According to Theorem \ref{decompose BM}, $B$ and $W$ can be decomposed into $(X,Y,T)$. And consequently the two drifted Brownian motions can be represented as
$$\int_0^t\theta^1_udu+X_{T_t}+Y_{S_t},\quad\int_0^t\theta^2_udu+X_{T_t}-Y_{S_t}.$$
Let $\lambda\ \mbox{and}\ \mu$ denote the densities of $T$ and $S$,
$$\lambda_t\triangleq\frac{dT_t}{dt},\quad\mu_t\triangleq\frac{dS_t}{dt},$$
and suppose that $\inf\{\lambda_t(\omega),\mu_t(\omega)|t\ge0,\omega\in\Omega\}>0$. Then $\phi=(\phi^1,\phi^2)$ satisfy \eqref{phi in condtion c2}, where
$$\phi_t^1=\frac{\theta_t^1+\theta_t^2}{2\lambda_t},\ \phi_t^2=\frac{\theta_t^1-\theta_t^2}{2\mu_t}.$$

If $(B,W)$ satisfies the condition (C1), then from Proposition \ref{girsanov}, the two drifted Brownian motions can be transformed to
\begin{equation}\label{definition of tilde B & W}\int_0^t\theta^1_udu+B_t=X^{\phi}_{T_t}+Y^{\phi}_{S_t}:=B^\phi_t,\ \int_0^t\theta^2_udu+W_t=X^{\phi}_{T_t}-Y^{\phi}_{S_t}:= W^\phi_t.\end{equation}
Under the probability $Q$ as defined in Proposition \ref{girsanov}, it is notable that by \eqref{identical distribution of X_T},
\begin{equation}\label{identical distribution}(B,W)_P\overset{d}{=}(B^\phi,W^\phi)_Q,\end{equation}
thus the drift parts vanish after change of probability measure.
Consider the common decomposition of $(B^\phi, W^\phi)$, denoted by $(X^\phi,Y^\phi,T^\phi)$. From \eqref{definition of tilde B & W}, we have
\begin{equation}T_t=T^\phi_t,\quad\forall t\ge0.\label{identical of T}\end{equation}
Moreover, from \eqref{identical distribution of X_T} we have
\begin{equation}\left(\{T_t\}_{t\ge0}\right)_{P}\overset{d}{=}\left(\{T_t\}_{t\ge0}\right)_{Q}.\label{identical distribution of T}\end{equation}
\end{Example}
\begin{Remark}\label{estimate rho from real probability measure}
Equation \eqref{identical of T} and \eqref{identical distribution of T} reveal the invariance property of $T$ under change of measure. From the application point of view, this implies that in financial modelling after change of numeraire, the common decomposition method is still valid. And from empirical view, we can estimate parameters from real probability measure and apply to risk neutral measure directly. For example, \cite{ballotta2016multivariate} bring correlation matrix estimated from observed asset prices (real probability measure) into option pricing model (risk neutral probability measure) directly, and we show the theoretical foundation of such operation. This is quite convenient for derivatives pricing which are lack of public data.

This also shows that we can simplify two correlated Brownian motions with drifts by changing of measure, and keep the dependency structure of original processes.
\end{Remark}

\subsection{Common Decomposition and Local Correlation Model}\label{Comparison with Local Correlation Model}


In this section, we take a new look at the common decomposition via the local correlation process. We consider the difference and connection between the common decomposition method and the local correlation model. As before, the proofs can be found in Section \ref{Some Proofs}.

\subsubsection{Relationship Between Common Decomposition and Local Correlation Model}\label{Different Decomposing Methods for correlated Brownian Motions}

Let us first recall a well used decomposition method representing correlated Brownian motions as linear combinations of independent Brownian motions based on $\rho$. Suppose $\tilde Z$ is a Brownian motion independent of $\mathcal F_{\infty}$ and $(\tilde X,\tilde Y)$, then we define

\begin{equation}\label{definition of z}
Z_t\triangleq\int_0^t\frac{1_{\{\rho_u\neq\pm1\}}}{\sqrt{1-\rho_u^2}}(dW_u-\rho_udB_u)+\int_0^t1_{\{\rho_u=\pm1\}}d\tilde Z_u.\footnote{
The second part of $Z$ is obviously well defined. Consider the first part $\int_0^t\frac{1_{\{\rho_u\neq\pm1\}}}{\sqrt{1-\rho_u^2}}(dW_u-\rho_udB_u)$,
let $M_t\triangleq W_t-\int_0^t\rho_udB_u,$ $[M]_t=t+\int_0^t\rho_u^2du-2\int_0^t\rho_ud[B,W]_u
=\int_0^t(1-\rho_u^2)du.$
For any $0<t<\infty$,
$\int_0^t\frac{1_{\{\rho_u\neq\pm1\}}}{1-\rho_u^2}d[M]_u\le\int_0^tdu=t<\infty,$
which implies the first part is well defined, too.}\end{equation}
Particularly, if $\forall t$, $\rho_t\neq\pm1,$ a.s., then
$$Z_t=\int_0^t\frac1{\sqrt{1-\rho_u^2}}dW_u-
\int_0^t\frac{\rho_u}{\sqrt{1-\rho_u^2}}dB_u.$$
It is not difficult to verify that $[B,Z]_t=0,\forall t\ge0$, hence $\{Z_t\}_{t\ge0}$ is a Brownian motion independent of $\{B_t\}_{t\ge0}$, and that the local correlation of $Z$ and $W$ is $\sqrt{1-\rho_t^2}$.

By definition of $Z_t$, we have the local-correlation based decomposition of $(B,W)$,
\begin{equation}\label{local-corr}
 (B_t,W_t)=(B_t,\int_0^t\rho_sdB_s+\int_0^t\sqrt{1-\rho_s^2}dZ_s).
\end{equation}
If we start from the right side of the equation, i.e., starting from independent Brownian motions $B,\, Z$ and local correlation process $\rho$, we have got a commonly used model for constructing correlated Brownian motions $(B,W)$.

As a comparison, by the common decomposition in the current paper, $(B,W)$ has the representation
$$(B_t,W_t)=(X_{T_t}+Y_{S_t},X_{T_t}-Y_{S_t}).$$
Similarly, if we start from the right side, i.e., from independent Brownian motions $X,\,Y$ and time-change process $T$, and make the construction, then $(B,W)$ are correlated Brownian motions under some conditions. Following the procedure, we can get a new construction method of $(B, W)$. We will make further discussions of this new construction method of correlated Brownian motions in Section \ref{Construction of Two Correlated Brownian Motions}.

\begin{Remark}
The different ideas behind the two methods look clear from the above comparison: the local-correlation method characterize dependency of the Brownian motions from a spatial perspective while the common-decomposition method from a temporal perspective. And $\rho_t$ characterized the correlation between $B$ and $W$ at time $t$, but $T_t$ characterized the correlation in the time period $[0,t]$. Namely, $\rho_t$ represent the correlation locally, but $T_t$ characterize the correlation in the whole time period $[0,t]$.
\end{Remark}

The next proposition gives a connection between local-correlation based decomposition and common decomposition. The two method would share the same equivalent conditions when considering completely-independent decomposition.
\begin{Proposition}\label{independent local correlation imply independent time change}
Under the conditions stated in Theorem \ref{decompose BM}, $X$, $Y$ and $T$ are mutually independent if and only if 
the following condition holds:
\begin{description}
\item[(C3)] $\rho$, $B$ and $Z$ in local-correlation model (\ref{local-corr}) are mutually independent.
\end{description}
\end{Proposition}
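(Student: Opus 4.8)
The plan is to prove the two implications separately, in each case by conditioning on $\mathcal F^T_\infty$ (equivalently $\mathcal F^\rho_\infty$) and then invoking L\'evy's characterization of Brownian motion. I would start from two bookkeeping facts. First, by Theorem \ref{decompose BM}(ii) one has $T_t=\tfrac12\big(t+\int_0^t\rho_u\,du\big)$, so $T$ and $\rho$ are measurable functionals of one another and $\mathcal F^T_\infty=\mathcal F^\rho_\infty$ up to $P$-null sets. Second, since $X_{T_t}=\tfrac{B_t+W_t}2$, $Y_{S_t}=\tfrac{B_t-W_t}2$ for all $t$ and $W_t=\int_0^t\rho_u\,dB_u+\int_0^t\sqrt{1-\rho_u^2}\,dZ_u$ by \eqref{local-corr}, the two decompositions are linked by
\[X_{T_t}=\int_0^t\tfrac{1+\rho_u}2\,dB_u+\int_0^t\tfrac{\sqrt{1-\rho_u^2}}2\,dZ_u,\qquad Y_{S_t}=\int_0^t\tfrac{1-\rho_u}2\,dB_u-\int_0^t\tfrac{\sqrt{1-\rho_u^2}}2\,dZ_u,\]
and conversely $B_t=X_{T_t}+Y_{S_t}$ together with
\[Z_t=\int_0^t 1_{\{\rho_u\neq\pm1\}}\Big(\sqrt{\tfrac{1-\rho_u}{1+\rho_u}}\,dX_{T_u}-\sqrt{\tfrac{1+\rho_u}{1-\rho_u}}\,dY_{S_u}\Big)+\int_0^t 1_{\{\rho_u=\pm1\}}\,d\tilde Z_u.\]
Recall also that $B\perp Z$ holds unconditionally (from $[B,Z]\equiv0$, $[B]_t=[Z]_t=t$ and L\'evy's theorem) and $X\perp Y$ holds unconditionally by Theorem \ref{decompose BM}(i).

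For the implication (C3) $\Rightarrow$ ``$X,Y,T$ mutually independent'', I would condition on $\mathcal F^\rho_\infty=\mathcal F^T_\infty$. Under (C3) the pair $(B,Z)$ is still a two-dimensional Brownian motion (now for $\sigma(\rho)\vee\mathcal F^{B,Z}_t$), while in the first display the integrands $\tfrac{1\pm\rho_u}2$, $\tfrac{\sqrt{1-\rho_u^2}}2$ become deterministic; hence $(X_{T_\cdot},Y_{S_\cdot})$ is conditionally a Gaussian pair. A short computation gives $[X_T]_t=T_t$, $[Y_S]_t=S_t$ and $[X_T,Y_S]_t=\int_0^t\big(\tfrac{1-\rho_u^2}4-\tfrac{1-\rho_u^2}4\big)du=0$, so the conditional covariance of $X_{T_s}$ and $Y_{S_t}$ vanishes and therefore $X_{T_\cdot}\perp Y_{S_\cdot}$ conditionally on $\mathcal F^T_\infty$. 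Recovering $X$ on $[0,T_\infty)$ from $\{X_{T_t}\}_t$ (via Dambis--Dubins--Schwarz applied conditionally) and concatenating with the independent $\tilde X$ past $T_\infty$, and symmetrically for $Y$, one obtains that conditionally on $\mathcal F^T_\infty$ the pair $(X,Y)$ consists of two independent standard Brownian motions whose conditional law does not depend on $T$. This gives $(X,Y)\perp\mathcal F^T_\infty$, which together with $X\perp Y$ yields the mutual independence of $X$, $Y$ and $T$.

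For the converse, assume $X,Y,T$ are mutually independent and again condition on $\mathcal F^T_\infty$. Then $(X,Y)$ remains a pair of independent Brownian motions and $T$, hence $\tau,\varsigma,S,\rho$, is frozen. Using $B_t=X_{T_t}+Y_{S_t}$, the displayed expression for $Z$, and $dT_u=\tfrac{1+\rho_u}2du$, $dS_u=\tfrac{1-\rho_u}2du$, one checks that conditionally $(B,Z)$ is a continuous local martingale with $[B]_t=[Z]_t=t$ and $[B,Z]_t=0$; by L\'evy's theorem it is a two-dimensional Brownian motion whose conditional law is Wiener measure, independently of the value of $T$. Hence $(B,Z)\perp\mathcal F^T_\infty\supseteq\sigma(\rho)$, and combined with the unconditional $B\perp Z$ this shows $\rho$, $B$ and $Z$ are mutually independent, i.e.\ (C3). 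As a by-product, comparing with Theorem \ref{condition c1} one also gets (C1) $\Leftrightarrow$ (C3).

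The step I expect to be the main obstacle is making these conditioning arguments rigorous: one must pass to regular conditional distributions (or a suitable product space) to justify the ``conditional L\'evy characterization'' and the claim that the conditional laws of $(X,Y)$ and of $(B,Z)$ given $\mathcal F^T_\infty$ do not depend on the conditioning value, and one must treat carefully the exceptional sets $\{T_\infty<\infty\}$, $\{S_\infty<\infty\}$, $\{\rho=\pm1\}$ where the auxiliary Brownian motions $\tilde X,\tilde Y,\tilde Z$ enter — in particular arranging that $(\tilde X,\tilde Y,\tilde Z)$ is jointly independent of $\mathcal F_\infty$, so that conditioning on $\mathcal F^T_\infty$ leaves it untouched. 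The remaining quadratic-variation identities are the routine computations indicated above.
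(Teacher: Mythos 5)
Your argument is correct in substance, and it takes a genuinely different route from the paper. The paper does not attack the equivalence with the independence of $(X,Y,T)$ directly: it proves (C3)$\Leftrightarrow$(C1) and then invokes Theorem \ref{condition c1}. Both directions there are phrased entirely in terms of martingale properties with respect to initially enlarged filtrations such as $\mathcal F^{B,W}_t\vee\mathcal F^T_\infty\vee\mathcal F^{\tilde Z}_t$, followed by L\'evy's characterization, so that regular conditional distributions never appear explicitly; the transfer to independence of $X$, $Y$, $T$ is outsourced to the exponential-martingale condition (E) and Lemma \ref{lemma of XYT independent}. You instead write out the explicit stochastic-integral formulas converting between the triplets $(B,Z,\rho)$ and $(X_T,Y_S,T)$ and run a ``conditional L\'evy'' argument in each direction. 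Your quadratic-variation computations ($[X_T]=T$, $[Y_S]=S$, $[X_T,Y_S]=0$, $[Z]_t=t$, $[B,Z]=0$) are all correct, and the route is attractive because it is symmetric and makes the two decompositions' equivalence transparent. The obstacle you flag at the end is real but fillable by exactly the device the paper uses: replace ``condition on $\mathcal F^T_\infty$ and apply L\'evy's theorem under the conditional law'' by ``show the process is a continuous local martingale for the filtration initially enlarged by $\mathcal F^T_\infty$ (plus $\mathcal F^{\tilde X,\tilde Y,\tilde Z}_t$), apply L\'evy's characterization there, and conclude independence from $\mathcal F^T_\infty$ because the resulting Brownian law does not depend on the time-$0$ information.'' In the forward direction this enlargement is immediate from (C3); in the converse direction the fact that $X_T$ and $Y_S$ stay martingales after adjoining $\mathcal F^T_\infty$ is precisely the paper's Lemma \ref{lemma of XYT independent}(iii), so you should either cite it or reprove it (it follows from $X\perp(Y,T)$, optional stopping at the bounded, $\mathcal F^T_\infty$-measurable times $T_t\le t$). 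With that substitution your proof closes; what the paper's organization buys is reuse of Theorem \ref{condition c1}, while yours buys a self-contained, two-sided dictionary between the local-correlation and common decompositions.
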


\begin{Remark}
Suppose $\{M_t\}_{t\ge0},\{N_t\}_{t\ge0}$ are two continuous local martingales with respect to $\mathcal F_t$ and $[M,M]_t=[N,N]_t,\forall t$, then Theorem \ref{decompose BM} 
can be generalized directly, where
$$T_t=\frac{[M,M]_t+[M,N]_t}2,S_t=\frac{[M,M]_t-[M,N]_t}2,$$
and $X_t,Y_t$ are defined similarly with Section \ref{model setup}. Theorem \ref{condition c1} and Proposition \ref{girsanov} remains valid if we replace $\mathcal F^T$ by $\mathcal F^{T,S}$ in condition the (C1) and (C2)\footnote{In Brownian motion case, $\mathcal F^T=\mathcal F^{T,S}$}.
Moreover, if $[M,M]_t=[N,N]_t$ is absolute continuous with respect to $t$, then according to martingale representation theorem, we can rewrite $(M_t,N_t)$ as
$$(M_t,N_t)=(\int_0^t\theta_udB_u,\int_0^t\xi_udB_u+\int_0^t\eta_udZ_u),$$
where $B$ and $Z$ are two independent Brownian motions and $\theta_u\ge0,\eta_u\ge0,\forall u$. It is evident that $\mathcal F^{T,S}=\mathcal F^{\theta,\xi,\eta}$. Hence, Proposition \ref{independent local correlation imply independent time change} is still correct if $\rho$ is replaced with $\mathcal F_{\infty}^{T,S}$ in the condition (C3).

Particularly, the equivalent condition of $\{X_t\}_{t\ge0}$, $\{Y_t\}_{t\ge0}$ and $\{T_t\}_{t\ge0}$ are mutually independent in 1-dimension situation, i.e. Ocone martingale, illustrated in \cite{kallsen2006didactic} and \cite{vostrikova2000some} is a special case of $M_t=N_t$. Ocone martingale has been widely used in financial mathematics, such as \cite{carr2005pricing} and \cite{geman2001asset}.
\end{Remark}

\subsubsection{Further Discussions for $T$ and $\rho$}

From the setup, we can see $T$ play an important role in the common decomposition. Since $X$ and $Y$ are independent, $T$ is relevant to the dependency structure of $(B,W)$ in the common decomposition triplet. Particularly, in the case of complete decomposition where $X$, $Y$ and $T$ are independent, $T$ contains all the dependency information. On the other hand, if we treat $T$ as a special timer, a "clock", it is obvious that this clock's movements are affected by the correlation of $(B,W)$. In this section, we make further discussions of $T$ via $\rho$ to get a better understanding of the common decomposition.

First, by Theorem \ref{decompose BM}, $(T,S)$ and $\rho$ are connected as follows: $$T_t=\int_0^t\frac{1+\rho_u}2du,\
S_t=\int_0^t\frac{1-\rho_u}2du,$$
in which, $1+\rho_t$ is in fact the distance between local correlation $\rho_t$ and $-1$, $1-\rho_t$ is the distance between $\rho_t$ and $+1$, and the denominator $2$ is the distance between $-1$ and $+1$. Thus the integrands could be regarded as normalizations of the deviation of $(B,W)$'s correlation from complete correlation. For instance, Figure \ref{example of rho} shows a path of $\rho$, in which the shadow part represents $S$ and the light part represents $T$. Think of the case when $\rho$ is always close to $1$ and far away from $-1$, then the "clock $T$" runs faster than $S$, and the "clock T" focuses on positive correlation.

\begin{figure}[htbp!]
  \centering
  \includegraphics[width=15cm]{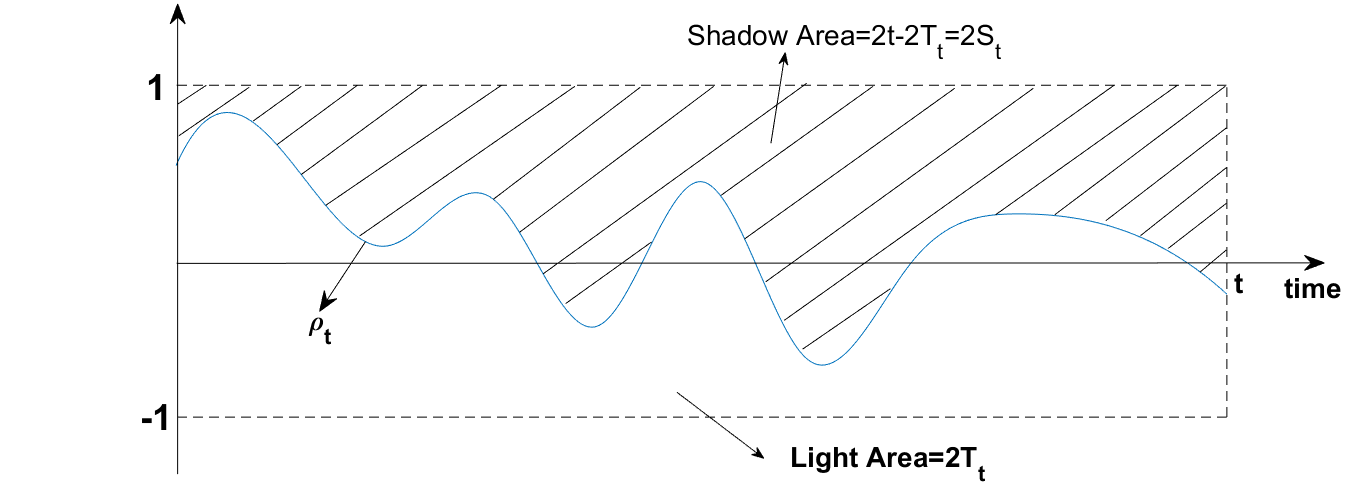}
  \caption{A Path of Local Correlation $\rho$}\label{example of rho}
\end{figure}


Consider the values of $T$ and $S$, at any time $t$, they satisfy
$$T_t+S_t=t,\ T_t-S_t=\int_0^t\rho_udu.$$
That is to say, the sum of the readings of two clocks represents the calender time, while the difference of them shows the cumulated correlation of $(B,W)$ till time $t$.

The average correlation coefficient process is defined as
\begin{equation}\label{average correlation}
 \bar{\rho}_t=\frac1t\int_0^{t}\rho_udu,
\end{equation}
and it could also be represented by $T$ and $S$,
$$\bar{\rho}_t=\frac{T_t-S_t}t=\frac{T_t-S_t}{T_t+S_t}.$$

Another main difference between $T$ and $\rho$ is observability. $T$ is always observable through quadratic covariation while $\rho$ is usually unobservable. In statistics, we can only estimate the correlation coefficient for a period of time, that is to say, the estimation of $\rho$ in statistics is actually $\bar\rho$ but not $\rho$ itself. Hence, if the local correlation is dynamic, statistics can help us to study $T$ well.


Consider the two-factor derivative's pricing in finance. When local correlation of the two factors varies
stochastically over time, it is always difficult to obtain the option prices. The average correlation coefficient process, $\bar\rho$, usually plays an important role under this circumstances. For example, the price of foreign equity option was approximated by the moments of $\bar\rho$ in \cite{Ma2009Pricing}, and \cite{van2006modelling} and \cite{teng2016versatile} show that the price of a Quanto is determined by the Laplace transform of $\bar\rho$. In our method, $\bar\rho_t=\frac{2T_t}{t}-1$, this is one of the reasons indicating the advantage of using common-decomposition method in financial modelling. We will discuss this further in Section \ref{Pricing Financial Derivatives by Decomposition of Two Correlated Brownian Motions}.

In the next part we use a simple example to reveal the concepts mentioned above.
\begin{Example}\label{constant correlation}
Suppose $B$ and $W$ are two Brownian motions with constant correlation $\rho\in(-1,1)$. Then by the local-correlation method,
$$(B_t,W_t)=(B_t,\rho B_t+\sqrt{1-\rho^2}Z_t),$$
where $Z$ has been defined in \eqref{definition of z}. In this case, the condition in Proposition \ref{independent local correlation imply independent time change} is satisfied, thus the processes of the common decomposition triplet, $X,Y$ and $T$, are mutually independent. And they can be calculated accurately,
$$T_t=\frac{1+\rho}2t,\ S_t=\frac{1-\rho}2t,$$
$$X_t=\frac12B_{\frac2{1+\rho}t}+\frac12W_{\frac2{1+\rho}t},\ 
Y_t=\frac12B_{\frac2{1-\rho}t}-\frac12W_{\frac2{1-\rho}t}
,$$
and the decomposition of $(B,W)$ is
$$(B_t,W_t)=(X_{\frac{1+\rho}2t}+Y_{\frac{1-\rho}2t},X_{\frac{1+\rho}2t}-Y_{\frac{1-\rho}2t}).$$

In this example, we summarize three statements as follows.
\begin{description}
  \item[(\romannumeral1)] $T$ and $S$ conform a decomposition of the ``calender time" in any time period. They are composed by special ``time points" picked out according to the correlation structure of $(B,W)$. They can be considered as special
      clocks that moves only at special time.
  \item[(\romannumeral2)] If $\rho>0$, the clock $T$ runs faster than the clock $S$, vice versa.
  \item[(\romannumeral3)] Consider $\mathcal C=\{\alpha B+(1-\alpha)W|\alpha\in\mathbb R\}$, the family of generalized convex combinations of $B$ and $W$. The correlation coefficient of every two processes in $\mathcal C$ with parameters $\alpha$ and $\beta$ is $$\rho_{\alpha,\beta}=(1-\rho)[(2\alpha-1)\beta-\alpha]+1.$$ If $\alpha=\frac{1}{2}$, $\rho_{\alpha,\beta}=\frac{\rho+1}{2}>0, \forall \beta\in\mathbb R$. Otherwise, we have $\rho_{\alpha,\beta}\leq 0$ if $\alpha>\frac12$, $\beta\le(\alpha-\frac1{1-\rho})/(2\alpha-1)$ or $\alpha<\frac12$, $\beta\ge(\alpha-\frac1{1-\rho})/(2\alpha-1)$. In other words, $\frac{B+W}{2}$ is the only process in $\mathcal C$ that is strictly positive correlated with any other process in $\mathcal C$. Note that this process is in fact $X$ under clock $T$, thus $X$ represents the common structures in $B$ and $W$. Similarly, $Y$ represents the common structures in $B$ and $-W$. Namely, $X$ and $Y$ are two extreme cases, and they are taken from $B$ and $W$ by the common decomposition. In fact, the background of the set $\mathcal C$ is from a financial example. Suppose $B$ and $W$ represent returns of two assets, then $\alpha B+(1-\alpha) W$ represents the return of portfolio on these two assets. And $\alpha<0$ or $\alpha>1$ represents the short selling of assets.
\end{description}
\end{Example}
\begin{Remark}
Actually, if the local correlation of $B$ and $W$ is not constant, the three statements for Example \ref{constant correlation} remain valid. For \textbf{(\romannumeral1)} and \textbf{(\romannumeral2)}, the results remain the same. For \textbf{(\romannumeral3)}, we can prove
$$\frac1tCov\left(\alpha B_t+(1-\alpha)W_t,\beta B_t+(1-\beta)W_t\right)=(1-Corr(B_t,W_t))[(2\alpha-1)\beta-\alpha]+1,$$
where $Cov$ and $Corr$ denote covariance and correlation respectively. With the similar discussion, $\frac{B+W}{2}$ is the only process in $\mathcal C$ that is strictly positive correlated with any convex combination of $B$ and $W$.
\end{Remark}

\subsection{Illustration of the Common Decomposition via Discretization}

The example in previous section demonstrated what the processes
in common decomposition look like and how to construct the clock $T$ when $\rho_t\equiv \rho\in (0,1)$.
In this section, similar analysis is carried out from a distributional
 aspect for general cases by discretizing $\rho$. In this part, we also start with two correlated Brownian motions $B$ and $W$ with local correlation process $\rho$, and all the other notations defined in previous sections are followed.

Given $t\geq 0$, let $\Pi$ be a partition
of $[0,t]$:$$0=t_0<t_1<t_2<\dots<t_n=t,$$
and write
$||\Pi||=\max\{t_i-t_{i-1}:i=1,\dots,n\}$. Given $\rho$, for $\forall \omega\in \Omega$, define\footnote{The choice of $A^\Pi(\omega)$ is not unique. $A^\Pi(\omega)$ can be any Borel set as long as $m\left(A^\Pi(\omega)\cap(t_i,t_{i+1}]\right)=\frac{1+\rho_{t_i}(\omega)}2\Delta t_i,\forall i$, where $m(\cdot)$ denotes the Lebesgue measure on $\mathbb R$.}
$$A^\Pi(\omega)=\bigcup_{i=0}^{n-1}
(t_i,\, t_i+\frac{1+\rho_{t_i}(\omega)}2\Delta t_i].$$
Note that by the construction of $A^\Pi$, the stochastic processes $\{1_{\{u\in A^\Pi\}}\}_{0\leq u\leq t}$ and $\{1_{\{u\notin A^\Pi\}}\}_{0\leq u\leq t}$ are predictable.

Set
$$\tilde X^{\Pi}_{s}\triangleq
\int_{0}^{s}1_{\{u\in A^\Pi\}}dB_u,\ \ \tilde Y^{\Pi}_{s}\triangleq\int_{0}^{s}1_{\{u\notin A^\Pi\}}dB_u,\quad\forall s\in[0,t],$$
i.e., $\tilde X^{\Pi}$ keeps in step with $B$ in $A^\Pi$ and stays stationary at other time while $\tilde Y^\Pi$ performs oppositely. Let
\begin{equation}\tilde W^\Pi_s\triangleq\tilde X^{\Pi}_{s}-\tilde Y^{\Pi}_{s}=
\int_{0}^{s}1_{\{u\in A^\Pi\}}dB_u-\int_{0}^{s}1_{\{u\notin A^\Pi\}}dB_u.\label{tilde W}\end{equation}
Then $\tilde W^\Pi$ is a Brownian motion moving commonly with $B$ in $A^\Pi$
 and oppositely in $[0,t]\setminus A^\Pi$. And $\tilde X^\Pi$ and $\tilde Y^\Pi$ represent the common movements and counter movements of $B$ and $\tilde W^\Pi$.

At any time $s\leq t$, the time period $[0,s]$ is divided into two parts: the commonly-moving period $A^\Pi\bigcap [0,s]$ and the oppositely-moving period $[0,s]\setminus A^\Pi$, whose total lengths could be calculated respectively as (suppose $t_i< s\leq t_{i+1}$)
 $$\tilde T_s^{\Pi}(\omega)\triangleq m\left([0,s]\cap A^\Pi(\omega)\right)=\frac{t_i+\sum_{k=0}^{i}\rho_{t_k}(\omega)\Delta t_k}{2}+m\left((t_i,s]\cap A^\Pi(\omega)\right),$$
$$\tilde S_s^{\Pi}(\omega)\triangleq m\left([0,s]\setminus A^\Pi(\omega)\right)=\frac{t_i-\sum_{k=0}^{i}\rho_{t_k}(\omega)\Delta t_k}{2}+m\left((t_i,s]\setminus A^\Pi(\omega)\right),$$
where $m(\cdot)$ denotes the Lebesgue measure on $\mathbb R$.
Obviously,
\begin{equation}\label{T of local correlation model}
\lim_{||\Pi||\to0}\tilde T_s^{\Pi}=\frac{s+\int_0^s\rho_udu}2
=T_s,\ \lim_{||\Pi||\to0}\tilde S_s^{\Pi}=\frac{s-\int_0^s\rho_udu}2=S_s,\forall s\in[0,t].
\end{equation}
The following proposition considers the limitation property of $\tilde W^\Pi$ in distribution.

\begin{Proposition}\label{convergence in distribution}Suppose the assumptions in model setup and the conditions in Proposition \ref{independent local correlation imply independent time change} hold. For any given $0\le u_1<u_2<\dots<u_K<\infty, 0\le v_1<v_2<\dots<v_L<\infty$, as $||\Pi||\to0$, we have
$$(B_{u_1},B_{u_2},\dots,B_{u_K},\tilde W^\Pi_{v_1},\tilde W^\Pi_{v_2},\dots,\tilde W^\Pi_{v_L})\xrightarrow[\quad]{d}(B_{u_1},B_{u_2},\dots,B_{u_K},W_{v_1},W_{v_2},\dots,W_{v_L}).$$
\end{Proposition}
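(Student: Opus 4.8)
The plan is to realize the left-hand vector as a continuous functional of the common decomposition of the pair $(B,\tilde W^\Pi)$, and then to pass to the limit by a ``fixed law times weakly convergent law, followed by the continuous mapping theorem'' argument. Throughout we may and do take $t\geq\max(u_K,v_L)$. First I would verify that $(B,\tilde W^\Pi)$ is again a pair of correlated Brownian motions with respect to $\mathbb F$: $B$ is a Brownian motion by assumption, and $\tilde W^\Pi$ is a continuous local martingale, being the stochastic integral of the predictable $\pm1$-valued process $1_{\{u\in A^\Pi\}}-1_{\{u\notin A^\Pi\}}$ against $B$, with $[\tilde W^\Pi]_s=s$, hence a Brownian motion by L\'evy's characterization. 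Its cross variation with $B$ is $[B,\tilde W^\Pi]_s=\tilde T^\Pi_s-\tilde S^\Pi_s$, so the local correlation of $(B,\tilde W^\Pi)$ is the $\sigma(\rho)$-measurable process $\rho^\Pi_u=1_{\{u\in A^\Pi\}}-1_{\{u\notin A^\Pi\}}$ and the associated time-change density is $\tilde T^\Pi$. Applying the construction of Section \ref{Dependency Structure of Two Correlated Brownian Motions} to $(B,\tilde W^\Pi)$ on $[0,t]$ (since $A^\Pi\subseteq[0,t]$, both $\tilde T^\Pi_\infty$ and $\tilde S^\Pi_\infty$ are finite, so the modification of $X$ and $Y$ by auxiliary Brownian motions is genuinely invoked) produces its common decomposition triplet $(X^\Pi,Y^\Pi,\tilde T^\Pi)$, with $\tilde X^\Pi_s=X^\Pi_{\tilde T^\Pi_s}$, $\tilde Y^\Pi_s=Y^\Pi_{\tilde S^\Pi_s}$, so that $B_s=X^\Pi_{\tilde T^\Pi_s}+Y^\Pi_{\tilde S^\Pi_s}$ and $\tilde W^\Pi_s=X^\Pi_{\tilde T^\Pi_s}-Y^\Pi_{\tilde S^\Pi_s}$ on $[0,t]$.

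Next I would check that $X^\Pi$, $Y^\Pi$, $\tilde T^\Pi$ are mutually independent. Because $\rho^\Pi$ takes values in $\{\pm1\}$, the ``orthogonal'' Brownian motion appearing in the local-correlation decomposition \eqref{local-corr} of $(B,\tilde W^\Pi)$ is the auxiliary process $\tilde Z$, which is independent of $\mathcal F_\infty$ (hence of $B$) and of $\rho$; since $\rho^\Pi$ is a deterministic functional of the path of $\rho$, the triple $(\rho^\Pi,B,\tilde Z)$ is mutually independent by the hypothesis that $(\rho,B,Z)$ is mutually independent. Thus $(B,\tilde W^\Pi)$ satisfies condition (C3), and Proposition \ref{independent local correlation imply independent time change} yields the mutual independence of $X^\Pi$, $Y^\Pi$, $\tilde T^\Pi$. (Alternatively, one argues directly: conditionally on $\rho$ the set $A^\Pi$ is deterministic and $B$ is still a Brownian motion, so $X^\Pi,Y^\Pi$ are independent standard Brownian motions whose conditional law does not depend on $\rho$, whence $(X^\Pi,Y^\Pi)$ is independent of $\sigma(\rho)\supseteq\sigma(\tilde T^\Pi)$.) In particular $(X^\Pi,Y^\Pi)$ has the law of a pair of independent standard Brownian motions, the same for every $\Pi$ and equal to the law of $(X,Y)$.

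Finally I would pass to the limit. Using $\tilde S^\Pi_s=s-\tilde T^\Pi_s$ and writing $a_j=\tilde T^\Pi_{u_j}$, $b_l=\tilde T^\Pi_{v_l}$, the left-hand vector equals $G\big((X^\Pi,Y^\Pi),(a_j)_j,(b_l)_l\big)$, where $G\big((x,y),(a_j),(b_l)\big)=\big((x(a_j)+y(u_j-a_j))_j,(x(b_l)-y(v_l-b_l))_l\big)$ is continuous for the locally uniform topology on pairs of continuous paths and the Euclidean topology on the time arguments, on the region $\{0\le a_j\le u_j,\ 0\le b_l\le v_l\}$, which supports all the relevant laws. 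By \eqref{T of local correlation model} the vector $\big((a_j)_j,(b_l)_l\big)$ converges almost surely, hence in law, to $\big((T_{u_j})_j,(T_{v_l})_l\big)$; being $\sigma(\rho)$-measurable it is independent of $(X^\Pi,Y^\Pi)$, whose law is fixed, while in the limit $(X,Y)$ is independent of $T$ by Proposition \ref{independent local correlation imply independent time change}, so the joint law of $\big((X^\Pi,Y^\Pi),(a_j)_j,(b_l)_l\big)$ converges weakly to that of $\big((X,Y),(T_{u_j})_j,(T_{v_l})_l\big)$ (a product of a fixed measure with a weakly convergent one, which converges to the product). The continuous mapping theorem then gives convergence in distribution of $G$ along these sequences, and $G\big((X,Y),(T_{u_j})_j,(T_{v_l})_l\big)=\big((X_{T_{u_j}}+Y_{S_{u_j}})_j,(X_{T_{v_l}}-Y_{S_{v_l}})_l\big)=\big((B_{u_j})_j,(W_{v_l})_l\big)$ by the common decomposition \eqref{decomposition equation} of $(B,W)$; this is the claimed limit.

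The step I expect to be most delicate is the first, together with the independence verification in the second: checking that the Section \ref{Dependency Structure of Two Correlated Brownian Motions} machinery — in particular the auxiliary-Brownian-motion modification of $X$ and $Y$ — transfers cleanly to $(B,\tilde W^\Pi)$ on $[0,t]$, and that the mutual independence of $(X^\Pi,Y^\Pi,\tilde T^\Pi)$ genuinely follows from the independence of $(\rho,B,Z)$ via the observation that the relevant local correlation $\rho^\Pi$ is a $\pm1$-valued, $\sigma(\rho)$-measurable process. Once these structural points are settled, the convergence itself is the routine combination of a fixed marginal law, an almost surely convergent (hence weakly convergent) time-change vector independent of that marginal, and the continuous mapping theorem.
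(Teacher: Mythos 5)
Your proposal is correct, but it follows a genuinely different route from the paper's. The paper's proof never decomposes $(B,\tilde W^\Pi)$ itself: it introduces the Euler--Maruyama process $W^\Pi_s=\int_0^s\rho^\Pi_udB_u+\int_0^s\sqrt{1-(\rho^\Pi_u)^2}\,dZ_u$, uses (C3) to show that, conditionally on $\mathcal F^T_\infty$, the grid increments $(\Delta B_{t_i},\Delta\tilde W^\Pi_{t_i})$ and $(\Delta B_{t_i},\Delta W^\Pi_{t_i})$ have identical joint Gaussian laws, concludes that $(B,\tilde W^\Pi)$ and $(B,W^\Pi)$ have \emph{exactly} the same finite-dimensional distributions at partition points, transfers this to arbitrary times $u_k,v_l$ by an explicit $\epsilon$--$\delta$ estimate on distribution functions, and finishes with $W^\Pi\to W$ from the definition of the It\^o integral. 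You instead view $(B,\tilde W^\Pi)$ as a new pair of correlated Brownian motions with $\pm1$-valued local correlation $\rho^\Pi$, apply the common decomposition to it, verify (C3) for the new pair (the orthogonal Brownian motion degenerates to $\tilde Z$, and $\rho^\Pi$ is $\sigma(\rho)$-measurable, so mutual independence of $(\rho^\Pi,B,\tilde Z)$ follows from that of $(\rho,B,Z)$), and then combine the fixed Wiener law of $(X^\Pi,Y^\Pi)$, the a.s. convergence \eqref{T of local correlation model} of the independent time-change vector, and the continuous mapping theorem. Both arguments rest on the same two inputs — condition (C3) and the convergence of the discretized clocks/integrands as $\|\Pi\|\to0$ — so neither is more general; what the paper's route buys is the exact distributional identity \eqref{identical distribution of W^Pi} between $\tilde W^\Pi$ and the Euler scheme on the grid (of independent interest for the simulation discussion in Section \ref{Construction of Two Correlated Brownian Motions}), at the cost of the lengthy $\epsilon$--$\delta$ bookkeeping; your route replaces that bookkeeping with a soft weak-convergence argument and showcases the decomposition itself as the tool. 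The only points to make fully explicit in a write-up are the two you already flag: extending $\tilde W^\Pi$ (or $A^\Pi$) beyond $[0,t]$ so that Theorem \ref{decompose BM} and Proposition \ref{independent local correlation imply independent time change} literally apply, and the auxiliary-Brownian-motion extension of $Y^\Pi$ past $\tilde S^\Pi_\infty$, neither of which affects the values entering $G$.
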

Proposition \ref{convergence in distribution} guarantees that as $||\Pi||\to0$, any finite dimensional distribution of $(B,\tilde W^\Pi)$ converges to $(B,W)$ in the sense of distribution. For simplicity, we denote this finite dimensional distribution convergence of processes by "$\xrightarrow[\quad]{f.d.d.}$". Thus,
$$(B,\tilde W^\Pi)\xrightarrow[\quad]{f.d.d.}(B,W),$$
as a consequence,
 \begin{equation}(\tilde X^{\Pi},\tilde Y^{\Pi})=(\frac{B+\tilde W^\Pi}2,\frac{B-\tilde W^\Pi}2)\xrightarrow[||\Pi||\to0]{f.d.d.}(X_T,Y_S).\label{tilde X tilde Y convergence}
\end{equation}
 The convergence properties \eqref{tilde X tilde Y convergence} and
\eqref{T of local correlation model} reveal the connections of $X$ and $Y$ with common and counter movements of $(B,W)$ in some sense,  and give an intuitive  explanation for $T$ and $S$ to be considered as  clocks recording positive correlation and negative correlation of $(B,W)$.

\section{Construction and Simulation of Correlated Brownian Motions Based on the Common Decomposition}
\label{Construction of Two Correlated Brownian Motions}

In the previous section, the common decomposition of two correlated Brownian motions has been demonstrated. For any two Brownian motions $B$ and $W$, we can find a triplet $(X,Y,T)$ to represent them by change of time method. While in practice, a converse problem may also be worth concerning and studying. That is, is it possible to construct two Brownian motions with desired dependency structure from two independent Brownian motions by common decomposition method? In this section we will focus on this problem. Furthermore, the simulation method based on the common decomposition is also given.

\subsection{A New Method for Construction of Correlated Brownian Motions}


In this section, we construct correlated Brownian motions by common decomposition method under some conditions and give an example to show the application of this new construction method.


 \begin{Theorem}\label{combine BM}
Let $(X,Y)$ be a 2-dimensional standard Brownian motion and $\{T_t\}_{t\ge0}$, $\{S_t\}_{t\ge0}$ be time changes with respect to $\mathbb F$.
If $\mathcal{F}_t^{Y_S}\perp\mathcal{F}^{X_T}_{\infty}|\mathcal{F}_t^{X_T}$ and $\mathcal{F}_t^{X_T}\perp\mathcal{F}^{Y_S}_{\infty}|\mathcal{F}_t^{Y_S}$, then $\{X_{T_t}\}_{t\ge0}$ and $\{Y_{S_t}\}_{t\ge0}$ are martingales with respect to $\mathcal F^{X_T,Y_S}$. Furthermore, if $T$, $S$ are strictly increasing and $T_t+S_t=t,\forall t\ge 0$, then
$$B_t\triangleq X_{T_t}+Y_{S_t}\ \mbox{and}\ W_t\triangleq X_{T_t}-Y_{S_t}$$
are two correlated Brownian motions with respect to $\mathcal F^{B,W}$ and $[B,W]_t=T_t-S_t$.

\end{Theorem}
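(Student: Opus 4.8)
The plan is to prove the two assertions in turn, and for the construction of $B$ and $W$ to reduce everything to Lévy's characterization.

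\emph{Martingale property.} First I would record that, since $T$ and $S$ are time changes of $\mathbb F$, each $T_t,S_t$ is an $\mathbb F$-stopping time and $\mathcal F^{X_T}_t\subseteq\mathcal F_{T_t}$, $\mathcal F^{Y_S}_t\subseteq\mathcal F_{S_t}$ (because $X_{T_u}$ is $\mathcal F_{T_u}$-measurable and $u\mapsto T_u$ is increasing, and likewise for $Y_S$). Optional sampling applied to the $\mathbb F$-Brownian motion $X$ then makes $\{X_{T_t}\}$ a martingale of $(\mathcal F_{T_t})_{t\ge0}$, hence — being adapted to the coarser filtration $\mathcal F^{X_T}$ — an $\mathcal F^{X_T}$-martingale; symmetrically $\{Y_{S_t}\}$ is an $\mathcal F^{Y_S}$-martingale. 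To pass to $\mathbb G\triangleq\mathcal F^{X_T,Y_S}$ I use $\mathbb G_s=\mathcal F^{X_T}_s\vee\mathcal F^{Y_S}_s$ together with the elementary fact that $\mathcal A\perp\mathcal B\mid\mathcal C$ with $\mathcal C\subseteq\mathcal B$ and $\xi\in L^1(\mathcal B)$ forces $E[\xi\mid\mathcal A\vee\mathcal C]=E[\xi\mid\mathcal C]$: applied with $\xi=X_{T_t}$, $\mathcal B=\mathcal F^{X_T}_\infty$, $\mathcal C=\mathcal F^{X_T}_s$, $\mathcal A=\mathcal F^{Y_S}_s$, the hypothesis $\mathcal F^{Y_S}_t\perp\mathcal F^{X_T}_\infty\mid\mathcal F^{X_T}_t$ (which is exactly the statement that $\mathcal F^{X_T}$ is immersed in $\mathbb G$) gives $E[X_{T_t}\mid\mathbb G_s]=E[X_{T_t}\mid\mathcal F^{X_T}_s]=X_{T_s}$, and the symmetric hypothesis does the same for $Y_S$.

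\emph{Reduction for the Brownian property.} Now add the hypotheses $T,S$ strictly increasing and $T_t+S_t=t$; this also yields $0\le T_t-T_s\le t-s$, so $T$ and $S$ are continuous and $T_t,S_t\le t$. Put $B=X_T+Y_S$, $W=X_T-Y_S$. Since $X_{T_t}=(B_t+W_t)/2$ and $Y_{S_t}=(B_t-W_t)/2$, one has $\mathcal F^{B,W}=\mathbb G$, and by the first part $B,W$ are continuous $\mathbb G$-martingales null at $0$. Using the time-change formula $[X_T]_t=[X]_{T_t}=T_t$ and $[Y_S]_t=S_t$,
\[
[B,W]_t=[X_T]_t-[Y_S]_t=T_t-S_t,\qquad [B]_t=[W]_t=T_t+S_t+2[X_T,Y_S]_t=t+2[X_T,Y_S]_t,
\]
so the whole matter reduces to showing $[X_T,Y_S]\equiv0$; granting that, Lévy's characterization makes $B$ and $W$ standard $\mathcal F^{B,W}$-Brownian motions with cross variation $T_t-S_t$, which is the claim.

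\emph{The crux, and the main obstacle.} The single substantial step is $[X_T,Y_S]\equiv0$, and here I would use that $X,Y$ are \emph{independent} $\mathbb F$-Brownian motions, so $[X,Y]\equiv0$. For $s<t$ write the increments as stochastic integrals of $\mathbb F$-predictable integrands: $X_{T_t}-X_{T_s}=(H\!\cdot\!X)_\infty$ with $H_u=1_{(T_s,T_t]}(u)$, and $Y_{S_t}-Y_{S_s}=(K\!\cdot\!Y)_\infty$ with $K_v=1_{(S_s,S_t]}(v)$ (the stochastic intervals are predictable since $T_s,T_t,S_s,S_t$ are $\mathbb F$-stopping times). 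Then $H\!\cdot\!X$ and $K\!\cdot\!Y$ are $L^2$-bounded $\mathbb F$-martingales with $[H\!\cdot\!X,K\!\cdot\!Y]=\int HK\,d[X,Y]=0$, so $N:=(H\!\cdot\!X)(K\!\cdot\!Y)$ is a uniformly integrable $\mathbb F$-martingale. The device is to set $\nu:=T_s\vee S_s$, an $\mathbb F$-stopping time with $\nu\le s$ and $\mathbb G_s\subseteq\mathcal F_{T_s}\vee\mathcal F_{S_s}\subseteq\mathcal F_\nu$; optional sampling then gives $E[(X_{T_t}-X_{T_s})(Y_{S_t}-Y_{S_s})\mid\mathbb G_s]=E[N_\infty\mid\mathbb G_s]=E[N_\nu\mid\mathbb G_s]$, and $N_\nu\equiv0$ pointwise: on $\{\nu=T_s\}$ the integrand $1_{(T_s,T_t]}$ of $(H\!\cdot\!X)_\nu=\int_0^{T_s}1_{(T_s,T_t]}dX$ vanishes, while on $\{\nu=S_s\}$ the integrand of $(K\!\cdot\!Y)_\nu$ vanishes. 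Combining this with the $\mathbb G$-martingale property of $X_T$ and $Y_S$ from the first part shows $X_{T_\cdot}Y_{S_\cdot}$ is a $\mathbb G$-martingale; since $X_{T_t}Y_{S_t}-[X_T,Y_S]_t$ is a $\mathbb G$-local martingale by integration by parts, $[X_T,Y_S]$ is a continuous finite-variation $\mathbb G$-local martingale null at $0$, hence identically $0$. I expect this orthogonality to be the hard part — the two Brownian pieces are read along two \emph{different}, possibly path-dependent clocks $T$ and $S$, so the $\mathbb F$-orthogonality of $X$ and $Y$ does not visibly survive the time change; the observation that $\mathbb G_s$ is dominated by the single $\mathbb F$-stopping time $T_s\vee S_s$, at which the product $\mathbb F$-martingale is pointwise zero, is what unlocks it. The remaining ingredients (optional sampling, the time-change formula for $[\,\cdot\,]$, Lévy's theorem) should be routine.
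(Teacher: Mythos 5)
Your proof is correct. The first half (the $\mathcal F^{X_T,Y_S}$-martingale property of $X_T$ and $Y_S$ via optional sampling plus the two conditional-independence hypotheses) coincides with the paper's argument, but your treatment of the crux $[X_T,Y_S]\equiv 0$ is genuinely different and more elementary. The paper first shows that $T,S$ have densities $\lambda,\mu$ with $\lambda_t+\mu_t=1$, builds auxiliary local martingales $M$ and $N$ by integrating $1_{\{\lambda_{\tau}\neq0\}}\lambda_{\tau}^{-1/2}$ against $X$ (resp.\ $\mu$ against $Y$) and filling in the degenerate times with extra independent Brownian motions $\tilde X,\tilde Y$, checks $[M]=\tau$, $[N]=\varsigma$, $[M,N]=0$, invokes Knight's theorem to conclude that $M_T$ and $N_S$ are independent Brownian motions, and then unwinds $[M_T,N_S]=0$ back into $[X_T,Y_S]=0$ --- with a separate truncation argument for the case $E\tau_t=\infty$. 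You instead represent the increments over $[s,t]$ as $\mathbb F$-stochastic integrals $H\cdot X$ and $K\cdot Y$ whose orthogonality is inherited directly from $[X,Y]=0$, and then condition through the single stopping time $\nu=T_s\vee S_s$, which dominates your filtration $\mathbb G_s=\mathcal F^{X_T,Y_S}_s$ and at which the product martingale $(H\cdot X)(K\cdot Y)$ vanishes pointwise; this gives $E\left[(X_{T_t}-X_{T_s})(Y_{S_t}-Y_{S_s})\mid\mathbb G_s\right]=0$, hence that $X_TY_S$ is a $\mathbb G$-martingale and $[X_T,Y_S]=0$, with no Knight's theorem, no auxiliary Brownian motions, and no truncation. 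Your argument also never uses strict monotonicity of $T$ and $S$ (the paper needs it to make $\tau,\varsigma$ continuous and finite), so it in fact establishes the second assertion under slightly weaker hypotheses. The one loose end you share with the paper is invoking optional sampling for the first assertion without any boundedness or uniform-integrability condition on the stopping times $T_t,S_t$; this is harmless for the main construction, where $T_t+S_t=t$ forces $T_t,S_t\le t$.
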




%

Immediately, we have a convenient way to construct correlated Brownian motions from Theorem \ref{combine BM}.

\begin{Corollary}\label{XYT independent}Suppose that $T,\ S$ are strictly increasing processes satisfying $T_t+S_t=t,\forall t\geq 0$, and $X,\ Y$ are independent Brownian motions. If $X,\ Y,\ T$ are mutually independent,
then
$$B_t\triangleq X_{T_t}+Y_{S_t}\ \mbox{and}\ W_t\triangleq X_{T_t}-Y_{S_t}$$
are two correlated Brownian motions with respect to $\mathcal F^{B,W}$ and $[B,W]_t=T_t-S_t$.
\end{Corollary}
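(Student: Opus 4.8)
The plan is to derive Corollary~\ref{XYT independent} from Theorem~\ref{combine BM} by showing that mutual independence of $X$, $Y$ and $T$ forces the two conditional-independence hypotheses of that theorem. First I would fix an ambient filtration. Since $\sigma(X)$, $\sigma(Y)$ and $\sigma(T)$ are independent, set $\mathcal F_t\triangleq\overline{\mathcal F^X_t\vee\mathcal F^Y_t\vee\mathcal F^T_\infty}$; an independent enlargement preserves the Brownian property, so $(X,Y)$ is a $2$-dimensional $\mathbb F$-Brownian motion, while $T_t$ (and $S_t=t-T_t$) is $\mathcal F_0$-measurable and, because $T_t+S_t=t$ with both nondecreasing, $T$ and $S$ are continuous and strictly increasing; hence they are genuine $\mathbb F$-time changes and the setting of Theorem~\ref{combine BM} is in force. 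Along the way I would record two measurability facts: (a) $T_t=[X_T]_t$ may be recovered as the pathwise quadratic variation of $X_T$ on $[0,t]$, so $\mathcal F^T_t\subseteq\mathcal F^{X_T}_t$ and likewise $\mathcal F^S_t=\mathcal F^T_t\subseteq\mathcal F^{Y_S}_t$; and (b) since $X_{T_t}$ is a jointly measurable function of $(X,T_t)$ and $T_u$ is $\mathcal F^{X_T}_u$-measurable, one has $\mathcal F^T_\infty\subseteq\mathcal F^{X_T}_\infty\subseteq\sigma(X)\vee\mathcal F^T_\infty$, and symmetrically for $Y_S$.

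Next I would verify $\mathcal F^{Y_S}_t\perp\mathcal F^{X_T}_\infty|\mathcal F^{X_T}_t$, the other condition being symmetric. Fix a bounded $\mathcal F^{Y_S}_t$-measurable $g$. Conditioning on $\mathcal F^T_\infty$ freezes the paths of $T$ and $S$; as $(X,Y)$ is independent of $T$, the conditional law of $(X,Y)$ given $\mathcal F^T_\infty$ is unchanged, so $X$ and $Y$ remain independent Brownian motions conditionally, and hence $g$ is a function of $Y$ alone while $\mathcal F^{X_T}_\infty$ is a function of $X$ alone, whence $g\perp\mathcal F^{X_T}_\infty|\mathcal F^T_\infty$; combined with $\mathcal F^T_\infty\subseteq\mathcal F^{X_T}_\infty$ this gives $E[g|\mathcal F^{X_T}_\infty]=E[g|\mathcal F^T_\infty]$. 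Moreover, given the whole path of $S$, $g$ is a deterministic functional of $(Y_v)_{v\le S_t}$ read off only at the times $\{S_u:u\le t\}$, so its conditional expectation depends on $S$ only through $S|_{[0,t]}$; thus $E[g|\mathcal F^T_\infty]$ is $\mathcal F^T_t$-measurable, hence $\mathcal F^{X_T}_t$-measurable by (a). Taking $E[\,\cdot\,|\mathcal F^{X_T}_t]$ of $E[g|\mathcal F^{X_T}_\infty]$ and using $\mathcal F^{X_T}_t\subseteq\mathcal F^{X_T}_\infty$ then yields $E[g|\mathcal F^{X_T}_t]=E[g|\mathcal F^T_\infty]=E[g|\mathcal F^{X_T}_\infty]$, which is exactly the asserted conditional independence. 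With both hypotheses established and $T$, $S$ strictly increasing with $T_t+S_t=t$ by assumption, Theorem~\ref{combine BM} gives at once that $B_t=X_{T_t}+Y_{S_t}$ and $W_t=X_{T_t}-Y_{S_t}$ are correlated Brownian motions with respect to $\mathcal F^{B,W}$ with $[B,W]_t=T_t-S_t$.

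The step I expect to be the main obstacle is the conditioning bookkeeping in the second paragraph. The awkward point is that $\mathcal F^{X_T}_\infty$ carries information about the \emph{entire} path of $T$ (equivalently of $S$), not merely its restriction to $[0,t]$, whereas the target conditions only on $\mathcal F^{X_T}_t$; the resolution is that this surplus information about $S$ beyond $S|_{[0,t]}$ is irrelevant to $\mathcal F^{Y_S}_t$ precisely because $Y$ is independent of $T$, and turning this intuition into the displayed chain of conditional expectations is where care is needed, together with the routine but essential identification of $T_t$ with the quadratic variation of $X_T$.
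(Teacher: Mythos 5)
Your proposal is correct and follows the same overall reduction as the paper: both arguments amount to checking the hypotheses of Theorem \ref{combine BM} (namely that $(X,Y)$ is a $2$-dimensional Brownian motion of a filtration for which $T,S$ are time changes, together with the two conditional independences $\mathcal F^{Y_S}_t\perp\mathcal F^{X_T}_\infty|\mathcal F^{X_T}_t$ and $\mathcal F^{X_T}_t\perp\mathcal F^{Y_S}_\infty|\mathcal F^{Y_S}_t$) and then invoking that theorem. Where you genuinely diverge is in how those conditional independences are obtained. The paper simply cites Lemma \ref{lemma of XYT independent}(\romannumeral2), whose proof runs through the exponential-martingale identity of condition (E): one first shows $E[\exp(\int_0^t\phi_u dY_{S_u})|\mathcal F^T_\infty]$ equals the deterministic exponential of $\frac12\int_0^t\phi_u^2dS_u$, deduces the conditional independence of $\mathcal F^{X_T}_t$ and $\mathcal F^{Y_S}_t$ given $\mathcal F^T_t$, and then identifies conditional Laplace functionals. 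You instead argue directly at the level of conditional laws: freezing $\mathcal F^T_\infty$ leaves $(X,Y)$ an independent pair, so any $\mathcal F^{Y_S}_t$-measurable $g$ satisfies $E[g|\mathcal F^{X_T}_\infty]=E[g|\mathcal F^T_\infty]$, and the latter is $\mathcal F^T_t$-measurable (hence $\mathcal F^{X_T}_t$-measurable via the quadratic-variation inclusion $\mathcal F^T_t\subseteq\mathcal F^{X_T}_t$), which closes the chain. Your argument is more elementary and self-contained for this corollary, since here one has full mutual independence available from the outset; the paper's condition (E) machinery is heavier but is what allows the same lemma to also serve as the \emph{sufficient} criterion in the converse direction (Theorem \ref{condition c1} and Proposition \ref{girsanov}), which your direct argument would not deliver. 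Two cosmetic points: strict monotonicity of $T$ and $S$ is a hypothesis of the corollary, not a consequence of $T_t+S_t=t$ as your first paragraph suggests (only $1$-Lipschitz continuity follows); and your enlarged filtration $\mathcal F^X_t\vee\mathcal F^Y_t\vee\mathcal F^T_\infty$ is larger than the paper's $\tilde{\mathcal F}_t$, but both make $T_u$, $S_u$ stopping times and preserve the Brownian property of $(X,Y)$, so either choice is fine.
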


In the following, we consider constructing correlated Brownian motions through common decomposition and regime switching model.Regime switching is a commonly used model in finance, and it fits financial data well. For example, \cite{schaller1997regime} found very strong evidence for state-dependent switching behaviour in stock market returns. Regime switching model for correlations in discrete time have been considered, e.g. \cite{casarin2018bayesian} and \cite{pelletier2006regime}. Hence, we consider regime switching model to construct correlated Brownian motions by common decomposition method in the next example.
\begin{Example}(Regime switching model)\label{Q process}
Suppose $\{Q_t\}_{t\ge0}$ is a continuous time stationary Markov process taking values in a finite state space $\{e_1,e_2,\dots,e_n\}$, where $$e_i=(\underbrace{0,\dots,0}_{i-1},1,\underbrace{0,\dots,0}_{n-i})$$ denotes the unit vector. The Markov process $\{Q_t\}_{t\ge0}$ has a stationary transition probability matrix $\boldsymbol P(t)=(p_{ij}(t))_{n\times n}$, where
$$p_{ij}(t)=P(Q_{t+s}=e_j|Q_s=e_i).$$
The homogeneous generator $\boldsymbol A=(a_{ij})_{n\times n}$ exists and is defined as
$$\boldsymbol A\triangleq\lim_{t\downarrow0}\frac{\boldsymbol{P}(t)-\boldsymbol I}t,$$
where $\boldsymbol I$ denotes the identity matrix. Then we have
$$\frac{d\boldsymbol P(t)}{dt}=\boldsymbol A\boldsymbol P(t)=\boldsymbol P(t)\boldsymbol A.$$
Solving this ODE we obtain
\begin{equation}\boldsymbol P(t)=e^{\boldsymbol At}.\label{P(t)}\end{equation}
Let $\boldsymbol{\alpha}=[\alpha_1,\alpha_2,\dots,\alpha_n]^T$, $\alpha_i\in(0,1),\forall i$ and
$$T_t=\int_0^t\boldsymbol\alpha^TQ_sds,S_t=t-T_t=\int_0^t(\boldsymbol{1}-\boldsymbol{\alpha})^TQ_sds.$$
Obviously, $\{T_t\}_{t\ge0},\{S_t\}_{t\ge0}$ are increasing processes. Let $\{X_t\}_{t\ge0},\{Y_t\}_{t\ge0}$ be 2-dimensional standard Brownian motion independent with $Q_t$. Then from Corollary \ref{XYT independent}, we have $\{X_{T_t}+Y_{S_t}\}_{t\ge0}$ and $\{X_{T_t}-Y_{S_t}\}_{t\ge0}$ are two correlated Brownian motions.
\end{Example}

\subsection{A New Method for Simulation of Correlated Brownian motions}

Simulation is also an important part of constructing correlated Brownian motions. In this section, a new way to simulate correlated Brownian motions is given by the common decomposition method. The local correlation model characterize the correlation in the micro view and only focus on the correlation at the moment; however, the common decomposition characterize the correlation over the entire period of time, which is from the macro view. This difference of two methods may bring advantages of the new simulation method compared with the simulation method from local correlation model.


One of the most common simulation method for local correlation model is Euler-Maruyama scheme, see \cite{kloeden2013numerical}. Firstly, given a partition $\Pi$ of $[0,t]$, let
\begin{equation}\label{simulation scheme}W_t^{\Pi}=\int_0^t\rho_u^{\Pi}dB_u+\int_0^t\sqrt{1-(\rho_u^{\Pi})^2}dZ_u
=\sum_{k=0}^{n-1}(\rho_{t_k}\Delta B_{t_k}+\sqrt{1-\rho_{t_k}^2}\Delta Z_{t_k}),\end{equation}
where $\Delta B_{t_k}=B_{t_{k+1}}-B_{t_k},\Delta Z_{t_k}=Z_{t_{k+1}}-Z_{t_k}$ and $\{\rho_u^\Pi\}_{0\le u\le t}$ is defined as
$$\rho^{\Pi}_u=\rho_{t_i},\ t_i\le u<t_{i+1}.$$
Secondly, simulate $(B,W)$ by applying \eqref{simulation scheme}. Thus, the simulation result is $(B,W^\Pi)$ eventually, and there always exist simulation errors.

Under the condition that $X$, $Y$ and $T$ are mutually independent, Table \ref{comparing simulation} and Table \ref{comparing simulate path} show the specific steps of simulation by common decomposition method when we do not have the explicit expression of $T$'s distribution. As a comparison, the Euler-Maruyama scheme of local correlation model is also shown in the second column of Table \ref{comparing simulation} and Table \ref{comparing simulate path}. The common decomposition of $(B,W^\Pi)$ is denoted as $(X^\Pi,Y^\Pi,T^\Pi)$.
From Table \ref{comparing simulation}, compared with Euler-Maruyama scheme, the differences and the advantages of common decomposition method are as follow:
\begin{itemize}
\item If the trajectory is not necessary, and we only need $B_t$ and $W_t$ at time $t$, common decomposition method can reduce the time of simulations. If $\rho_t$ is a stochastic process, we have to simulate $3n$ random numbers in Euler-Maruyama scheme, i.e. $\Delta B_{t_i}$, $\Delta Z_{t_i}$, $\rho_{t_i}$, $i=0,1,\dots,n-1$. However, in common decomposition method we only need to simulate $n+2$ random numbers, i.e. $T^{\Pi}_{t_1},T^{\Pi}_{t_2},\dots,T^{\Pi}_{t_n}$, $X_{T^{\Pi}_t}^\Pi$ and $Y_{S^{\Pi}_t}^\Pi$.
\item If we have the explicit expression of $T$'s distribution, we can simulate $T_t$ directly, then we only need to simulate $X_{T_t}$ and $Y_{S_t}$, hence simulation can be reduced to 3 times.
\item The simulation error can be controlled as long as the simulation error of $T_t^{\Pi}$ can be controlled, since
$$E|X_{T_t}-X_{T_t^{\Pi}}|^2=E|T_t-T_t^{\Pi}|\le (E|T_t-T_t^{\Pi}|^2)^{\frac12}.$$
Therefore, if the explicit expression of $T$'s distribution is obtained, one can simulate $T$ directly, and then simulate $X_T$ and $Y_S$ with the similar steps in Table \ref{comparing simulation} and Table \ref{comparing simulate path}. Then there is no simulation error for $T$, hence we can simulate $(B,W)$ accurately while this is impossible for local correlation model. If the explicit expression of $T$'s distribution is inexplicit, the simulation error of two methods is the same, because both the methods simulate $(B,W^\Pi)$.
\end{itemize}


From Table \ref{comparing simulate path} if the trajectory is needed, and we do not have the explicit expression of $T$'s distribution , there is little difference between the two simulation methods.

\begin{threeparttable}[htbp]
\caption{Simulate $(B_t,W_t)$ for a given $t$ (Explicit expression of $T$'s distribution is unobtained)}
\label{comparing simulation}
\begin{tabular}{p{1cm}<{\raggedright}p{7.6cm}<{\raggedright}p{7.6cm}<{\raggedright}}
  \toprule
   &Common decomposition method & Local correlation model (Euler-Maruyama scheme)\\
  \midrule
   Step 1&Simulate $T^{\Pi}_{t_1},T^{\Pi}_{t_2},\dots,T^{\Pi}_{t_n}$ in order\tnote{1} & Simulate $\rho_{t_0},\rho_{t_1},\dots,\rho_{t_{n-1}}$ in order\\
   Step 2&Simulate $X_{T^{\Pi}_t}^\Pi$ and $Y_{S^{\Pi}_t}^\Pi$\tnote{2} & Simulate $\Delta B_{t_0}$, $\Delta B_{t_1}$, $\dots$, $\Delta B_{t_{n-1}}$ and $\Delta Z_{t_0}$, $\Delta Z_{t_1}$, $\dots$, $\Delta Z_{t_{n-1}}$\\
   Step 3&Calculate $(B_t,W^{\Pi}_t)$ & Calculate $(B_t,W^{\Pi}_t)$\\
  \bottomrule
\end{tabular}
\begin{tablenotes}

\item[1]{According to $T^\Pi_t=\frac{t+\sum_{i=0}^{n-1}\rho_{t_i}\Delta t_i}{2}$, complexity of simulating $T^{\Pi}_{t_0},T^{\Pi}_{t_1},\dots,T^{\Pi}_{t_n}$ is equal to simulating $\rho_{t_0},\rho_{t_1},\dots,\rho_{t_{n-1}}$.}
\item[2]{Under the condition of $T_{t}^\Pi$, $X_{T^{\Pi}_t}^\Pi$ and $Y_{S^{\Pi}_t}^\Pi$ are independent normal distributions with mean zero and variance $T^\Pi_{t}$ and $S^\Pi_{t}$ respectively.}
\end{tablenotes}
\end{threeparttable}

\begin{threeparttable}[htbp]
\caption{Simulate trajectory of $(B,W)$ in $[0,t]$ (Explicit expression of $T$'s distribution is unobtained)}
\label{comparing simulate path}
\begin{tabular}{p{1cm}<{\raggedright}p{7.6cm}<{\raggedright}p{7.6cm}<{\raggedright}}
  \toprule
   &Common decomposition method  & Local correlation model (Euler-Maruyama scheme)\\
  \midrule
   Step 1&Simulate $T^{\Pi}_{t_1},T^{\Pi}_{t_2},\dots,T^{\Pi}_{t_n}$ in order & same with Step 1 in Table \ref{comparing simulation}\\
   Step 2&Simulate $\Delta X^{\Pi}_{T^\Pi_{t_0}}$, $\Delta X^{\Pi}_{T^\Pi_{t_1}}$, $\dots$, $\Delta X^{\Pi}_{T^\Pi_{t_{n-1}}}$ and $\Delta Y^{\Pi}_{S^\Pi_{t_0}}$, $\Delta Y^{\Pi}_{S^\Pi_{t_1}}$, $\dots$, $\Delta Y^{\Pi}_{S^\Pi_{t_{n-1}}}$\tnote{1} & same with Step 2 in Table \ref{comparing simulation}\\
   Step 3&Calculate $B_{t_1},B_{t_2},\dots,B_{t_n}$ and $W^{\Pi}_{t_1},W^{\Pi}_{t_2},\dots,W^{\Pi}_{t_n}$ & Calculate $B_{t_1},B_{t_2},\dots,B_{t_n}$ and $W^{\Pi}_{t_1},W^{\Pi}_{t_2},\dots,W^{\Pi}_{t_n}$\\
  \bottomrule
\end{tabular}
\begin{tablenotes}
\item[1]{Under the condition of $T_{t_0}^\Pi,T_{t_1}^\Pi,\dots,T_{t_{n-1}}^\Pi$, the random variables $\Delta X^\Pi_{T^\Pi_{t_0}},\Delta X^\Pi_{T^\Pi_{t_1}},\dots,\Delta X^\Pi_{T^\Pi_{t_{n-1}}},\Delta Y^\Pi_{S^\Pi_{t_0}},\Delta Y^\Pi_{S^\Pi_{t_1}},\dots,\Delta Y^\Pi_{S^\Pi_{t_{n-1}}}$ are independent normal distributions with mean zero and variance $\Delta T^\Pi_{t_0},\Delta T^\Pi_{t_1},\dots,\Delta T^\Pi_{t_{n-1}}$, $\Delta S^\Pi_{t_0},\Delta S^\Pi_{t_1},\dots,\Delta S^\Pi_{t_{n-1}}$ respectively.}
\end{tablenotes}
\end{threeparttable}

\begin{Example}Take parameters as follow,
\begin{equation}Q_0=[1,0,0]^T,\boldsymbol{\alpha} = [0.3,0.6,0.9]^T,\boldsymbol A=\begin{bmatrix}-1 & 0.8 & 0.2\\0.4&-1&0.6\\0.3&0.7&-1\end{bmatrix},t=1,\Delta t_i=0.01,\forall i.\label{parameters}\end{equation}
Figure \ref{simulate T}, Figure \ref{simulate X_T and Y_S}, Figure \ref{calculate B and W} display how we simulate the trajectory of $(B,W)$ in $[0,t]$ through common decomposition method (explicit expression of $T$'s distribution is unobtained) step by step
\begin{figure}[htbp]
\centering
\subfigure[Step 1: Simulate $T_t$]{
\label{simulate T}
\begin{minipage}{5cm}
\centering
\includegraphics[width=5cm,height=4cm]{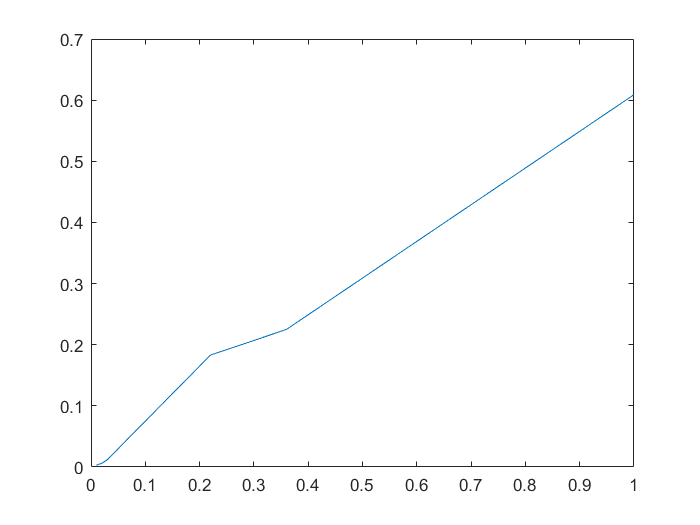}
\end{minipage}
}
\subfigure[Step 2: Simulate $X_{T_t}$ and $Y_{S_t}$]{
\label{simulate X_T and Y_S}
\begin{minipage}{5cm}
\centering
\includegraphics[width=5cm,height=4cm]{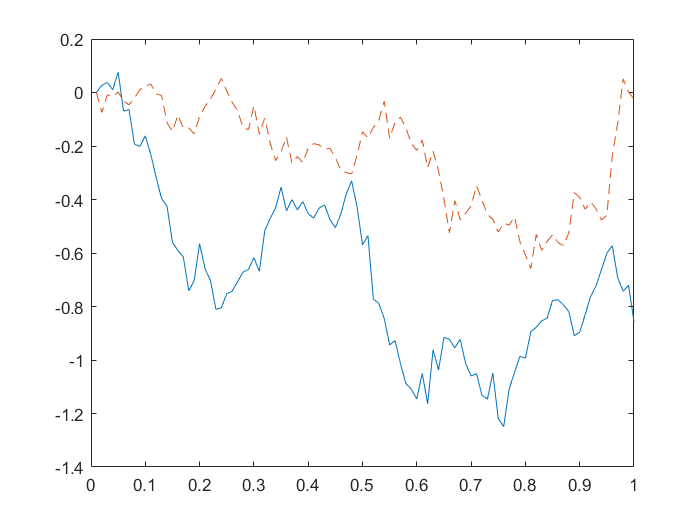}
\end{minipage}
}
\subfigure[Step 3: Calculate $B_t$ and $W_t$]{
\label{calculate B and W}
\begin{minipage}{5cm}
\centering
\includegraphics[width=5cm,height=4cm]{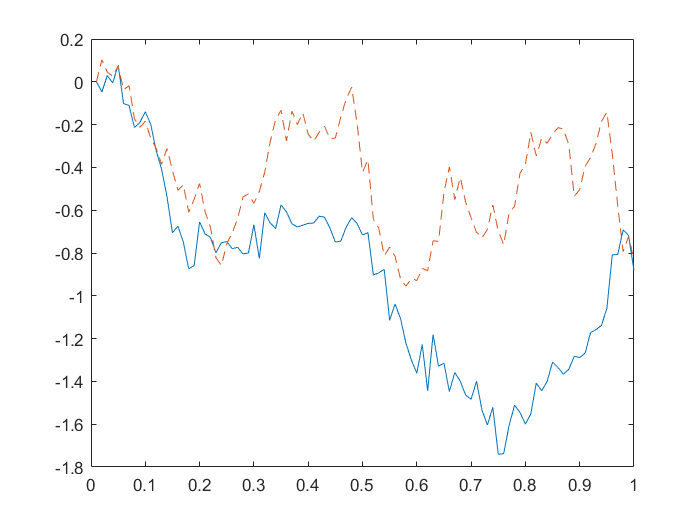}
\end{minipage}
}
\caption{Simulate $(B_t,W_t)$ by Common Decomposition Method (Explicit expression of $T$'s distribution is unobtained)}
\label{call on min calibration}
\end{figure}
\end{Example}

We consider the regime switching model in Example \ref{Q process}. Thanks to \eqref{P(t)}, simulation for regime switching model is feasible. Take the same parameters as in \eqref{parameters}, we calculate the expectation of $B_t+W_t$ by simulating $(B_t,W_t)$\footnote{Note that we do not need to simulate the trajectory here.} with $N=5000$ replications. We implement Monte Carlo methods by MATLAB2017b with a Core i7 2.8GHZ CPU.  

Table \ref{two simulation methods} shows that the standard deviation of two methods are very close, hence their simulation error are truly close. And common decomposition method runs much faster than local correlation model with Euler-Maruyama scheme.
\begin{table}[htbp]
\centering
\caption{Comparing two simulation methods}
\label{two simulation methods}
\begin{tabular}{cccc}
  \hline
   & $E(B_t+W_t)$ &Std Dev& Running time\\
  \tabincell{c}{Common decomposition method\\ (explicit expression of $T$'s distribution is unobtained)} &-0.0034 & $1.8593\times10^{-2}$ &3.1868 seconds\\
  \tabincell{c}{Local correlation model\\ (Euler-Maruyama scheme)} &0.0265 & $1.8561\times10^{-2}$ &11.4762 seconds\\
  \hline
\end{tabular}
\end{table}


\section{Financial Derivatives Valuation by Applying the Common Decomposition Method}\label{Pricing Financial Derivatives by Decomposition of Two Correlated Brownian Motions}

In the previous two sections, the common decomposition of two Brownian motions is considered, in which dependence structure could be very general. We showed how to decompose Brownian motions $(B, W)$ to a triplet $(X,Y,T)$, and we also answered how to construct two correlated Brownian motions from a given triplet $(X,Y,T)$. In this section, we will apply the common decomposition method to study the pricing problem of some typical two-factor derivatives that modeled by two correlated Brownian motions. We first give two examples showing direct usage of the common decomposition triplet $(X,Y,T)$ in pricing covariation swap, covariation option and Quanto option. And then we will focus on the pricing problem of two-color rainbow options. There are several typical examples for two-color rainbow options, one is given by option-bonds, see \cite{stulz1982options} for details; besides, a special kind of two-color rainbow options, spread options, are ubiquitous in financial markets, including equity, fixed income, foreign exchange, commodities and energy markets, \cite{carmona2003pricing} present a overview of examples and common features of spread options.


For simplicity, we assume that $X,\ Y$ and $T$ are mutually independent in this section, i.e., ${\rho}$ is independent from ($B$, $Z$) in the local correlation model by Proposition \ref{independent local correlation imply independent time change}.  This assumption is not so rigorous as to go against the reality. For example, in \cite{Ma2009Pricing}, when considering the pricing problem of foreign equity options with stochastic correlations, the author illustrated independency of $\rho$, $B$ and $Z$ from an empirical view.

\subsection{Pricing Covariance Swap and Covariance Option}\label{section-covariance-option}
Options which depend on exchange rate movements, such as those paying in a currency different from the underlying currency, have an exposure to the correlation between the asset and the exchange rate. This risk may be eliminated by two ways, a straightforward approach is Quanto option which will be discussed in Section \ref{Quanto option}; the other approach that we focus on this section is \emph{Covariance Options} or \emph{Correlation Options}, see \cite{swishchuk2016change} for more details. By combining variance and covariance options, the realised variance of return on a portfolio can be locked in. \cite{carr1999introducing} illustrated that the covariance swaps can be constructed by options and futures, in other words, options can be perfectly hedged by covariance swaps and futures. In the following part, we consider the so called {covariance options} which is designed to cope with the covariance risks of two underlying assets.

Suppose that the prices of the two assets, $(S^1,\,S^2)$, can be characterized as
\begin{equation}
\frac{dS^1_t}{S^1_t}=\mu_1dt+\sigma_1dB_t,\ \frac{dS^2_t}{S^2_t}=\mu_2dt+\sigma_2dW_t,\label{risky assets}\end{equation}
where the drifts $\mu_i,i=1,2$ and volatilities $\sigma_i,i=1,2$ of underlying assets are assumed to be constant.


\begin{Example}[Swap and Option on Realized Covariance of Returns]\label{example of covariance swap and option}
Consider two risky assets whose prices evolve as in (\ref{risky assets}). Then according to Example \ref{example of girsanove},
$(S^1,S^2)$ could be transformed to, under proper conditions,
$$\frac{dS_t^1}{S_t^1}=rdt+\sigma_1d\tilde B_t,\frac{dS_t^2}{S_t^2}=rdt+\sigma_2d\tilde W_t,$$
where $\tilde B$ and $\tilde W$ are Brownian motions under the risk neutral measure $Q$, and $r$ denotes the constant risk free interest rate. 

Continuously compounded rate returns of two assets are $\ln(S^1_t/S^1_0)$ and $\ln(S^2_t/S^2_0)$. Accordingly, the realized covariance of returns of two underlying assets is defined as the cross variation of $\ln(S^1_t/S^1_0)$ and $\ln(S^2_t/S^2_0)$
$$Cov_R(S^1_t,S^2_t)\triangleq[\ln\frac{S^1}{S^1_0},\ln\frac{S^2}{S^2_0}]_t,$$
then the payoff of covariance swap and covariance option of the underlying equity $S^1$ and $S^2$ at expiration is
$$Cov_R(S^1_t,S^2_t)-K,$$
and
$$max\{Cov_R(S^1_t,S^2_t)-K,0\},$$
where $K$ represent the strike price. Note that
$$Cov_R(S^1_t,S^2_t)=[\ln\frac{S^1}{S^1_0},\ln\frac{S^2}{S^2_0}]_t=\int_0^t\sigma_1\sigma_2d[B,W]_t=\sigma_1\sigma_2(T_t-S_t)=\sigma_1\sigma_2(2T_t-t),$$
the price of covariance swap and covariance option only depend on the expectation and distribution of $T_t$. Note that $T$ is observable in real probability measure, and the distribution of $T$ under real probability measure and risk neutral probability measure is coincident according to Example \ref{example of girsanove}, hence we can easily obtain the distribution and expectation of $T$ from historical data and then obtain the price of covariance swap and covariance option. The result of  correlation swap and correlation option is similar.
\end{Example}

\subsection{Pricing Quanto Option}\label{Quanto option}
Quanto option is a famous cross-currency financial product trading in organized exchanges as well as in OTC. Its payoff is calculated in one currency but is settled in another currency at a fixed exchange rate. It is designed to hedge the risks of delivering foreign investments to domestic currency. Hence the correlation between the underlying price and the exchange rate plays an ultimate role in pricing. Usually, this correlation structure is modeled by two correlated Brownian motons. In Section \ref{Dependency Structure of Two Correlated Brownian Motions}, we have showed that part of the dependency of two Brownian motions could be described by $T$ in common decomposition. In the following example, we will show the essential role of $T$ in the  pricing of an European-style Quanto.

\begin{Example}
Consider an European-style Quanto. Suppose the price of underlying equity $S$ in foreign currency and the exchange rate $R$ are modeled, under the risk neutral probability in the domestic currency, as follows:
$$dS_t=\mu_1S_tdt+\sigma_1S_tdB_t,\quad dR_t=\mu_2R_tdt+\sigma_2R_tdW_t,$$
and the payoff of a Quanto put option is
$$R_0\max(K-S_{t},0).$$
Let $r_1,r_2$ represent the risk free interest under domestic currency and foreign currency respectively. Under the arbitrage free assumption in domestic currency world, any discounted asset should be a martingale in risk neutral probability. Hence, consider the bank account and stock account in foreign currency, one can get
\begin{equation}R_0=\exp(-r_1t)E\left[\exp(r_2t)R_t\right],\label{bank account}\end{equation}
\begin{equation}S_0R_0=\exp(-r_1t)E\left[S_tR_t\right].\label{stock account}\end{equation}
Note that
$$E[R_t]=R_0\exp(\mu_2t),\quad E[S_tR_t]=S_0R_0\exp\left((\mu_1+\mu_2-\frac12\sigma_1^2-\frac12\sigma_2^2)t\right)E\left[\exp(\sigma_1B_t+\sigma_2W_t)\right],$$
and under the condition (C3), we have
$$E\left[\exp(\sigma_1B_t+\sigma_2W_t)\right]=\exp\left((\sigma_1^2+\sigma_2^2)\frac t2\right)E\left[\exp\left(\sigma_1\sigma_2\int_0^t\rho_udu\right)\right].$$
After simple calculations,
$$\mu_2=r_1-r_2,\quad \mu_1=r_1-\mu_2-\frac1{t}\ln E\left[\exp{(\sigma_1\sigma_2\int_0^{t}\rho_udu)}\right].$$ According to \cite{van2006modelling} and \cite{teng2016versatile}, Quantos' price is 
$$P_{Quanto}=R_0\left(Ke^{-r_1t}N(-d_2)-S_0e^{-(r_1t-r_2t+\ln E\left[\exp{(\sigma_1\sigma_2\int_0^{t}\rho_udu)}\right])}N(-d_1)\right),$$
where
$$d_1=\frac{\log(S_0/K)+(r_2+\sigma_1^2/2)t-\ln E\left[\exp{(\sigma_1\sigma_2\int_0^{t}\rho_udu)}\right]}{\sigma_1\sqrt{t}},d_2=d_1-\sigma_1\sqrt{t}.$$
Note that
$$\ln E\left[\exp{(\sigma_1\sigma_2\int_0^{t}\rho_udu)}\right]=\ln E\left[\exp\left(\sigma_1\sigma_2(2T_t-t)\right)\right],$$
then Quantos' price is actually determined by Laplace transform of $T_t$, similar with Example \ref{example of covariance swap and option}, we can obtain the Laplace transform of $T_t$ by the distribution of $T$ from historical data.
\end{Example}

\subsection{Pricing 2-Color Rainbow Options}\label{subsection_rainbow}
In this section, we focus on a class of multi-asset options, the 2-color rainbow option which is written on the maximum or minimum of two risky assets. This kind of option was first studied in \cite{margrabe1978value}, and in \cite{stulz1982options}, the author showed its extensive applications in valuing many financial instruments such as foreign currency bonds, option-bonds, risk-sharing contracts in corporate finance, secured debt, etc.

In this part we use the same asset-price models as in Section \ref{section-covariance-option}. We find an unified and analytical expression of the prices of different rainbow options.

The payoff of a rainbow option with maturity $\tau$ may have the forms listed in Table \ref{payoff} \citep{ouwehand2006pricing}. We will demonstrate that all these types of rainbow options could be valuated through a unified approach.
\begin{table}[!htbp]
\centering
\caption{Types of rainbow option}
\label{payoff}
\begin{tabular}{cc}
\toprule
Option Style & Payoff \\
  \midrule
 \emph{Best of assets or cash}& $\max(S_{\tau}^1,S^2_{\tau},K)$\\
 \emph{Put 2 and Call 1}& $\max(S_{\tau}^1-S_{\tau}^2,0)$ \\
 \emph{Call on max}&  $\max(\max(S_{\tau}^1,S_{\tau}^2)-K,0)$\\
 \emph{Call on min}& $\max(\min(S_{\tau}^1,S_{\tau}^2)-K,0)$\\
 \emph{Put on max}& $\max(K-\max(S_{\tau}^1,S_{\tau}^2),0)$\\
 \emph{Put on min}& $\max(K-\min(S_{\tau}^1,S_{\tau}^2),0)$ \\
  \bottomrule
\end{tabular}
\end{table}


Define a 2-dimensional process $\boldsymbol{M}_t=(X_{T_t},Y_{S_t})^\top$. Similar to the cases studied in \cite{carr2004time}, the payoffs in Table \ref{payoff} could be reformulated as
$$(a_1+b_1e^{\boldsymbol{\theta}_1^\top\boldsymbol{M}_{\tau}})1_{\{\boldsymbol{c}_1^\top\boldsymbol{M}_{\tau}\le k_1\}}1_{\{\boldsymbol{c}^\top\boldsymbol{M}_{\tau}\le k\}}+(a_2+b_2e^{\boldsymbol{\theta}_2^\top\boldsymbol{M}_{\tau}})1_{\{\boldsymbol{c}_2^\top\boldsymbol{M}_{\tau}\le k_2\}}1_{\{\boldsymbol{c}^\top\boldsymbol{M}_{\tau}\ge k\}},$$
with some proper parameters $a_i,b_i,\boldsymbol c_i,\boldsymbol\theta_i,k_i,i=1,2,$ and $k$.

For example, consider the Call-on-max option, whose payoff is $\max(\max(S_{\tau}^1,S_{\tau}^2)-K,0)$, the parameters are (for $i=1,2$)
$$a_i=-K,b_i=S^i_0e^{(r-\frac12\sigma_i^2)\tau}, \boldsymbol{\theta}_1=\begin{pmatrix}\sigma_1\\\sigma_1\end{pmatrix},
\boldsymbol{\theta}_2=\begin{pmatrix}\sigma_2\\ -\sigma_2\end{pmatrix},$$
$$\boldsymbol{c}_i=-\boldsymbol{\theta}_i, k_i=-\ln\frac K{b_i},
\boldsymbol{c}=\boldsymbol
{\theta}_2-\boldsymbol{\theta}_1, k=\ln\frac{b_1}{b_2}.$$
It is easy to check that
$${\{\boldsymbol{c}_i^\top\boldsymbol{M}_{\tau}\le k_i\}}=\{S^i_{\tau}\ge K\},\{\boldsymbol{c}^\top\boldsymbol{M}_{\tau}\le k\}=\{S^1_{\tau}\ge S^2_{\tau}\}.$$

Now we can present a unified valuation approach for options with payoffs in Table \ref{payoff} through process $M$. First, for given parameters $\gamma_1,\gamma_2\in\mathbb R,\gamma_3,\gamma_4,\gamma_5\in\mathbb R^2$, an intermediate valuation function $G:\,\mathbb R^2\rightarrow \mathbb R$ is defined as
\begin{equation}\label{G-func}
G(x_1,x;\gamma_1,\gamma_2,\boldsymbol{\gamma}_3,
\boldsymbol{\gamma}_4,\boldsymbol{\gamma}_5)\triangleq E^Q\left[(\gamma_1+\gamma_2e^{\boldsymbol{\gamma}_3^\top
\boldsymbol{M}_{\tau}})1_{\{\boldsymbol{\gamma}_4^\top
\boldsymbol{M}_{\tau}\le x_1\}}1_{\{\boldsymbol{\gamma}_5^\top
\boldsymbol{M}_{\tau}\le x\}}\right],
\end{equation}
where $E^Q$ indicates the expectation under the risk-neutral measure $Q$.
It is obvious that the initial price of a rainbow option could be given by $G$ as
\begin{equation}\label{prices by G}
  e^{-r\tau}\Big[G(k_1,k;a_1,b_1,\boldsymbol{\theta}_1,\boldsymbol{c}_1,
\boldsymbol{c})+G(k_2,-k;a_2,b_2,\boldsymbol{\theta}_2,\boldsymbol{c}_2,
-\boldsymbol{c})\Big].
\end{equation}
For simplicity, we omit the parameters $\gamma_i,\, i=1,\dots,5,$ in the function expressions when there is no confusion. The following proposition gives a general rule to calculate function $G$.
\begin{Proposition}\label{fourier transform}
Suppose $X,Y$ and $T$ are mutually independent. Let $G(x_1,x),\,(x_1,x)\in \mathbb R^2$ be given as in (\ref{G-func}), and $L_t$ represent the Laplace transform of $T_t$. Then the characteristic function of $\boldsymbol{M}_{\tau}$ is as follows,
\begin{equation}\label{Phi M}\Phi_{\boldsymbol{M}_{\tau}}(z_1,z_2)=e^{-\frac12\tau z_2^2}
L_{\tau}(-\frac12(z_1^2-z_2^2)).\end{equation}
Moreover, the generalized fourier transform of $G(x_1,x),$ denoted by $\hat G(\lambda_1,\lambda)$, is given as
\begin{equation}\label{hat G}\hat G(\lambda_1,\lambda)=-\frac{\gamma_1}{\lambda\lambda_1}\Phi_{\boldsymbol{M}_{\tau}}(\lambda_1\boldsymbol{\gamma}_4+\lambda\boldsymbol{\gamma}_5)-\frac{\gamma_2}{\lambda\lambda_1}\Phi_{\boldsymbol{M}_{\tau}}(\lambda_1\boldsymbol{\gamma}_4+\lambda\boldsymbol{\gamma}_5-i\boldsymbol{\gamma}_3),\end{equation}
where $Im\lambda, Im \lambda_1>0$. In particular, if $\rho_t=\rho$ is a constant, then $L_t(z)=\exp(\frac{1+\rho}2tz)$ and $\hat G(\lambda_1,\lambda)$ can be obtained from \eqref{Phi M} and \eqref{hat G}.
\end{Proposition}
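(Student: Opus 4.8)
The plan is to reduce everything to the characteristic function of $\boldsymbol{M}_\tau=(X_{T_\tau},Y_{S_\tau})^\top$ (evaluated, where needed, at complex arguments) and then to read off $\hat G$ from it by Fubini. First I would prove \eqref{Phi M} by conditioning on the time change. Since $X$, $Y$ and $T$ are mutually independent and $S_\tau=\tau-T_\tau$, conditionally on $T_\tau$ the variables $X_{T_\tau}$ and $Y_{S_\tau}$ are independent centered Gaussians with variances $T_\tau$ and $\tau-T_\tau$; hence, for real (and, since the Gaussian moment generating function is entire, also complex) $z_1,z_2$,
$$E\bigl[e^{i(z_1X_{T_\tau}+z_2Y_{S_\tau})}\mid T_\tau\bigr]=e^{-\frac12 z_1^2T_\tau-\frac12 z_2^2(\tau-T_\tau)}=e^{-\frac12\tau z_2^2}e^{-\frac12(z_1^2-z_2^2)T_\tau}.$$
Taking expectations and identifying $E[e^{-\frac12(z_1^2-z_2^2)T_\tau}]$ with $L_\tau(-\frac12(z_1^2-z_2^2))$, where $L_t(z)=E[e^{zT_t}]$ as in the statement, yields \eqref{Phi M}. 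Because \eqref{T_t-T_s} gives $0\le T_\tau\le\tau$ almost surely, $L_\tau$ is in fact finite on all of $\mathbb{C}$; this is what makes $\Phi_{\boldsymbol{M}_\tau}$ well defined at the shifted complex arguments appearing below and guarantees that every exponential moment of a linear functional of $\boldsymbol{M}_\tau$ is finite.

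Next I would derive \eqref{hat G}. Write $U\triangleq\boldsymbol{\gamma}_4^\top\boldsymbol{M}_\tau$ and $V\triangleq\boldsymbol{\gamma}_5^\top\boldsymbol{M}_\tau$, substitute \eqref{G-func} into $\hat G(\lambda_1,\lambda)=\int_{\mathbb{R}^2}e^{i(\lambda_1x_1+\lambda x)}G(x_1,x)\,dx_1\,dx$, and interchange expectation and integral. Fubini applies because, with $Im\,\lambda_1,Im\,\lambda>0$, the integrand is dominated by $|\gamma_1+\gamma_2e^{\boldsymbol{\gamma}_3^\top\boldsymbol{M}_\tau}|\,e^{-Im(\lambda_1)x_1-Im(\lambda)x}1_{\{U\le x_1,\,V\le x\}}$, whose integral over $(\Omega,x_1,x)$ is finite by the Gaussian exponential-moment bound from the first step. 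Evaluating the inner integrals via $\int_U^{\infty}e^{i\lambda_1x_1}\,dx_1=-(i\lambda_1)^{-1}e^{i\lambda_1U}$ — the boundary term at $+\infty$ vanishing precisely because $Im\,\lambda_1>0$ — and likewise in $x$ gives
$$\hat G(\lambda_1,\lambda)=-\frac{1}{\lambda_1\lambda}E\bigl[(\gamma_1+\gamma_2e^{\boldsymbol{\gamma}_3^\top\boldsymbol{M}_\tau})e^{i(\lambda_1\boldsymbol{\gamma}_4+\lambda\boldsymbol{\gamma}_5)^\top\boldsymbol{M}_\tau}\bigr].$$
Splitting into the $\gamma_1$ and $\gamma_2$ terms and absorbing $e^{\boldsymbol{\gamma}_3^\top\boldsymbol{M}_\tau}=e^{i(-i\boldsymbol{\gamma}_3)^\top\boldsymbol{M}_\tau}$ into the exponent identifies the two expectations as $\Phi_{\boldsymbol{M}_\tau}(\lambda_1\boldsymbol{\gamma}_4+\lambda\boldsymbol{\gamma}_5)$ and $\Phi_{\boldsymbol{M}_\tau}(\lambda_1\boldsymbol{\gamma}_4+\lambda\boldsymbol{\gamma}_5-i\boldsymbol{\gamma}_3)$, which is \eqref{hat G}. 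For the constant-correlation case, Theorem \ref{decompose BM} gives $T_t=\frac{1+\rho}2t$ deterministic, so $L_t(z)=e^{\frac{1+\rho}2tz}$; substituting this into \eqref{Phi M} and \eqref{hat G} produces the stated closed form.

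The step I expect to require the most care is the bookkeeping around complex arguments: one must verify that $\lambda_1\boldsymbol{\gamma}_4+\lambda\boldsymbol{\gamma}_5-i\boldsymbol{\gamma}_3$ stays in the region where $\Phi_{\boldsymbol{M}_\tau}$ (equivalently $L_\tau$) is finite, that the Fubini interchange above is legitimate, and that the improper integrals converge for the chosen damping $Im\,\lambda_1,Im\,\lambda>0$. Since $T_\tau$ is bounded, $L_\tau$ is entire and all of these conditions collapse to the elementary Gaussian moment estimates already invoked; the real task is therefore merely to state precisely the admissible set of $(\lambda_1,\lambda)$, not to overcome any substantive analytic difficulty.
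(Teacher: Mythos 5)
Your proposal is correct and follows essentially the same route as the paper: compute $\Phi_{\boldsymbol{M}_\tau}$ by conditioning on $T_\tau$ and using the independence of $X$, $Y$, $T$, then obtain $\hat G$ by Fubini and the elementary integral $\int_U^\infty e^{i\lambda_1 x_1}dx_1=-(i\lambda_1)^{-1}e^{i\lambda_1 U}$ (valid since $Im\,\lambda_1>0$). Your added justifications — the domination argument for Fubini and the observation that $0\le T_\tau\le\tau$ makes $L_\tau$ entire — are correct refinements of details the paper leaves implicit.
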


Given Proposition \ref{fourier transform}, the function $G(x_1,x;\gamma_1,\gamma_2,\boldsymbol{\gamma}_3,
\boldsymbol{\gamma}_4,\boldsymbol{\gamma}_5)$ could be calculated by the inversion formula and numerical method, then the prices of rainbow options are obtained from (\ref{prices by G}).

\begin{Remark}
For general cases where the payoffs can not be represented as before, Proposition \ref{fourier transform} is un available. But we can still apply the  Fourier-transform method directly to pricing functionals.
For given parameters $(S_0,\tau,r,\sigma_1,\sigma_2)$, rewrite the option payoffs as  $V(y_1+B_{\tau},y_2+W_{\tau})$ , where $y_i=(\frac{r}{\sigma_i}-\frac{\sigma_i}{2})\tau, i=1,2$. Denote by $f(b,w)$ the joint probability density of $B_{\tau}$ and $W_{\tau}$ under $Q$, then the price of $V(y_1+B_{\tau},y_2+W_{\tau})$ is $$C(y_1,y_2)=\int_{-\infty}^{\infty}\int_{-\infty}^{\infty}
V(y_1+b,y_2+w)f(b,w)dbdw.$$

According to \cite{leentvaar2008multi}, the Fourier transform of $C(y_1,y_2)$ is
\begin{align*}\hat C(\lambda_1,\lambda_2)=&\int_{-\infty}^{\infty}\int_{-\infty}^{\infty}\int_{-\infty}^{\infty}\int_{-\infty}^{\infty}e^{i\lambda_1y_1+i\lambda_2y_2}V(y_1+b,y_2+w)f(b,w)dbdwdy_1dy_2\\
=&\hat V(\lambda_1,\lambda_2)E^Q\left[e^{-i\lambda_1
B_{\tau}-i\lambda_2W_{\tau}}\right],
\end{align*}
where $\hat V$ denotes the Fourier transform of $V$. In general, $\hat V$ has no explicit expression and thus usually be calculated numerically.

When the correlation coefficient of $B$ and $W$ is constant, $E^Q\left[e^{-i\lambda_1
B_{\tau}-i\lambda_2W_{\tau}}\right]$ could be calculated explicitly, $$E^Q\left[e^{-i\lambda_1
B_{\tau}-i\lambda_2W_{\tau}}\right]=
\exp\left(-(\lambda_1^2+\lambda_2^2+2\rho\lambda_1\lambda_2)\tau\right).$$
In this case, \cite{leentvaar2008multi} have put forward a numerical method to calculate $\hat V$.

When the correlation coefficient of $B$ and $W$ is not constant, we can still use similar approaches as in \cite{leentvaar2008multi} by means of common decomposition. Continuing to use the notions as before, we have
$$E^Q\left[e^{-i\lambda_1
B_{\tau}-i\lambda_2W_{\tau}}\right]=\Phi_{\boldsymbol{M}_{\tau}}
(-\lambda_1-\lambda_2,-\lambda_1+\lambda_2)=
e^{-(\lambda_1-\lambda_2)^2\tau}L_{\tau}(-2\lambda_1\lambda_2).$$
Consequently,
\begin{equation}\label{fourier transform for arbitrary payoff}\hat C(\lambda_1,\lambda_2)=\hat V(\lambda_1,\lambda_2)e^{-(\lambda_1-\lambda_2)^2
\tau}L_{\tau}(-2\lambda_1\lambda_2).\end{equation}
Hence when the Laplace transform $L_t$ of $T_t$ is known, the price can be obtained by inverse Fourier transform formula.
\end{Remark}

In the previous discussion, we considered how to calculate the price of a rainbow option. Actually, following similar approach outlined in Proposition \ref{fourier transform}, we could give a Fourier-transform method for calculating Greeks. The next corollary set forth an example of this.
\begin{Corollary}
Consider the Delta of $S^1$ for a Call-on-Max option listed in Table \ref{payoff}, which is denoted by $\Delta(S^1)$. After calculations, we have
\begin{align*}
\Delta(S^1)=&\frac1{S_0^1}\left(\frac{\partial G}{\partial x_1}(k_1,k;a_1,b_1,\boldsymbol{\theta}_1,\boldsymbol{c}_1,\boldsymbol{c})
\right.\left.+\frac{\partial G}{\partial x}(k_1,k;a_1,b_1,\boldsymbol{\theta}_1,
\boldsymbol{c}_1,\boldsymbol{c})-\frac{\partial G}{\partial x}(k_2,-k;a_2,b_2,\boldsymbol{\theta}_2,\boldsymbol{c}_2,
-\boldsymbol{c})\right)\\
&+e^{(r-\frac12\sigma_1^2)\tau}\frac{\partial G}{\partial \gamma_2}(k_1,k;a_1,b_1,\boldsymbol{\theta}_1,\boldsymbol{c}_1,
\boldsymbol{c})\\
:=&g_1(k_1,k)+g_2(k_2,-k),
\end{align*}
where
\begin{align*}g_1(k_1,k)=&\left(\frac1{S_0^1}(\frac{\partial G}{\partial x_1}+\frac{\partial G}{\partial x})+e^{(r-\frac12\sigma_1^2)\tau}\frac{\partial G}{\partial \gamma_2}\right)(k_1,k;a_1,b_1,\boldsymbol{\theta}_1,\boldsymbol{c}_1,
\boldsymbol{c}),\\
g_2(k_2,-k)=&\left(-\frac1{S_0^1}\frac{\partial G}{\partial x}\right)(k_2,-k;a_2,b_2,\boldsymbol{\theta}_2,\boldsymbol{c}_2,
-\boldsymbol{c}).\end{align*}
The Fourier transform of $g_1$ has an explicit expression as
$$\frac{ia_1}{S_0^1}(\frac1{\lambda}+\frac1{\lambda_1})
\Phi_{\boldsymbol{M}_{\tau}}(\lambda_1\boldsymbol{c}_1
+\lambda\boldsymbol{c})+(\frac{ib_1}{S_0^1\lambda_1}
-\frac{e^{(r-\frac12\sigma_1^2)\tau}}{\lambda\lambda_1})
\Phi_{\boldsymbol{M}_{\tau}}(\lambda_1\boldsymbol{c}_1+
\lambda\boldsymbol{c}-i\boldsymbol{\theta}_1),$$
and the expression of Fourier transform of $g_2$ is
$$-\frac{ia_2}{S_0^1\lambda_2}\Phi_{\boldsymbol{M}_{\tau}}
(\lambda_2\boldsymbol{c}_2-\lambda\boldsymbol{c})-\frac{ib_2}
{S_0^1\lambda_2}\Phi_{\boldsymbol{M}_{\tau}}(\lambda_2
\boldsymbol{c}_2-\lambda\boldsymbol{c}-i\boldsymbol{\theta}_2).$$
$\Delta(S_1)$ can be obtained by the inverse Fourier transform formula. Other Greeks can be derived  along the same procedure.
\end{Corollary}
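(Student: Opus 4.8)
The plan is to obtain $\Delta(S^1)$ by differentiating the rainbow‑option price \eqref{prices by G} for the Call‑on‑max payoff with respect to $S_0^1$, and then to push this derivative through the generalized Fourier transform supplied by Proposition \ref{fourier transform}. First I would isolate the $S_0^1$‑dependence in the Call‑on‑max parameters: the only quantity that varies with $S_0^1$ is $b_1 = S_0^1 e^{(r-\frac12\sigma_1^2)\tau}$ and, through it, $k_1 = -\ln(K/b_1)$ and $k = \ln(b_1/b_2)$, whereas $a_i$, $b_2$, $k_2$, $\boldsymbol\theta_i$, $\boldsymbol c_i$ and $\boldsymbol c$ are all free of $S_0^1$. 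Hence $\partial b_1/\partial S_0^1 = e^{(r-\frac12\sigma_1^2)\tau}$ and $\partial k_1/\partial S_0^1 = \partial k/\partial S_0^1 = 1/S_0^1$. Differentiating the first term $G(k_1,k;a_1,b_1,\boldsymbol\theta_1,\boldsymbol c_1,\boldsymbol c)$ of \eqref{prices by G} therefore touches the slots $x_1$ and $x$ (each with weight $1/S_0^1$) and the slot $\gamma_2$ (with weight $e^{(r-\frac12\sigma_1^2)\tau}$), while differentiating the second term $G(k_2,-k;a_2,b_2,\boldsymbol\theta_2,\boldsymbol c_2,-\boldsymbol c)$ touches only the slot $x$, via $-k$, with weight $-1/S_0^1$. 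Grouping these contributions yields precisely $\Delta(S^1) = g_1(k_1,k) + g_2(k_2,-k)$ with $g_1$ and $g_2$ as defined.

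To legitimise interchanging $\partial_{S_0^1}$ with $E^Q$ I would note that $G(x_1,x;\dots)$ is affine in $(\gamma_1,\gamma_2)$, so $\partial_{\gamma_2}G$ is simply the version of $G$ with $(\gamma_1,\gamma_2) = (0,1)$, and that $G$ is $C^1$ in $(x_1,x)$ because $\boldsymbol M_\tau = (X_{T_\tau},Y_{S_\tau})^\top$ admits a joint density under $Q$: conditionally on $T$ the two coordinates are independent centred Gaussians with variances $T_\tau$ and $S_\tau = \tau - T_\tau$, and since $X$, $Y$, $T$ are independent one integrates out $T$. A dominated‑convergence estimate then controls the difference quotients near any fixed $S_0^1 > 0$.

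The last step is carried out in the Fourier domain. Regarding $G(\cdot,\cdot;\dots)$ as a function of its first two arguments, an integration by parts — the boundary terms vanishing because the positivity of the imaginary parts of $\lambda_1$ and $\lambda$ damps $G$ at $+\infty$ while the indicators annihilate it at $-\infty$ — gives $\widehat{\partial_{x_1}G} = -i\lambda_1\hat G$ and $\widehat{\partial_x G} = -i\lambda\hat G$; and since $G$ is affine in $\gamma_2$, $\widehat{\partial_{\gamma_2}G} = -(\lambda\lambda_1)^{-1}\Phi_{\boldsymbol M_\tau}(\lambda_1\boldsymbol\gamma_4 + \lambda\boldsymbol\gamma_5 - i\boldsymbol\gamma_3)$, i.e. the second summand of \eqref{hat G} at $\gamma_2 = 1$. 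Substituting the explicit $\hat G$ of Proposition \ref{fourier transform} with the Call‑on‑max parameters into $\widehat{g_1} = \frac1{S_0^1}(\widehat{\partial_{x_1}G} + \widehat{\partial_x G}) + e^{(r-\frac12\sigma_1^2)\tau}\widehat{\partial_{\gamma_2}G}$ and $\widehat{g_2} = -\frac1{S_0^1}\widehat{\partial_x G}$, and collecting the $\Phi_{\boldsymbol M_\tau}$‑terms, produces the displayed transforms of $g_1$ and $g_2$; $\Delta(S^1)$ then follows by the generalized Fourier inversion formula, and the same recipe delivers the remaining Greeks.

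The bulk of the effort — and the place an error is most likely to slip in — is the bookkeeping: tracking which of the seven arguments of each $G$ depend on $S_0^1$, the sign reversal induced by the $-k$ and $-\boldsymbol c$ entries in the second $G$, and the factors $-i\lambda_1$, $-i\lambda$ that each spatial derivative contributes in the Fourier domain. The only genuinely analytic point is the smoothness of $G$ in $(x_1,x)$ from the second step, which hinges on $\boldsymbol M_\tau$ possessing a density; everything else is routine differentiation and substitution.
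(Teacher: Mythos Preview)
Your proposal is correct and follows exactly the route the paper has in mind: the corollary is presented there as a direct computation (``After calculations, we have\ldots'') with no separate proof, and your chain-rule differentiation of \eqref{prices by G} through the $S_0^1$-dependence of $b_1$, $k_1$, $k$, followed by the Fourier-side identities $\widehat{\partial_{x_1}G}=-i\lambda_1\hat G$, $\widehat{\partial_{x}G}=-i\lambda\hat G$ and $\widehat{\partial_{\gamma_2}G}=$ the $\gamma_2$-coefficient of \eqref{hat G}, is precisely that computation. You are in fact more careful than the paper in justifying the interchange of $\partial_{S_0^1}$ with $E^Q$ via the conditional-Gaussian density of $\boldsymbol M_\tau$.

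One small remark: when you collect terms for $\widehat{g_1}$ you should obtain a coefficient $\frac{ib_1}{S_0^1}\bigl(\frac1\lambda+\frac1{\lambda_1}\bigr)$ in front of $\Phi_{\boldsymbol M_\tau}(\lambda_1\boldsymbol c_1+\lambda\boldsymbol c-i\boldsymbol\theta_1)$, whereas the displayed formula shows only $\frac{ib_1}{S_0^1\lambda_1}$; this appears to be a typographical slip in the stated corollary rather than an error in your argument, since your $\widehat{g_2}$ matches exactly and the $a_1$-term matches exactly as well.
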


From the foregoing content of this section, we know that,  thanks to the common decomposition method, in order to calculate the price and Greeks of a rainbow option, we only need to find out the Laplace transform of $T_t$. We consider some specific models of $T_t$ in the following examples to give the readers more intuitive insights.


\begin{Example}\label{example of regime switch}Consider the regime switching model given in Example \ref{Q process},
%
by Lemma A.1 in \cite{JOHN2002AMERICAN}, the Laplace transform of $T_t$ is
$$L_t(z)=Ee^{zT_t}=\boldsymbol{1}^\top e^{(A+z{\rm diag}\boldsymbol{\alpha})t}Q_0,$$
where ${\rm diag}\boldsymbol{\alpha}=\begin{bmatrix}\alpha_1 & 0 &\cdots&0\\0&\alpha_2&\cdots&0\\ \vdots&\vdots&\ddots&\vdots\\ 0&0&\cdots&\alpha_n\end{bmatrix}$, $A=(a_{ij})_{n\times n}$ is the generator of $Q_t$. Then by Proposition \ref{fourier transform}, we can get the option price from $L_t(z)$. For example, if the option style is Call-on-max, then
\begin{align*}\hat G(\lambda_1,\lambda;a_1,b_1,\boldsymbol{\theta}_1,\boldsymbol{c}_1,
\boldsymbol{c})=&\frac{K}{\lambda\lambda_1}e^{-\frac12(\lambda_1\sigma_1-\lambda\sigma_2)^2\tau}\boldsymbol{1}^\top e^{(A-2\lambda_1\lambda\sigma_1\sigma_2{\rm diag}\boldsymbol{\alpha})\tau}Q_0\\
&-\frac{S_0^1}{\lambda\lambda_1}e^{(r-\frac12\sigma_1^2-\frac12(\lambda_1\sigma_1+i\sigma_1-\lambda\sigma_2)^2)\tau}\boldsymbol{1}^\top e^{(A-2(\lambda_1+i)\lambda\sigma_1\sigma_2{\rm diag}\boldsymbol{\alpha})\tau}Q_0.\end{align*} 
\end{Example}

In the next example, $\{T_t\}_{t\ge0}$ has a specific modelling through a bounded function of some stochastic processes and the Laplace transform of $T_t$  is given by a PDE.
\begin{Example}
Suppose that $f$ is a bounded function with values in $(0,1)$ and $\nu$ is a diffusion process satisfying the following SDE $$d\nu_t=\mu(t,\nu_t)dt+\sigma(t,\nu_t)dZ_t,$$
where $\{Z_t\}_{t\ge0}$ is a Brownian motion and $\mu(t,x),\sigma(t,x)$ are determined functions such that the SDE have an unique solution.

Let $T_t=\int_0^tf(\nu_s)ds$. By Feynman-Kac formula, the Laplace transform of $T_t-T_s$ for fixed $t$ under the condition $\nu_s$, which is denoted by $L(s,\nu_s;t,z)$, satisfies the following PDE:
\begin{equation}\label{PDE2}\frac{\partial L}{\partial s}+\mu(t,\nu)\frac{\partial L}{\partial \nu}+\frac12\sigma(t,\nu)^2\frac{\partial^2 L}{\partial \nu^2}+zf(\nu)L=0,\end{equation}
with terminal condition $L(t,\nu_t;t,z)=1.$ The solution of \eqref{PDE2} are related with Sturm-Liouville problem, see \cite{Polyanin2002Handbook} 1.8.6.5 and 1.8.9 for more details.

Particularly, the stochastic correlation model considered in \cite{Teng2016Modelling} is equivalent to the special case $f(x)=\frac{1+\tanh(x)}2$. The model discussed in \cite{Ma2009Pricing} is equivalent to $f(x)=\frac{1+x}2$ and $\nu$ is a bounded Jacobi process
$$d\nu_t=\kappa(\theta-\nu_t)dt+\sigma_{\nu}\sqrt{(h-\nu_t)(\nu_t-l)}dZ_t.$$
the boundary for bounded Jacobi process is $[l,h]$ when
$$\kappa(\theta-l)>\frac12\sigma_{\nu}^2(h-l),\kappa(h-\theta)>\frac12\sigma_{\nu}^2(h-l).$$

\end{Example}

Sometimes, there is no closed-form solution of financial derivatives, so Monte Carlo method is needed. The simulation method through common decomposition have been illustrated in Section \ref{Construction of Two Correlated Brownian Motions}.

\section{Numerical Results}\label{Numerical Results}

In literatures that study the pricing problem of two-assets derivatives with models driven by two Brownian motions, $B\ \mbox{and}\ W$,  it is a commonly used assumption that the local correlation of $B\, \mbox{and}\, W$ is a constant, i.e., $d[B,W]_t=\rho dt$ for some $\rho\in[-1,1]$. However, as we have mentioned before, this assumption is inconsistent with empirical studies. For example, based on data from different markets around the world, \cite{Chiang2007Dynamic}, \cite{syllignakis2011dynamic} and \cite{junior2012correlation} all found that the correlation coefficients changed as time and economic situations changed.
Then it is natural to ask, when the actual correlation coefficient is dynamic and stochastic, how much it would influent the pricing error if we still applied the constant-correlation model?

In this part, we consider the price of two-color rainbow options as an example. We investigate the difference of option prices under constant and dynamic correlations by numerical experiments and try to summarize when this difference is negligible or nonnegligible.

Since our concern is in the correlation of underlying assets, we assume for simplicity that all coefficients of the underlying assets, except for the local correlation,  are constants. Thus the underlying prices are assumed to satisfy (under the risk neutral probability)
$$\frac{dS_t^1}{S_t^1}=rdt+\sigma_1dB_t,\frac{dS_t^2}{S_t^2}
=rdt+\sigma_2dW_t.$$
For the dependency structure of $(B,W)$, we apply the regime switching model in this section which has been introduced in Example \ref{Q process} and Example \ref{example of regime switch}. Suppose that the market has three different states described by a finite-state-space Markov process $\{Q_t\}_{t\geq 0}$ with an initial value $Q_0$ and a transition rate matrix $A$. Thus the local correlation process of $B$ and $W$ is as follows,
$$\rho_t=2\alpha^\top Q_t-1.$$
Note that $\frac{d\left[\log S^1,\,\log S^2\right]_t/dt}{\sigma_1\sigma_2}=\rho_t$, hence $\alpha\in(0,1)^3$ indicates the switching states for local correlation coefficient of log prices. For example, if $\alpha=[0.3,0.6,0.9]^\top$, at any time $t$, $\rho_t$ switches among $-0.4, 0.2$ and $0.8$ according to the market conditions. In the rest of this section, parameters are taken as follow unless otherwise specified,
\begin{equation}\label{chap5 generator}
r=0.05,S_0^1=100,S_0^2=120,\sigma_1=0.2,\sigma_2=0.3 A=
\begin{bmatrix}-1 & 0.8 & 0.2\\0.4&-1&0.6\\0.3&0.7&-1
\end{bmatrix}.
\end{equation}

Consider the two-color rainbow options as in Section \ref{subsection_rainbow}, note that, under the above model, if $\rho$ is considered as a constant, the option prices can be given in closed form as in \cite{stulz1982options}. While for the actual case with a regime-switching $\rho$, we can apply Proposition \ref{fourier transform} to derive the true prices. Following the notations in Proposition \ref{fourier transform}, by the inversion fourier formula, we have
\begin{equation}G(x_1,x)=\int_{-\infty+i\lambda_{i}}^{\infty+i\lambda_i}\int_{-\infty+i\lambda_{1i}}^{\infty+i\lambda_{1i}}e^{-i\lambda_1x_1-i\lambda x}\hat G(\lambda_1,\lambda)d\lambda_1d\lambda,\label{inversion fourier formula}\end{equation}
where $\lambda_{i},\lambda_{1i}$ denote the imaginary part of $\lambda$ and $\lambda_1$.

Since $\hat G(\lambda_1,\lambda)$ is well defined only for $\lambda_1,\lambda $ with strictly positive imaginary, we choose $\lambda_{1i}=\lambda_i=1$ in the subsequent numerical experiment. Note that \eqref{inversion fourier formula} remains valid for any $\lambda_{1i},\lambda_i>0$. And we approximate \eqref{inversion fourier formula} by
$$G(x_1,x)\approx\sum_{j=-N_1}^{N_1}\sum_{k=-N}^Ne^{\lambda_{1i}x_1+\lambda_ix-i(j\eta_1x_1+k\eta x)}\hat G(j\eta_1+i\lambda_{1i},k\eta+i\lambda_i)\eta_1\eta,$$
where we set $N_1=N=1000$ and $\eta_1=\eta=0.1$.

Suppose that the contract life of the option is $\tau=0.25$ and the strike is $K=90$. Let $\alpha=[0.6,0.6,0.6]^\top$, then the regime switching model degenerates to the constant correlation model. We verified the group of parameters are accurate enough and the difference of option price obtained from \cite{stulz1982options} and Proposition \ref{fourier transform} is smaller than $10^{-13}$.

%
%

In the following subsection, we compare the option prices induced by the constant-$\rho$ models in \cite{stulz1982options} to the prices given by the regime-switching-$\rho$ models through (\ref{prices by G}). Since we have assumed the regime-switching case to be actual, the latter could be regarded as the ``true" prices. And thus the comparison results will indicate how large the pricing error would be when we substituted a constant for the original nonconstant $\rho$. For clarity, we make comparison in an ideal situation that the investor knows exactly the other coefficients except for $\rho$.\footnote{In empirical, the risk free interest $r$ can be observed and $\sigma_1,\sigma_2$ can be calibrated precisely from vanilla options.}

\subsection{Numerical Experiments of Pricing Rainbow Options}\label{Numerical Experiments of Pricing Rainbow Options}
In Section \ref{Comparing Constant and Dynamic Correlation Model in Option Pricing}, we compare the constant correlation model and dynamic correlation model in a more theoretical way. We assume that the investor estimates $\rho$ historically from the observed stock prices. The numerical results in this section show that there may be big differences between the prices of two models. In Section \ref{Calibrating Constant Correlation Model into Dynamic Correlation Model}, we adopt an approach more close to the practical procedure. We suppose the investor calibrate the constant correlation model to option prices he observed (which were calculated from the regime-switching model). And then the calibrated model is used for pricing. And it shows that there will be a big pricing error by using constant correlation model, especially for those options deep out of the money. This is in line with the results given in \cite{costin2016expectations} for CDS options.
\subsubsection{Numerical Analysis of Constant and Nonconstant Correlation in Pricing Rainbow Options}\label{Comparing Constant and Dynamic Correlation Model in Option Pricing}

In this section, we estimate a constant correlation coefficient $\hat{\rho}$ from the historical data which are given by the regime switching model, and then calculate the option prices derived from this $\hat{\rho}$ \footnote{We have illustrated in Remark \ref{estimate rho from real probability measure} that it is feasible to apply directly the estimated $\hat\rho$ from historical data into option pricing.}. By comparing these option prices with those deriving directly from the regime switching model, we can get a general idea of the error we would make when applying constant correlation model in the situations where the actual correlation coefficients are dynamic and stochastic. For the robustness of the results, we consider the comparisons in different cases with different vector $\alpha$s.

Since we have assumed that all the  other parameters can be obtained precisely, the investor actually could  get the data of $(B,W)$ by observing prices of the underlying assets. Suppose that the investor has got these historical data of a long term and with a relatively high frequency as $(B_{t_i},W_{t_i}),i=0,1,\dots,n$, where $0=t_0<t_1<\dots<t_n=t$. According to definition, the estimated constant correlation based on data till time $t$ is
$$\hat{\rho}\triangleq\frac{\sum_{i=0}^{n-1}\Delta B_{t_i}\Delta W_{t_i}}t.$$
Note that, setting $\Delta t=\max\{t_{i+1}-t_i|i=0,\dots,n\}$, we have 
$$\frac{\sum_{i=0}^{n-1}\Delta B_{t_i}\Delta W_{t_i}}t\xrightarrow[\Delta t\rightarrow 0]{P}\frac{[B,W]_t}{t}=\frac{T_t-S_t}{t}
=\frac1t\int_0^t(2\boldsymbol{\alpha}-\boldsymbol{1})^{\top}Q_sds,$$
and according to the Ergodic Theorem of Markov processes,
$$\lim_{t\to\infty}\frac1t\int_0^tQ_sds=\boldsymbol{\pi},$$
where $\boldsymbol{\pi}$ denotes the stationary distribution of the Markov process $Q_t$.

Therefore, as long as we assume these data to be long-term and with a relatively high frequency, we always have
\begin{equation}\label{estimate rho}\hat{\rho}\approx\frac1t\int_0^t(2\boldsymbol{\alpha}-\boldsymbol{1})^{\top}Q_sds\approx2\boldsymbol{\alpha}^{\top}\boldsymbol{\pi}-1 \footnote{Note that the stationary distribution $\boldsymbol{\pi}$ satisfies the following equations
$$A^{\top}\boldsymbol{\pi}=0,\ \boldsymbol{\pi}\boldsymbol{1}^{\top}=1,$$ where $A$ denotes the generator of $Q$. In our numerical experiments, $A=\begin{bmatrix}-1 & 0.8 & 0.2\\0.4&-1&0.6\\0.3&0.7&-1\end{bmatrix}$, and then $\boldsymbol{\pi}=[0.2636,0.4273,0.3091]^{\top}.$}.\end{equation}

In this case, no matter how violently the correlation coefficient switches over time, the investors may have similar estimates from long-term historical data. And thus the option prices calculated along these estimates may deviate a lot from the ``true" prices. We will show these prices' deviations by the relative error defined as
\begin{equation}
\mbox{Relative error}=\frac{\mbox{Price with constant $\hat \rho$}-\mbox{Price with regime switching $\rho$}}{\mbox{Price with regime switching $\rho$}}.\label{relative error}
\end{equation}

In the numerical experiments, for each case, we simulate a path of $(B,W)$ to present the historical data, where we choose $t=20$ and $\Delta t_i=0.05,\forall i$. In order to make consistent comparison, we randomly choose $5$ different $\boldsymbol{\alpha}$, which all satisfy the condition $2\boldsymbol{\alpha}^{\top}\boldsymbol{\pi}-1=0.2$. That is to say, by \eqref{estimate rho}, the option prices calculated from the estimated coefficients are similar since in all cases $\hat{\rho}\approx0.2$. While on the contrary, we shall see that the prices calculated from original model are quite different from each other.

We list the numerical results in Table \ref{numerical result Q_0}, in which the second column shows the ``true" prices calculated from the original regime switching model, the third column shows the $\hat\rho$ estimated from the "historical data", the forth column shows the prices obtained by constant correlation model with $\hat\rho$, while the last column shows the relative errors defined as in (\ref{relative error}).
\begin{table}[htbp]
\centering
\small
\caption{Option pricing with all history data}
\label{numerical result Q_0}
\begin{tabular}{ccccc}
  \hline
  $\boldsymbol{\alpha}$ &\quad True Prices \quad & $\hat{\rho}$ & Prices with $\hat{\rho}$ & Relative errors \\
  \hline
  $[0.7665,0.7551,0.2436]^{\top}$&37.2642&0.2377&35.2623&-5.37\%\\
  $[0.8068,0.8772,0.0404]^{\top}$&38.2361&0.2103&35.3671&-7.50\%\\
  $[0.6824,0.6178,0.5051]^{\top}$&35.9230&0.2436&35.2398&-1.90\%\\
  $[0.5559,0.4063,0.9054]^{\top}$&33.8134&0.1911&35.4403&4.81\%\\
  $[0.6,0.6,0.6]^{\top}$&35.4064&0.2177&35.3388&0.19\%\\
  \hline
\end{tabular}
\end{table}

It is obviously from Table \ref{numerical result Q_0} that there may be big pricing errors when using constant correlation coefficient estimated from historical data. In this numerical example, although all the other coefficients were assumed to induce zero error, the relative errors for pricing can mount to unacceptable levels. It is almost certain that these high errors come from the substitution of $\hat \rho$s for the real dynamic stochastic $\rho$s. As a verification, we consider the case of $\boldsymbol{\alpha}=[0.6,0.6,0.6]^{\top}$, where the regime switching model degenerates to the constant correlation model. The results are shown in the last row of the table. We can see that there is only a small relative error, $0.19\%$, which presents the technical error other than substitution of constant correlations to dynamic ones.

More specifically, we can see that in all cases the estimated $\hat\rho$s are around $0.2$, and thus the resulting option prices are around $35.3$, while the true prices deviate from as high as $38.2$ to as low as $33.8$. There would be a big unexpected loss if the investor applied the constant correlation model to value these options and used these prices as a guidance of his investments.

\subsubsection{Calibrating a Constant Correlation Model from Data Given by the Dynamic Correlation Model}\label{Calibrating Constant Correlation Model into Dynamic Correlation Model}

In this section, we investigate the difference between option prices under constant correlation model and dynamic stochastic correlation model through a more practical way. First, in practice, when considering derivatives' pricing, investors do not use coefficients estimated from historical data commonly. More often, they observe the market prices of a class of derivatives, and calibrate the theoretical model to the observed prices. In our case, the "market prices" are supposed to be given by the regime switching model, and the " theoretical model" held by investors is supposed to be the constant correlation model. And "calibration of the theoretical model" reduces to " finding the optimum $\rho$ to fit the market prices" since this is assumed to be the only unknown parameter for the theoretical model. On the other hand, just like the idea of "implied volatility", each observed option price can deduce an "\emph{implied correlation}",  $\rho_{imp}$. The change of $\rho_{imp}$ with strikes can also indicate the deviation of option prices given by constant correlation model from actual prices based on dynamic correlation.

The numerical simulations are carried out along the procedure in the following.

First, we give the prices for options with a maturity $\tau=0.25$ and strikes $K=80,90,\dots,140$ under regime switching model by the Fourier transform method. These will play the part of "initial market data" in our numerical experiment.

Then based on these data, we calibrate the constant correlation model to a proper $\rho$.\footnote{Just as before, all the other coefficients are supposed to be known exactly.} This is done by minimizing the following cumulative square error function by Gradient Descent method, \footnote{The initial value is taken as $\rho=0$. The step size is set as $|0.01/L^{\prime}(0)|$ where $L^{\prime}$ denotes the first derivative of $L$. The gradient descent method terminates when $|L^{\prime}(\rho)|$ is smaller than $10^{-4}$.}
$$L(\rho)=\sum_{n}\left(Price^{\mbox{constant}}_n
(\rho)-Price_n^{\mbox{dynamic}}\right)^2.$$

And then, the calibrated correlation coefficients are applied to the constant correlation model for pricing options with strikes $K=82,84,\dots,88$, $92,94,\dots,98$, $\dots$, $132,134,\dots,138$. The resulting prices will be compared with the prices under regime switching model.

To see the variations of implied correlation, we apply the definition of $\rho_{imp}$ given by \cite{da2007option} which satisfies
$$Price=Price^{\mbox{constant}}(\rho_{imp}),$$ to the prices given by regime switching models with more strikes $K=80,82,84,\dots,140$.

In the following, we run through the calibrating-pricing procedure for Call on Min, Call on Max, Put on Max and Put on Min options, consider their relative errors defined as in \eqref{relative error}, and calculate the implied correlations respectively. We show the results in Figures \ref{call on min calibration}-\ref{put on min calibration}. In each figure, the dotted line separates the curve into two parts, the out-of-the-money case (in figures, the left part for puts or the right for calls) and the in-the-money case. The intersection is at-the-money case.

\begin{figure}[htbp]
\centering
\subfigure[Relative error (Calibrated $\rho=-0.3190$)]{
\label{call on min RE}
\begin{minipage}{7cm}
\centering
\includegraphics[width=7cm,height=6cm]{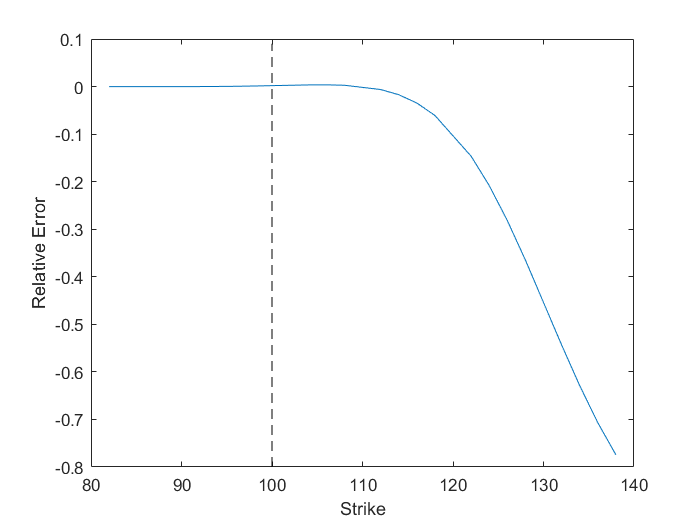}
\end{minipage}
}
\subfigure[Implied correlation]{
\label{call on min impcorr}
\begin{minipage}{7cm}
\centering
\includegraphics[width=7cm,height=6cm]{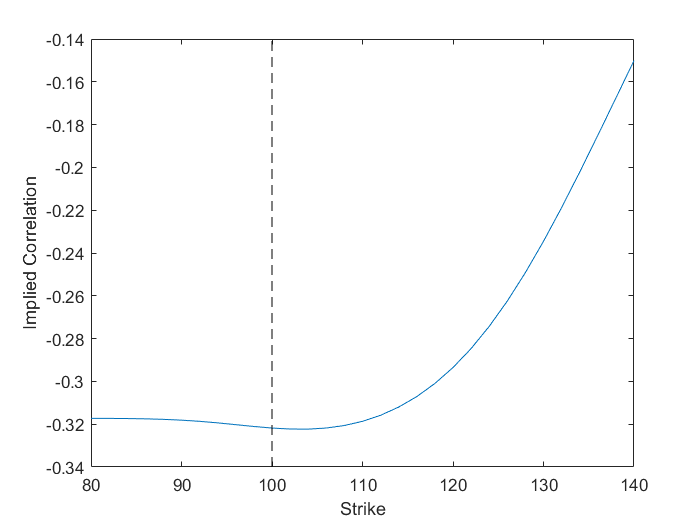}
\end{minipage}
}
\caption{Call on Min option with $Q_0=[1,0,0]^{\top}$, $\boldsymbol{\alpha} = [0.3,0.6,0.9]^{\top}$}
\label{call on min calibration}
\end{figure}

\begin{figure}[htbp]
\centering
\subfigure[Relative error (Calibrated $\rho=-0.3190$)]{
\label{call on max RE}
\begin{minipage}{7cm}
\centering
\includegraphics[width=7cm,height=6cm]{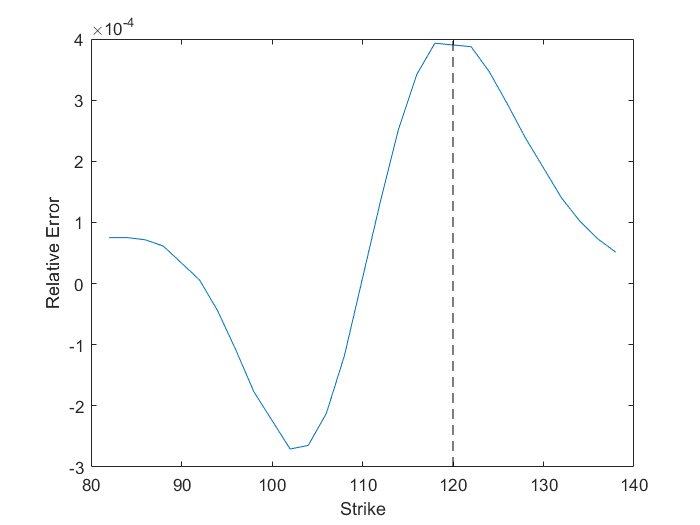}
\end{minipage}
}
\subfigure[Implied correlation]{
\label{call on max impcorr}
\begin{minipage}{7cm}
\centering
\includegraphics[width=7cm,height=6cm]{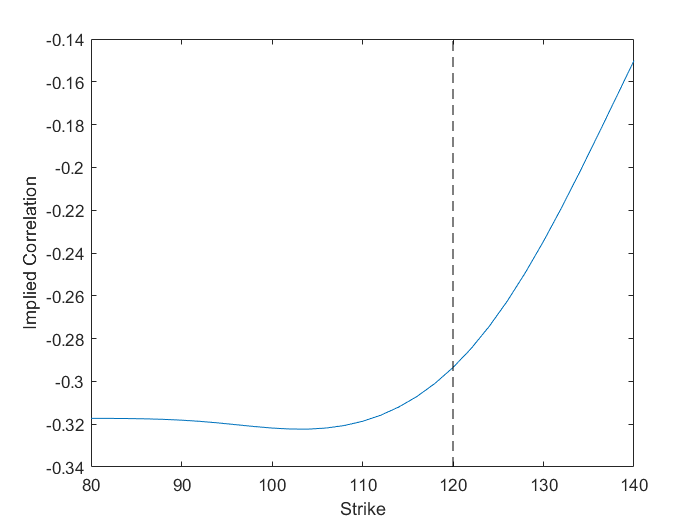}
\end{minipage}
}
\caption{Call on Max option with $Q_0=[1,0,0]^{\top}$, $\boldsymbol{\omega} = [0.3,0.6,0.9]^{\top}$}
\label{call on max calibration}
\end{figure}

\begin{figure}[htbp]
\centering
\subfigure[Relative error (Calibrated $\rho=-0.3177$)]{
\label{put on max RE}
\begin{minipage}{7cm}
\centering
\includegraphics[width=7cm,height=6cm]{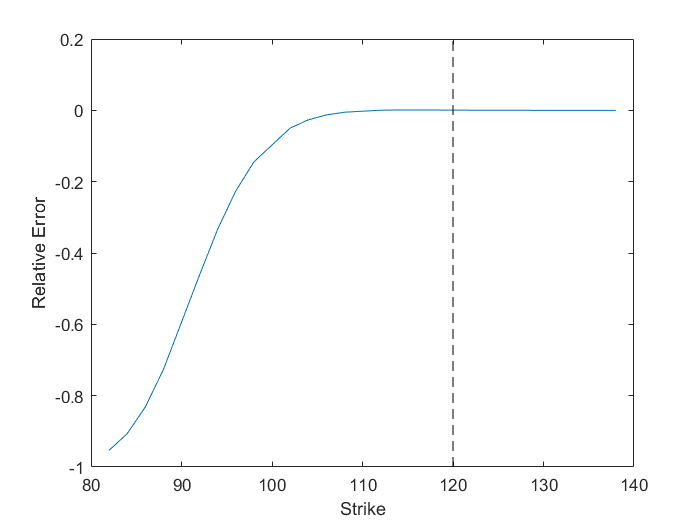}
\end{minipage}
}
\subfigure[Implied correlation]{
\label{put on max impcorr}
\begin{minipage}{7cm}
\centering
\includegraphics[width=7cm,height=6cm]{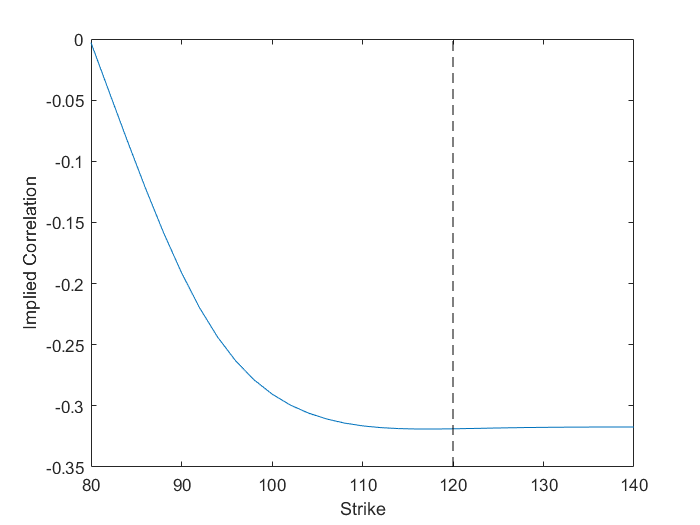}
\end{minipage}
}
\caption{Put on Max option with $Q_0=[1,0,0]^{\top}$, $\boldsymbol{\alpha} = [0.3,0.6,0.9]^{\top}$}
\label{put on max calibration}
\end{figure}

\begin{figure}[htbp]
\centering
\subfigure[Relative error (Calibrated $\rho=0.4940$)]{
\label{put on min RE}
\begin{minipage}{7cm}
\centering
\includegraphics[width=7cm,height=6cm]{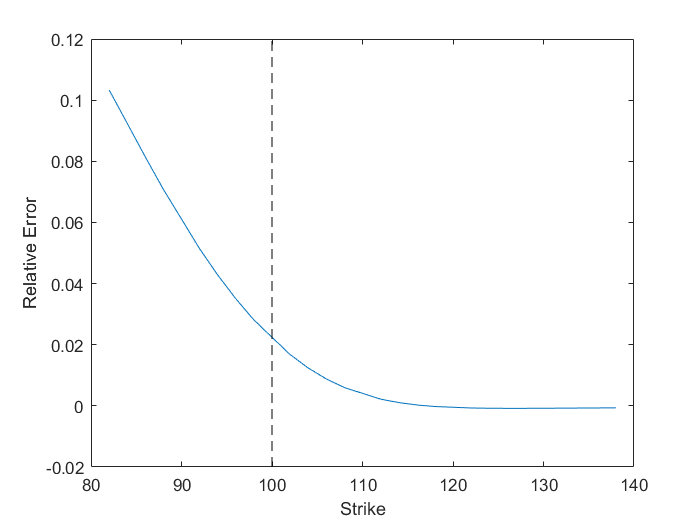}
\end{minipage}
}
\subfigure[Implied correlation]{
\label{put on min impcorr}
\begin{minipage}{7cm}
\centering
\includegraphics[width=7cm,height=6cm]{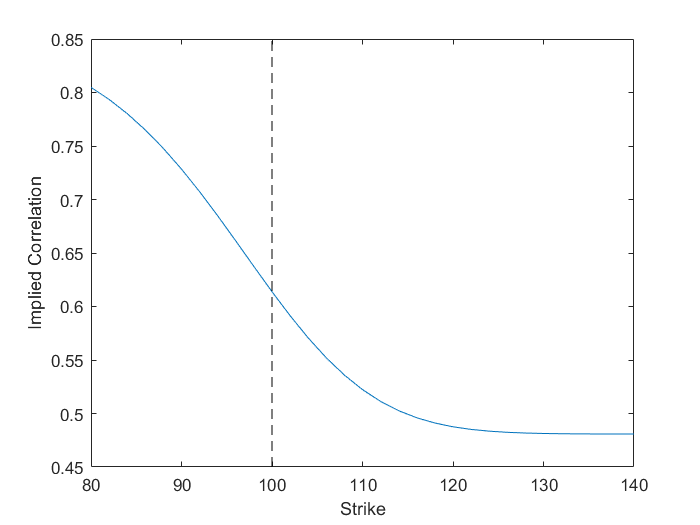}
\end{minipage}
}
\caption{Put on Min option with $Q_0=[0.2,0,0.8]^{\top}$, $\boldsymbol{\alpha} = [0.3,0.6,0.95]^{\top}$}
\label{put on min calibration}
\end{figure}

On the first try, we choose parameters $Q_0=[1,0,0]^{\top}$ and $\boldsymbol{\alpha} = [0.3,0.6,0.9]^{\top}$ to generate the regime switching model. The immediate observation is the huge pricing error for deep-out-of-the-money options of Put on Max and Call on Min. The relative error reaches more than $70\%$, which is shown in Figure \ref{call on min RE} and \ref{put on max RE}.  While for Call on Max option, the relative error is no more than $0.1\%$, as shown in Figure \ref{call on max RE}. And it is also small for Put on Min option whose figure is omitted here since the relative error always lies below the level $0.5\%$.

To see whether this is a common property or not, we change the initial regime switching model to a new one with parameters $Q_0=[0.2,0,0.8]^{\top}$ and $\boldsymbol{\alpha} = [0.3,0.6,0.95]^{\top}$, and repeat the calibrating-pricing procedure. For Call on Max, Call on Min and Put on Max, the results are really similar with the previous group of parameters and we omit the figures. But for Put on Min, the result is different from former one, relative error could be more than $10\%$ for out-of-the-money options as shown in Figure \ref{put on min calibration} which is also nonnegligible.

On the other side, for implied correlation, we can see in Figure \ref{call on min impcorr}-\ref{put on min impcorr}, the implied correlation always changes sharply for out-of-the money cases and mildly for in-the-money cases, which is similar with the calibrated $\rho$. For Call on Max options, though there are only tiny pricing errors, the implied correlations change a lot with different strikes.

\begin{figure}[htbp]
\centering
\subfigure[Relative error (Calibrated $\rho=-0.2431$)]{
\label{put on max RE t=0.5}
\begin{minipage}{7cm}
\centering
\includegraphics[width=7cm,height=6cm]{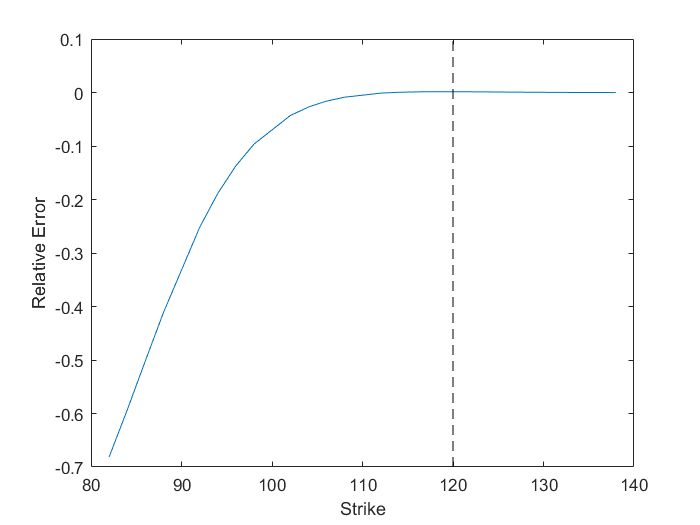}
\end{minipage}
}
\subfigure[Implied correlation]{
\label{put on max impcorr t=0.5}
\begin{minipage}{7cm}
\centering
\includegraphics[width=7cm,height=6cm]{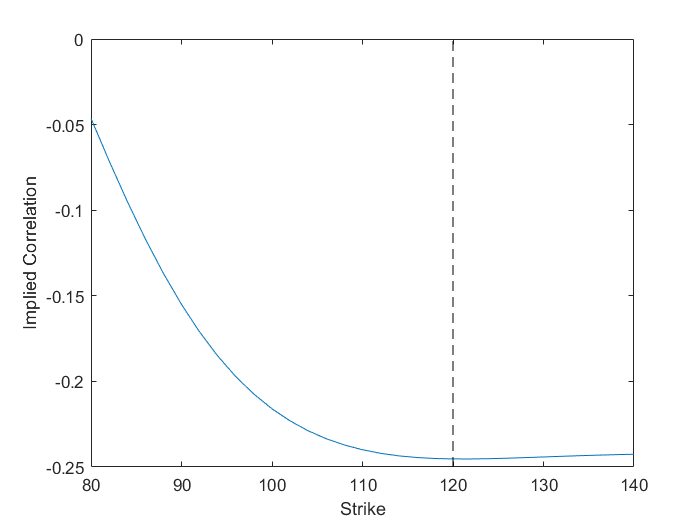}
\end{minipage}
}
\caption{Put on Max option with $\tau=0.5$}
\label{put on max calibration t=0.5}
\end{figure}

Figure \ref{put on max calibration t=0.5} investigates Put on Max options again and the maturity considered as $\tau=0.5$. Comparing with Figure \ref{put on max calibration}, we can find in Figure \ref{put on max calibration t=0.5}, the calibrated error is a little smaller and the implied correlation changes a little milder. But the main features of them are similar, this implies the maturity has little effect on our discoveries.

We will try to give a reasonable explanation for different performances of 4 kinds of rainbow options in the next section, Section \ref{Error Analysis}.  And also there in addition, we will explain why the calibrated option prices perform well for in-the-money and at-the-money options but terribly bad for deep-out-of-the-money options and why the Call on Max seems different from the other options.

\subsection{Error Analysis}\label{Error Analysis}
The pricing errors coming from setting the dynamic stochastic correlation of underlying log prices to be constant are further analyzed in this part. This analysis is from a theoretical view but with the help of numerical simulations. Through this analysis, we try to explain the phenomenon discovered in Section \ref{Numerical Experiments of Pricing Rainbow Options}.


Now we consider options with payoffs $V(S_{\tau}^1,S_{\tau}^2,\tau,K)$,\footnote{Note that all the payoffs considered in previous numerical simulations are in this way} then the price of the option is
$$E^Q\left[e^{-r\tau}V(S_{\tau}^1,S_{\tau}^2,
\tau,K)\right]=E^Q\left[e^{-r\tau}V\left(S_{0}^1e^{(r-\frac12\sigma_1^2)\tau
+\sigma_1B_\tau},S_{0}^2e^{(r-\frac12\sigma_2^2)\tau
+\sigma_2W_\tau},
\tau,K\right)\right],$$
where $Q$ denotes the risk neutral probability measure.

When the local correlation process is a constant $\rho$, since $(B_\tau,W_\tau)\sim N\left((0,0),\tau\begin{pmatrix}1&\rho\\ \rho&1\end{pmatrix}\right)$, the option price is a function of $\rho$ which will be denoted as $Price^c(\rho)$ in the following.

For more general case where $\rho$ is a stochastic process, we first recall the term of average correlation coefficient $\bar{\rho}_t=\frac1t\int_0^{t}\rho_udu$, which by the common decomposition, can be rewritten as
\begin{equation}\bar{\rho}_t=\frac{T_t-S_t}t.\label{bar rho}\end{equation}
Since under the condition of $\mathcal F^T_\tau$, $(B_\tau,W_\tau)=(X_{T_\tau}+Y_{S_\tau},X_{T_\tau}-Y_{S_\tau})\sim N\left((0,0),\tau\begin{pmatrix}1&\bar{\rho}_\tau\\ \bar{\rho}_\tau&1\end{pmatrix}\right)$, following the discussions in the constant-$\rho$ case, the option price ( denoted by $Price^d$ ) equals
$$Price^{d}=E^Q\left[E^Q[e^{-r\tau}V(S_{\tau}^1,S_{\tau}^2,\tau,K)|\mathcal F^T_{\tau}]\right]=E^Q\left[Price^c(\bar\rho_\tau)\right].$$

If $Price^c$ is an affine function of $\rho$, i.e., $\exists\, a, b\in \mathbb R$,
$Price^c(\rho)=a\rho+b,$
we have
\begin{equation}
Price^{d}=E^Q\left[Price^c(\bar\rho_{\tau})\right]
=E^Q[a\bar\rho_{\tau}+b]=Price^c(E^Q[\bar\rho_{\tau}]).\label{aaa3}
\end{equation}
In other words, when the option price under constant-$\rho$ model is linear in $\rho$, the price under a general dynamic correlation model is exactly the same as that with a constant correlation coeeficient $E^Q[\bar{\rho}_{\tau}]$.

Otherwise, for general $Price^c$, by Taylor's expansion, we can get the following approximation formula,
\begin{equation}\label{error approximation}Price^{d}=E^Q\left[Price^c(\bar\rho_{\tau})\right]\approx Price^c(E^Q[\bar\rho_{\tau}])+\frac12
\mbox{Var}^Q(\bar\rho_\tau)\frac{\partial^2Price^c}
{\partial\rho^2}(E^Q[\bar\rho_{\tau}]).\end{equation}
\eqref{aaa3} and \eqref{error approximation} indicate that the main cause of pricing errors between constant correlation model and dynamic correlation model is nonlinear property of $Price^c(\rho)$.



In the following, based on the above analysis, we try to explore causes for the big pricing errors in Section \ref{Comparing Constant and Dynamic Correlation Model in Option Pricing} and the two phenomena found in Section \ref{Calibrating Constant Correlation Model into Dynamic Correlation Model}: (\romannumeral1) the pricing errors seem more remarkable for out-the-money options when applying constant correlation model; (\romannumeral2) the pricing errors for Call-on-Max options seem relatively small than other kind of options.

We first consider relations between $Price^c(\rho)$ and $\rho$ in the cases of in-the-money, at-the-money and out-of-the-money for Put-on-Max options.
\begin{Example}\label{option price and rho}Choosing parameters as $r=0.05,\tau=0.25,S_0^1=100,S_0^2=120,\sigma_1=0.2,\sigma_2=0.3$, we draw diagrams for $Price^c(\rho)$ when $\mbox{Strike}=150$ (in the money), $\mbox{Strike}=120$ (at the money) and $\mbox{Strike}=90$ (out of the money) and list them in Figure \ref{put on max option price in constant correlation model}.

\begin{figure}[htbp]
\centering
\subfigure[Strike$=150$]{
\label{put on max option price in constant correlation model K=150}
\begin{minipage}{5cm}
\centering
\includegraphics[width=5cm,height=4cm]{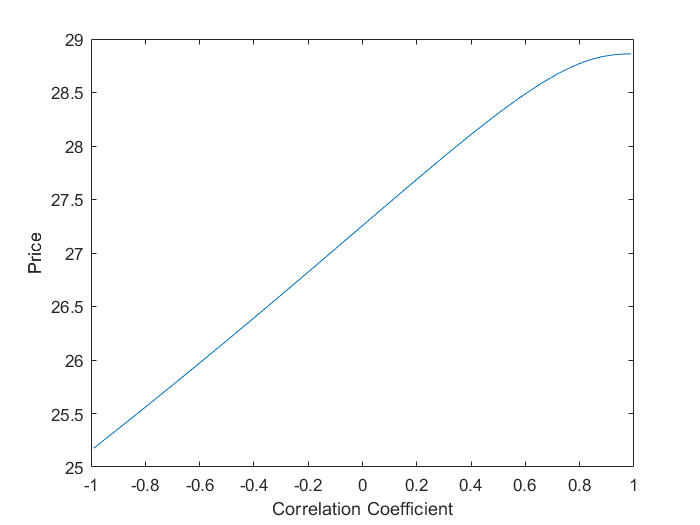}
\end{minipage}
}
\subfigure[Strike$=120$]{
\begin{minipage}{5cm}
\centering
\includegraphics[width=5cm,height=4cm]{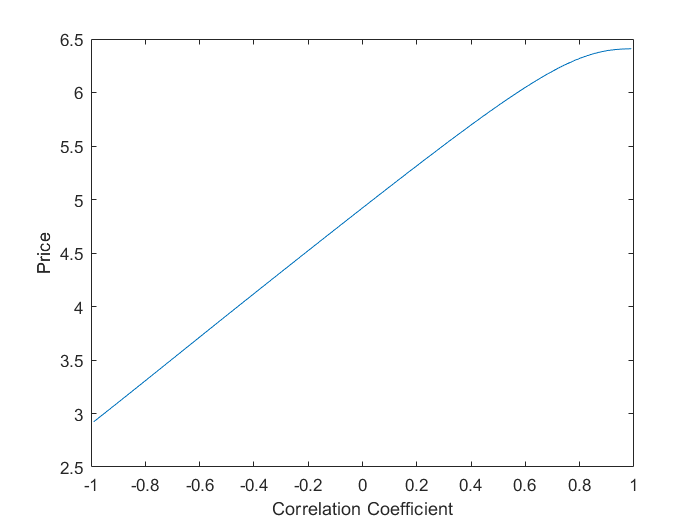}
\end{minipage}
}
\subfigure[Strike$=90$]{
\begin{minipage}{5cm}
\centering
\includegraphics[width=5cm,height=4cm]{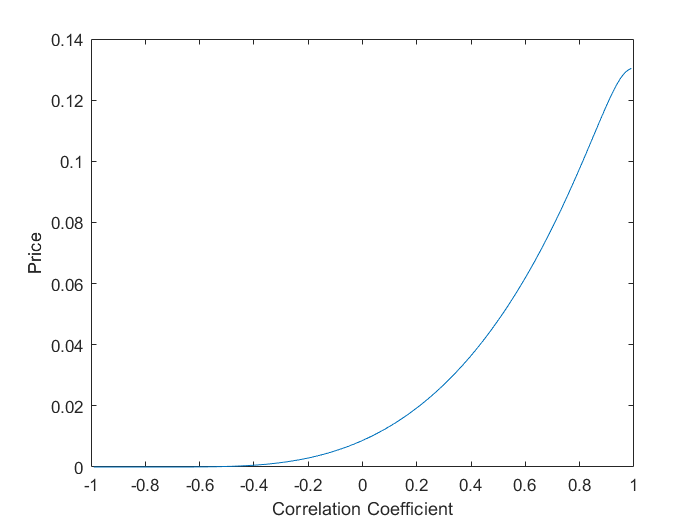}
\end{minipage}
}
\caption{Put on max option price in constant correlation model}
\label{put on max option price in constant correlation model}
\end{figure}
\end{Example}


Example \ref{option price and rho} show that, for in-the-money and at-the-money cases, $Price^c(\rho)$ reveals a strong linearity on $\rho$ except when $\rho$ is near to $1$. But it is quite nonlinear for out-of-the-money case. We conduct similar diagraming with different parameters for Put-on-Max option as well as Put-on-Min, Call-on-Min and Call-on-Max options, and get similar results. Recall the approximations \eqref{aaa3} and \eqref{error approximation}, the above results give an explanation for why constant correlation model performs well on the whole for in-the-money and at-the-money options but poorly for out-of-the-money options. We can find in Figure \ref{call on min impcorr},\ref{call on max impcorr},\ref{put on max impcorr},\ref{put on min impcorr},\ref{put on max impcorr t=0.5} and Table \ref{E & Var rho}, when strike is in-the-money and at-the-money, the implied correlation of each option is very close to $E\bar\rho_{\tau}$; on the contrary, when strike is out-of-the-money, the implied correlation changes sharply and far away from $E\bar\rho_{\tau}$. This is coincident with the conclusion in previous.
\begin{table}[htbp]
\centering
\caption{Expectation of $\bar\rho_{\tau}$}
\label{E & Var rho}
\begin{tabular}{ccc}
  \hline
   & $\tau=0.25$ & $\tau=0.5$\\
  \hline
  $\boldsymbol{\alpha}=[0.3,0.6,0.9],Q_0=[1,0,0]$    &-0.3177&   -0.2488\\
  $\boldsymbol{\alpha}=[0.3,0.6,0.95],Q_0=[0.2,0,0.8]$    &0.5784&   0.5298\\
  \hline
\end{tabular}
\end{table}

Comparing the numerical experiments in Section \ref{Comparing Constant and Dynamic Correlation Model in Option Pricing} and the data in Table \ref{E & Var rho}, we find that there are big differences between the historical local correlation coefficient and the expectation of correlation coefficient in the future, which explains the pricing errors in Section \ref{Comparing Constant and Dynamic Correlation Model in Option Pricing}.

We now turn to the Call-on-Max option whose performance in calibration in Section \ref{Calibrating Constant Correlation Model into Dynamic Correlation Model} seemed quite different from the others that the calibrated constant correlation model always performs well, even for out-of-the-money case. Note that as mentioned before, we have already got diagrams for this kind of option which have similar linear or nonlinear shapes like other options and we did not include them in the main text. A interesting question is, now that the shape of $Price^c(\rho)$ for out-of-the-money case looks apparently nonlinear, why does it still approximate the true price well? We choose the same parameters as before except for $\tau=0.25$ and draw the diagram of $Price^c(\rho)$ for Call-on-Max option for the case $\mbox{Strike}=130$ (out-of-the-money) in Figure \ref{call on max option price in constant correlation model}.
The diagram looks still quite nonlinear, but it is worth noting that in Figure \ref{call on max option price in constant correlation model} $Price^c(\rho)$ just changes from $3.97$ to $3.995$. In other words, when $\rho$ changes in its full range, the price changes only about $0.6\%$ which implies that, for Call-on-max option, the correlation between underlying assets has only a small, almost negligible, impact on the option price. While on the contrary, think about calibrating $\rho$ from option prices, a small deviation in the price may cause great changes in the implied $\rho$. This result on one hand explains why the implied correlation of Call-on-Max option is volatile but the calibrated constant correlation model always performs well and on the other hand indicates that when the data are from out-of-the-money Call-on-Max options, correlation-coefficient calibrating may be unsuitable since the implied correlation is too sensitive with the price.

\begin{figure}
  \centering
  \includegraphics[width=10cm]{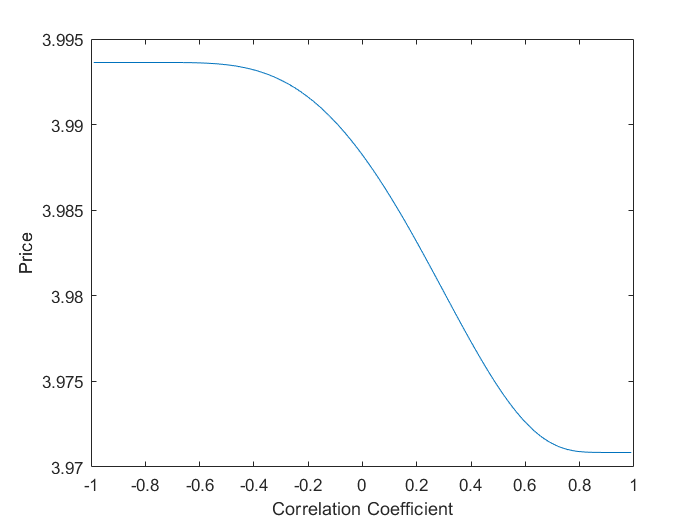}
  \caption{Call on max option price in constant correlation model with $K=130,\tau=0.25$}\label{call on max option price in constant correlation model}
\end{figure}

\section{Proofs}\label{Some Proofs}
\subsection{Preparation Works}
In the first place, we give some lemmas as preparations.

The following lemma which will be often used in Section \ref{proofs of section 2} gives a sufficient condition for a special kind of stochastic process to be a martingale.
\begin{Lemma}\label{martingale lemma}
Suppose $\{M_t\}_{t\ge0}$ is a continuous local martingale with respect to $\{\mathcal F_t\}_{t\ge0}$. If $\{\phi_t\}_{t\ge0}$ is a $\mathcal F$-progressively measurable process such that
\begin{equation}E\left[\exp\left(\frac12\int_0^t\phi_u^2d[M]_u\right)\right]<\infty,\forall t\ge0,\label{martingale lemma condition}\end{equation}
then
$$Z_t\triangleq\exp\left(\int_0^t\phi_udM_u-\frac12\int_0^t\phi_u^2d[M]_u\right),\forall t\ge0,$$
is a martingale with respect to $\mathcal F$.
\end{Lemma}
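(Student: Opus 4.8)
The plan is to recognise Lemma~\ref{martingale lemma} as an instance of Novikov's criterion: set $N_t:=\int_0^t\phi_u\,dM_u$, so that $Z$ is the Dol\'eans--Dade exponential $Z_t=\exp\!\big(N_t-\tfrac12[N]_t\big)$ of the continuous local martingale $N$, and the hypothesis \eqref{martingale lemma condition} is exactly $E\big[\exp(\tfrac12[N]_t)\big]<\infty$ for all $t$, since $[N]_t=\int_0^t\phi_u^2\,d[M]_u$. The whole statement then reduces to the scalar identity $E[Z_t]=1$ for every $t\ge0$, which I would establish by localization plus a uniform-integrability estimate.

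First I would check $N$ is well defined: as \eqref{martingale lemma condition} makes the nonnegative variable $\int_0^t\phi_u^2\,d[M]_u$ a.s.\ finite for each $t$, the progressively measurable $\phi$ is $M$-integrable on bounded intervals, $N$ is a continuous local martingale, and by It\^o's formula $dZ_t=Z_t\,dN_t$, $Z_0=1$, so $Z$ is itself a continuous local martingale; being nonnegative it is a supermartingale, and a nonnegative supermartingale with $Z_0=1$ satisfying $E[Z_t]=1$ for all $t$ is automatically a true martingale for $\mathbb F$. To get $E[Z_t]=1$, put $\tau_n:=\inf\{s\ge0:Z_s\ge n\}$; since $Z$ is continuous (hence locally bounded) one has $\tau_n\uparrow\infty$ a.s., the stopped process $Z^{\tau_n}$ is a bounded continuous martingale, so $E[Z_{t\wedge\tau_n}]=1$ for all $n$. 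Letting $n\to\infty$, $Z_{t\wedge\tau_n}\to Z_t$ a.s., and Fatou already gives $E[Z_t]\le1$; the reverse inequality is equivalent to uniform integrability of $\{Z_{t\wedge\tau_n}\}_{n\ge1}$, and this is the only real obstacle.

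The uniform integrability is precisely where the sharp constant $\tfrac12$ in \eqref{martingale lemma condition} is used. The step I would carry out is the Cauchy--Schwarz reduction: for any bounded stopping time $\sigma\le t$,
$$\exp\!\Big(\tfrac12 N_\sigma\Big)=\Big[\exp\!\big(N_\sigma-\tfrac12[N]_\sigma\big)\Big]^{1/2}\Big[\exp\!\big(\tfrac12[N]_\sigma\big)\Big]^{1/2}\le\big(E[Z_\sigma\mid\mathcal F_0]\big)^{\!?}\ \cdots$$
more precisely $E\big[\exp(\tfrac12 N_\sigma)\big]\le (E[Z_\sigma])^{1/2}\,(E[\exp(\tfrac12[N]_\sigma)])^{1/2}\le (E[\exp(\tfrac12[N]_t)])^{1/2}<\infty$, using $E[Z_\sigma]\le1$ and monotonicity of $[N]$; this yields $\sup_{\sigma\le t}E\big[\exp(\tfrac12 N_\sigma)\big]<\infty$, i.e.\ Kazamaki's condition on $[0,t]$, which in turn forces $Z$ to be a uniformly integrable martingale on $[0,t]$ and hence $E[Z_t]=1$. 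Rather than reproduce Kazamaki's (or the direct H\"older) argument, I would simply invoke Novikov's criterion as stated in \citet[Chapter VIII]{revuz2013continuous}, applied to $N$ on each interval $[0,t]$; the only points needing care beyond the citation are the well-definedness of $N$ and the elementary supermartingale-to-martingale upgrade sketched above, and the hardest conceptual part remains the passage from the $L^1$-bound $E[Z_t]\le1$ to the full martingale property, i.e.\ the uniform integrability, which is exactly the content of Novikov's theorem.
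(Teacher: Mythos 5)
Your proposal is correct and follows essentially the same route as the paper's proof: establish that $Z$ is a nonnegative continuous local martingale (hence a supermartingale by localization and Fatou), and then invoke Novikov's criterion to get $E[Z_t]=1$, which upgrades the supermartingale to a martingale. The only difference is cosmetic — you cite Novikov via \citet[Chapter VIII]{revuz2013continuous} and sketch the Kazamaki/Cauchy--Schwarz reduction, while the paper cites the corresponding result in \cite{karatzas2012brownian}.
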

\begin{proof}
First note that $$E\left[\frac12\int_0^t\phi_u^2d[M]_u\right]<E\left[\exp\left(\frac12\int_0^t\phi_u^2d[M]_u\right)\right]<\infty,\forall t\ge0,$$
therefore $\int_0^t\phi_udM_u,\forall t\ge0$ is well-defined. By It\^o's lemma, $\{Z_t\}_{t\ge0}$ is a local martingale obviously. Hence there is a sequence of stopping times $\{\tau_n\}_{n\ge1}$ satisfy $\tau_1<\tau_2<\cdots<\tau_n<\cdots$, $\lim_{n\to\infty}\tau_n=\infty$ and
$$Z^n_t\triangleq Z_{t\wedge\tau_n},\forall t\ge0,$$
is a martingale. Consequently, $E\left[Z_t^n|\mathcal F_s\right]=Z_s^n,\forall t\ge s$. Observe that $Z^n,\forall n\ge1$ are always positive, then $E\left[Z_t|\mathcal F_s\right]\le Z_s,\forall t\ge s$ according to Fatou's lemma, i.e., $Z$ is a supermartingale. From \eqref{martingale lemma condition} and \cite{karatzas2012brownian}[Chapter 3, Proposition 5.12], we have
$$E[Z_t]=1,\forall t\ge0,$$
which implies $Z$ is a martingale immediately.
\end{proof}

Before going further, we first introduce the condition (E) as follows:
\begin{description}
\item[(E)] For any $\mathcal F^T$-progressively measurable processes $\{\phi_t^1\}_{t\ge0}$ and $\{\phi_t^2\}_{t\ge0}$ that guarantee \begin{equation}\label{condition in E}E\left[\exp\left(\frac12\int_0^{\infty}(\phi_u^1)^2dT_u+\frac12\int_0^{\infty}(\phi_u^2)^2dS_u\right)\right]<\infty,\end{equation}
    we have
$$E\left[\left.\exp\left(\int_0^{\infty}\phi_u^1dX_{T_u}+\int_0^{\infty}\phi_u^2dY_{S_u}\right)\right|\mathcal F^T_{\infty}\right]=\exp\left(\frac12\int_0^{\infty}(\phi_u^1)^2dT_u+\frac12\int_0^{\infty}(\phi_u^2)^2dS_u\right).$$
\end{description}
Next, we establish an equivalence relation between the condition (E) and the independency of $X,Y$ and $T$ by Lemma \ref{lemma of XYT independent}. Then in Section \ref{proofs of section 2} we complete the proofs of Theorem \ref{condition c1} and Proposition \ref{girsanov} through the condition (E). Besides, Lemma \ref{lemma of XYT independent} also give other two necessary conditions for the independency of $X,Y$ and $T$, which will be used in the proof of Corollary \ref{XYT independent} and Theorem \ref{condition c1} respectively.

\begin{Lemma}\label{lemma of XYT independent} Suppose $(X_t,Y_t)_{t\geq 0}$
 is a 2-dimensional standard Brownian motion and $\{T_t\}_{t\ge0}$, $\{S_t\}_{t\ge0}$
 are two increasing processes with $T_t+S_t=t,\forall t\ge0$.
 If $X$, $Y$ and $T$ are mutually independent, then we have the following consequences:
\begin{enumerate}[(\romannumeral1)]
\item the condition (E) holds;
\item $\mathcal F^{X_T}_{\infty}\perp\mathcal F^{Y_S}_{t}|\mathcal F^{X_T}_{t}$ and $\mathcal F^{X_T}_{t}\perp\mathcal F^{Y_S}_{\infty}|\mathcal F^{Y_S}_{t}$;
\item $\{X_{T_t}\}_{t\ge0}$ and $\{Y_{S_t}\}_{t\ge0}$ are martingales with respect to $\{\mathcal F^{B,W}_t\bigvee\mathcal F^T_{\infty}\}_{t\geq 0}$.
\end{enumerate}
Moreover, if $(X,Y,T)$ is a triplet of common decomposition, i.e. the conditions in Theorem \ref{decompose BM} hold, then the statement (\romannumeral1) is also sufficient for the independency property of $X$, $Y$ and $T$.
\end{Lemma}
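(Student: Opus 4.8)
The proof splits into the three consequences (i)--(iii), which use only mutual independence, and the converse (the ``moreover''), which additionally uses the common-decomposition structure. Throughout I would lean on the reformulation: \emph{mutual independence of $X$, $Y$, $T$ says exactly that, conditionally on $\mathcal F^T_\infty=\sigma(T_u:u\ge0)=\sigma(S_u:u\ge0)$, the pair $(X,Y)$ is a standard $2$-dimensional Brownian motion whose conditional law does not depend on the conditioning} (the forward direction is immediate from the definition; the reverse is what the converse part of the lemma supplies). Granting this, (i) is a Gaussian computation: given $\mathcal F^T_\infty$ the $\mathcal F^T$-progressive integrands $\phi^1,\phi^2$ are deterministic, $X_T$ and $Y_S$ are independent Brownian motions run along the now-frozen clocks $T,S$, so $N:=\int_0^\infty\phi^1_udX_{T_u}+\int_0^\infty\phi^2_udY_{S_u}$ is conditionally centred Gaussian with variance $\int_0^\infty(\phi^1_u)^2dT_u+\int_0^\infty(\phi^2_u)^2dS_u$ (the integrability hypothesis of (E) makes every object here well defined), and (E) is just $E[e^{N}\mid\mathcal F^T_\infty]=e^{\operatorname{Var}(N\mid\mathcal F^T_\infty)/2}$. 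For (iii), note $\mathcal F^{B,W}_t=\mathcal F^{X_T,Y_S}_t$ because $B=X_T+Y_S$, $W=X_T-Y_S$; conditionally on $\mathcal F^T_\infty$, $X_T$ is a Gaussian martingale in its own filtration and independent of $Y_S$, hence a martingale for $\mathcal F^{X_T,Y_S}_t$, and enlarging the filtration by $\mathcal F^T_\infty$ changes nothing since $X_{T_{t+s}}-X_{T_t}$ is, given $\mathcal F^T_\infty$, a mean-zero variable independent of $(X_{T_u},Y_{S_u})_{u\le t}$; integrability follows from $E[X_{T_t}^2]=E[T_t]\le t$.

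For (ii) I would avoid conditioning on $\mathcal F^T_\infty$ and argue directly. Since $u\mapsto X_{T_u}$ is a measurable functional of the two paths $X$ and $T$, $\mathcal F^{X_T}_\infty\subseteq\sigma(X,T)$; and $S|_{[0,t]}$, being the quadratic variation of $Y_S$ on $[0,t]$, is measurable both for $\mathcal F^{Y_S}_t$ and for $\mathcal F^{X_T}_t$. Mutual independence yields $Y\perp(X,T)$, so for bounded $\Xi$ over $\mathcal F^{X_T}_\infty$ and $\Theta$ over $\mathcal F^{Y_S}_t$ one gets $E[\Theta\mid\sigma(X,T)]=h(S|_{[0,t]})$ for a deterministic functional $h$, hence $E[\Xi\Theta\mid\mathcal F^{X_T}_t]=h(S|_{[0,t]})\,E[\Xi\mid\mathcal F^{X_T}_t]=E[\Theta\mid\mathcal F^{X_T}_t]\,E[\Xi\mid\mathcal F^{X_T}_t]$; the other conditional independence is obtained by exchanging the roles of $X$ and $Y$.

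For the converse, assume $(X,Y,T)$ is a common-decomposition triplet (so Theorem \ref{decompose BM} applies, $X,Y$ are true Brownian motions obtained by change of time, completed where necessary by the auxiliary Brownian motions $\tilde X,\tilde Y$ of \eqref{definition of X}) and assume (E). I aim to prove $(X,Y)\perp\mathcal F^T_\infty$ with conditional law Wiener$\otimes$Wiener; since $\sigma(T)=\mathcal F^T_\infty$, this with the already-established $X\perp Y$ gives mutual independence. Apply (E) with $\phi^1_u=\psi^1(T_u)$, $\phi^2_u=\psi^2(S_u)$ for simple compactly supported $\psi^1=\sum_j\beta_j\mathbf 1_{[0,v_j]}$, $\psi^2=\sum_k\gamma_k\mathbf 1_{[0,w_k]}$ (the exponent in the integrability hypothesis is then bounded, so (E) applies). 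By the change-of-variables formula for stochastic integrals under a time change \citep[Chapter V]{revuz2013continuous}, $\int_0^\infty\psi^1(T_u)\,dX_{T_u}=\int_0^{T_\infty}\psi^1(v)\,dX_v=\sum_j\beta_jX_{v_j\wedge T_\infty}$ and likewise for $Y$ --- the $T$-integral never runs past $T_\infty$, so no $\tilde X$ enters --- while $\int_0^\infty\psi^1(T_u)^2dT_u=\sum_{i,j}\beta_i\beta_j(v_i\wedge v_j\wedge T_\infty)$. Running over all real rescalings of $(\psi^1,\psi^2)$, (E) then says that the conditional moment generating function of $\big((X_{v_j\wedge T_\infty})_j,(Y_{w_k\wedge S_\infty})_k\big)$ given $\mathcal F^T_\infty$ is the product of two centred-Gaussian moment generating functions with covariance matrices $(v_i\wedge v_j\wedge T_\infty)_{ij}$ and $(w_k\wedge w_l\wedge S_\infty)_{kl}$; so, conditionally, the $X$-block and the $Y$-block are Gaussian and independent.

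It remains to glue on the parts beyond the clocks. On $\{T_\infty<\infty\}$ one has $X_{v_j}=X_{v_j\wedge T_\infty}+\tilde X_{(v_j-T_\infty)^+}$, and $(\tilde X,\tilde Y)$ is by construction a $2$-dimensional Brownian motion independent of $\mathcal F_\infty\supseteq\mathcal F^T_\infty\vee\sigma\big((X_{v\wedge T_\infty})_v,(Y_{w\wedge S_\infty})_w\big)$. Hence, given $\mathcal F^T_\infty$, $(X_{v_j})_j$ is the sum of two conditionally independent centred Gaussian vectors with covariances $(v_i\wedge v_j\wedge T_\infty)$ and $((v_i-T_\infty)^+\wedge(v_j-T_\infty)^+)$, whose sum is the \emph{deterministic} matrix $(v_i\wedge v_j)$; the same for $(Y_{w_k})_k$, and the two blocks remain conditionally independent. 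Thus the conditional finite-dimensional distributions of $(X,Y)$ given $\mathcal F^T_\infty$ are those of a standard $2$-dimensional Brownian motion and are free of $\mathcal F^T_\infty$; a continuous process being determined by its finite-dimensional laws, $(X,Y)\perp\mathcal F^T_\infty$, and then $P(X\in A,Y\in B,T\in C)=P(X\in A,Y\in B)P(C)=P(X\in A)P(Y\in B)P(C)$ for $C\in\mathcal F^T_\infty$, i.e. $X,Y,T$ are mutually independent. The main obstacle is precisely this gluing step: (E) controls $X,Y$ only up to $T_\infty,S_\infty$, so on $\{T_\infty<\infty\}$ (resp. $\{S_\infty<\infty\}$) one must track the auxiliary $\tilde X,\tilde Y$ and verify that the two conditional covariances telescope to the deterministic Brownian covariance --- it is this collapse, not the moment-generating-function bookkeeping or the time-change substitution (both routine), that upgrades ``conditionally Gaussian'' to ``independent of $\mathcal F^T_\infty$''.
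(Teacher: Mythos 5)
Your proposal is correct and its overall architecture coincides with the paper's: condition on $\mathcal F^T_\infty$ to freeze the clock, reduce (E) to a Gaussian exponential-moment identity, and, for the converse, split the trajectories at $T_\infty$ (resp.\ $S_\infty$) and glue on the auxiliary Brownian motions $\tilde X,\tilde Y$ of \eqref{definition of X}, checking that the two conditional covariances telescope to the deterministic Brownian one. The differences are local but worth recording. For (i) and the converse, the paper routes everything through Lemma \ref{martingale lemma} (a Novikov-type criterion making the exponential local martingale a true martingale with respect to the enlarged filtration $\mathcal F^{X}_t\vee\mathcal F^{Y}_t\vee\mathcal F^T_\infty$), whereas you compute the conditional moment generating function directly from conditional Gaussianity; this is slightly slicker, and the integrability hypothesis \eqref{condition in E} is then only needed to make the integrals well defined and the right-hand side integrable. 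In the converse the paper parametrizes test functions in calendar time and pushes them through the time change, while you parametrize them in the new clock and pull back; the substitution $\int_0^\infty\psi^1(T_u)\,dX_{T_u}=\sum_j\beta_jX_{v_j\wedge T_\infty}$ is the same change-of-variables step, and your telescoping identity $(v_i\wedge v_j\wedge T_\infty)+\bigl((v_i-T_\infty)^+\wedge(v_j-T_\infty)^+\bigr)=v_i\wedge v_j$ plays exactly the role of the paper's separate treatment of $\int_{T_\infty}^\infty\Phi^1_u\,d\tilde X_u$. The genuinely different piece is (ii): the paper derives the conditional independence from (E) via the identity $E[\eta\mid\mathcal F^T_t]=E[\eta\mid\mathcal F^{X_T}_t]$, whereas you argue directly from $Y\perp\sigma(X,T)$ together with the observation that $S|_{[0,t]}$, being the quadratic variation of $Y_S$ on $[0,t]$ and equal to $t-T_t$ with $T|_{[0,t]}$ the quadratic variation of $X_T$, is measurable for both $\mathcal F^{Y_S}_t$ and $\mathcal F^{X_T}_t$; this is more elementary and does not consume (E) at all. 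Both routes are sound.
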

\begin{proof}
We first prove the statements (\romannumeral1), (\romannumeral2) and (\romannumeral3).

\begin{enumerate}[(\romannumeral1)]
\item Let $\tau$ and $\varsigma$
be the inverse of $T$ and $S$ as defined in (\ref{tau,varsigma}), and $\phi^1,\phi^2$ be any progressively measurable processes satisfy \eqref{condition in E}. Define $\Phi_u^1$ and $\Phi_u^2$ as follows,
$$\Phi_u^1\triangleq\phi^1_{{\tau_u}}1_{\{u\le T_{\infty}\}},\
\Phi_u^2\triangleq\phi^2_{{\varsigma_u}}1_{\{u\le S_{\infty}\}}.$$
Then
\begin{equation}\label{integral transform 1}
\int_0^{\infty}(\Phi_u^1)^2du=\int_0^{T_{\infty}}
(\phi^1_{{\tau_u}})^2du=\int_0^{\infty}(\phi_u^1)^2dT_u,\
\int_0^{\infty}(\Phi_u^2)^2du=\int_0^{S_{\infty}}
(\phi^2_{{\varsigma_u}})^2du=\int_0^{\infty}(\phi_u^2)^2dS_u,
\end{equation}
and by \eqref{condition in E},
\begin{align*}E\left[\frac12\int_0^{\infty}(\Phi_u^1)^2du+\frac12\int_0^{\infty}(\Phi_u^2)^2du\right]=&E\left[\frac12\int_0^{\infty}(\phi_u^1)^2dT_u+\frac12\int_0^{\infty}(\phi_u^2)^2dS_u\right]\\
\le&E\exp\left(\frac12\int_0^{\infty}(\phi_u^1)^2dT_u+\frac12\int_0^{\infty}(\phi_u^2)^2dS_u\right)<\infty.
\end{align*}
Hence $\int_0^{\infty}\Phi_u^1dX_u$ and
$\int_0^{\infty}\Phi_u^2dY_u$ are well defined and
\begin{equation}\label{integral transform 2}
\int_0^{\infty}\Phi_u^1dX_u=\int_0^{T_{\infty}}\phi^1_{{\tau_u}}dX_u
=\int_0^{\infty}\phi^1_udX_{T_u},\int_0^{\infty}\Phi_u^2dY_u
=\int_0^{S_{\infty}}\phi^2_{{\varsigma_u}}dY_u
=\int_0^{\infty}\phi^2_udY_{S_u}.
\end{equation}
Observe that $X$ and $Y$ are martingales with respect to $\{\mathcal F_t^X\bigvee\mathcal F_t^Y\bigvee\mathcal F_{\infty}^T\}_{t\geq 0}$ by the independency of $X,Y,T$, and $E\exp\left(\frac12\int_0^{\infty}(\Phi_u^1)^2du+\frac12
\int_0^{\infty}(\Phi_u^2)^2du\right)<\infty$ according to \eqref{condition in E} and \eqref{integral transform 1}, thus by Lemma \ref{martingale lemma}, $$\left\{\exp\left(\int_0^{t}\Phi_u^1dX_u+\int_0^{t}
\Phi_u^2dY_u-\frac12\int_0^{t}(\Phi_u^1)^2du-\frac12
\int_0^{t}(\Phi_u^2)^2du\right)\right\}_{t\ge0}$$ is a martingale. Consequently
\begin{equation}\label{martingale}E\left[\left.\exp\left(\int_0^{\infty}\Phi_u^1dX_u+\int_0^{\infty}
\Phi_u^2dY_u-\frac12\int_0^{\infty}(\Phi_u^1)^2du-\frac12
\int_0^{\infty}(\Phi_u^2)^2du\right)\right|\mathcal F^T_{\infty}\right]=1.\end{equation}
Substituting \eqref{integral transform 1} and \eqref{integral transform 2} into \eqref{martingale}, we have
$$E\left[\left.\exp\left(\int_0^{\infty}\phi_u^1dX_{T_u}
+\int_0^{\infty}\phi_u^2dY_{S_u}-\frac12\int_0^{\infty}(\phi_u^1)^2dT_u
-\frac12\int_0^{\infty}(\phi_u^2)^2dS_u\right)\right|\mathcal F^T_{\infty}\right]=1.$$
Note that $\exp\left(\frac12\int_0^{\infty}(\phi_u^1)^2dT_u+\frac12
\int_0^{\infty}(\phi_u^2)^2dS_u\right)$ is measurable with
$\mathcal F^T_{\infty}$, and the desired result holds immediately.
\item First note that, when $X,\,Y,\,T$ are independent, by the former result, the condition (E) is true. As a direct consequence of the condition (E), $\mathcal{F}^{X_T}_t$ and $\mathcal{F}^{Y_S}_t$ is conditional independent given $\mathcal F^T_t$, $\forall t\in [0,+\infty]$. Thus for every $\mathcal{F}^{Y_S}_t$-measurable random variable $\eta$, $E[\eta|\mathcal{F}^{X_T}_t\bigvee \mathcal{F}^{T}_t]=E[\eta| \mathcal{F}^{T}_t]$. Furthermore, by the truth $\mathcal F^T_t\subset \mathcal F^{X_T}_t$, $\forall t\in [0,+\infty]$, we have
\begin{equation}\label{eta}
 E[\eta|\mathcal{F}^{T}_t]=E[\eta|\mathcal{F}^{X_T}_t],\ \forall t\in [0,+\infty].
\end{equation}

To prove the result of this part, i.e., $\mathcal{F}_t^{Y_S}$
and $\mathcal{F}^{X_T}_{\infty}$ are conditional independent given
 $\mathcal{F}_t^{X_T}$, it is sufficient to prove that 
for any $\mathcal F^T$-progressively measurable process $\phi$ satisfying \eqref{condition in E}, the following equation holds
$$E\left[\left.\exp\left(\int_0^t\phi_udY_{S_u}\right)\right|
\mathcal{F}_\infty^{X_T}\right]=E\left[\left.\exp\left(\int_0^t\phi_udY_{S_u}\right)\right|
\mathcal{F}_t^{X_T}\right].$$
By (\ref{eta}),
\begin{equation*}
E\left[\left.\exp\left(\int_0^t\phi_udY_{S_u}\right)\right|
\mathcal{F}_\infty^{X_T}\right]=E\left[\left.\exp\left(\int_0^t\phi_udY_{S_u}\right)\right|
\mathcal{F}_\infty^{T}\right]
=\exp\left(\frac12\int_0^t(\phi_u)^2d{S_u}
\right).
\end{equation*}
where the second equality comes from the condition (E) immediately. Since $\exp\left(\frac12\int_0^t(\phi_u)^2d{S_u}
\right)\in \mathcal{F}_t^{T}$, applying (\ref{eta}) again, we have
\begin{align*}
E\left[\left.\exp\left(\int_0^t\phi_udY_{S_u}\right)\right|
\mathcal{F}_t^{X_T}\right]=&E\left[\left.\exp\left(\int_0^t\phi_udY_{S_u}\right)\right|
\mathcal{F}_t^{T}\right]
=\exp\left(\frac12\int_0^t(\phi_u)^2d{S_u}
\right)\\
=&E\left[\left.\exp\left(\int_0^t\phi_udY_{S_u}\right)\right|
\mathcal{F}_\infty^{X_T}\right],
\end{align*}
which is the desired conclusion.
By similar proofs, we have $\mathcal F^{X_T}_{t}\perp\mathcal F^{Y_S}_{\infty}|\mathcal F^{Y_S}_{t}$.

\item Given $\mathcal F^T_{\infty}$, for any $n,m\in\mathbb N$ and $0\le t_1\le\dots\le t_n\le t,0\le s_1\le\dots\le s_m\le t$, we can obtain the characteristic functions of $X_{T_{u+t}}-X_{T_t}$, $\{X_{T_{t_1}},\dots,X_{T_{t_n}}\}$ and $\{Y_{S_{s_1}},\dots,Y_{S_{s_m}}\}$ respectively according to the condition (E) by some special $\phi^1$ and $\phi^2$. Besides, the condition (E) also gives the joint characteristic function of them, which implies the mutual independency of $X_{T_{u+t}}-X_{T_t}$, $\{X_{T_{t_1}},\dots,X_{T_{t_n}}\}$ and $\{Y_{S_{s_1}},\dots,Y_{S_{s_m}}\}$. By the arbitrary chosen for $t_i,1\le i\le n$ and $s_j,1\le j\le m$, we have $X_{T_{u+t}}-X_{T_t}$, $\mathcal F^{X_T}_t$ and $\mathcal F^{Y_S}_t$ are mutually independent. Hence,
    $$E\left[X_{T_{u+t}}-X_{T_t}|\mathcal F^{X_T}_t\bigvee\mathcal F^{Y_S}_{t}\bigvee\mathcal F^T_{\infty}\right]=E\left[X_{T_{u+t}}-X_{T_t}|\mathcal F^T_{\infty}\right]=0.$$
    Observe that $\mathcal F^{X_T}_t\bigvee\mathcal F^{Y_S}_{t}=\mathcal F^{B,W}$, thus $X_{T_t}$ is a martingale with $\mathcal F^{B,W}_t\bigvee \mathcal F^T_\infty$.
    The same arguments hold for $Y_{S}$.
\end{enumerate}

In the following, we prove that if the conditions in Theorem \ref{decompose BM} hold, then the condition (E) is a sufficient condition for the independency of $X$, $Y$ and $T$.

For $\forall n,m\in \mathbb N$, and $0=t_0<t_1<\dots<t_n,\ 0=s_0<s_1<\dots<s_m,$
we consider the joint distribution of $\{X_{t_1},\dots,X_{t_n},Y_{s_1},\dots,Y_{t_m}\}$
conditional on $\mathcal F^T_\infty$ by calculating
\begin{equation}\label{lapl func}
 E\left[\left.\exp(\sum_{i=1}^n\theta_i^1(X_{t_i}-X_{t_{i-1}})
 +\sum_{j=1}^m\theta_j^2(Y_{s_j}-Y_{s_{j-1}}))\right|\mathcal F^T_\infty\right],
\end{equation}
 where $\theta^1_i,\,\theta^2_j\in R,\, i=1,2,\dots,n,\ j=1,2,\dots,m$.

Define
$$\Phi_u^1=\sum_{i=1}^n\theta_i^11_{\{t_{i-1}\le u<t_i\}},\ \Phi_u^2=\sum_{j=1}^m\theta_j^21_{\{s_{j-1}\le u<s_j\}}.$$
It is easy to verify
$\int_0^{\infty}\Phi_u^1dX_u=\sum_{i=1}^n\theta_i^1(X_{t_i}-X_{t_{i-1}})$,
 $\int_0^{\infty}\Phi_u^2dY_u=\sum_{j=1}^m\theta_j^2(Y_{s_j}-Y_{s_{j-1}})$ and $$E\left[\exp\left(\int_0^{\infty}(\Phi_{T_u}^1)^2dT_u+\int_0^{\infty}(\Phi_{S_u}^2)^2dS_u\right)\right]<\infty, E\left[\exp\left(\int_0^{\infty}(\Phi_{u}^1)^2du+\int_0^{\infty}(\Phi_{u}^2)^2du\right)\right]<\infty.$$

By definitions of $X$ and $Y$ (for simplicity, we set $\int_{T_{\infty}}^{\infty}\Phi_u^1dX_u=0$ when $T_{\infty}=\infty$), we have\footnote{In the proofs of this section and Section \ref{proof of section 3}, the time-change formula for stochastic integral such as \eqref{time-change formula 1} and \eqref{time-change formula 2} will be often used. If the stochastic integral is well-defined and the integrand is progressively measurable, then the time-change formula for stochastic integral is available, in which the conditions are quite relaxed. For more details, please refer to \cite{karatzas2012brownian}[Chapter 3, Proposition 4.8] or \cite{revuz2013continuous}[Chapter V, Proposition 1.5].}
\begin{equation}\int_0^{\infty}\Phi_u^1dX_u=\int_0^{T_{\infty}}\Phi_u^1dX_u+\int_{T_{\infty}}^{\infty}\Phi_u^1dX_u=\int_0^{\infty}\Phi_{T_u}^1dX_{T_u}+\int_{0}^{\infty}\Phi_{u+T_{\infty}}^1d\tilde X_u,\label{time-change formula 1}\end{equation}
\begin{equation}\int_0^{\infty}\Phi_u^2dY_u=\int_0^{S_{\infty}}\Phi_u^2dY_u+\int_{S_{\infty}}^{\infty}\Phi_u^2dY_u=\int_0^{\infty}\Phi_{S_u}^2dY_{S_u}+\int_{0}^{\infty}\Phi_{u+S_{\infty}}^2d\tilde Y_u.\label{time-change formula 2}\end{equation}
Thus
\begin{align}
\nonumber&E\left[\left.\exp\left(\int_0^{\infty}\Phi_u^1dX_u
+\int_0^{\infty}\Phi_u^2dY_u\right)\right|\mathcal F^T_\infty\right]\\
\nonumber=&E\left[\left.\exp\left(\int_0^{\infty}\Phi_{T_u}^1
dX_{T_u}+\int_{0}^{\infty}\Phi_{u+T_{\infty}}^1d\tilde X_u+\int_0^{\infty}\Phi_{S_u}^2dY_{S_u}+\int_{0}^{\infty}
\Phi_{u+S_{\infty}}^2d\tilde Y_u\right)\right|\mathcal F^T_{\infty}\right]\\
\nonumber=&E\left[\left.E\Big{[}\exp\left(\int_0^{\infty}\Phi_{T_u}^1dX_{T_u}
+\int_{0}^{\infty}\Phi_{u+T_{\infty}}^1d\tilde X_u+\int_0^{\infty}\Phi_{S_u}^2dY_{S_u}+\int_{0}^{\infty}
\Phi_{u+S_{\infty}}^2d\tilde Y_u\right)\Big{|}
\mathcal F_{\infty}^{ X_T}\bigvee \mathcal F_{\infty}^{Y_S}\Big{]}\right|\mathcal F^T_{\infty}\right]\\
=&E\left[\left.\exp\left(\int_0^{\infty}\Phi_{T_u}^1
dX_{T_u}+\int_0^{\infty}\Phi_{S_u}^2dY_{S_u}\right)E\Big{[}\exp(\int_{0}^{\infty}\Phi_{u+T_{\infty}}^1d\tilde X_u
+\int_{0}^{\infty}\Phi_{u+S_{\infty}}^2d\tilde Y_u)\Big{|}
\mathcal F_{\infty}^{X_T}\bigvee
\mathcal F_{\infty}^{Y_S}\Big{]}\right|\mathcal F^T_{\infty}\right]
.\label{aaa}
\end{align}
It is not difficult to verify that $\left\{\int_{0}^{t}\Phi_{u+T_{\infty}}^1d\tilde X_u
+\int_{0}^{t}\Phi_{u+S_{\infty}}^2d\tilde Y_u\right\}_{t\ge0}$ is a continuous local martingale with respect to $\left\{\mathcal F^T_{\infty}\bigvee\mathcal F_{\infty}^{X_T}\bigvee\mathcal F_{\infty}^{Y_S}\bigvee\mathcal F^{\tilde X}_t\bigvee\mathcal F^{\tilde Y}_t\right\}_{t\ge0}$ and $E\left[\exp\left(\frac12\int_{0}^{\infty}(\Phi_{u+T_{\infty}}^1)^2du+\frac12\int_{0}^{\infty}(\Phi_{u+S_{\infty}}^2)^2du\right)\right]<\infty$.
Then according to Lemma \ref{martingale lemma}, $\left\{\exp\left(\int_{0}^{t}\Phi_{u+T_{\infty}}^1d\tilde X_u
+\int_{0}^{t}\Phi_{u+S_{\infty}}^2d\tilde Y_u-\frac12\int_{0}^{t}(\Phi_{u+T_{\infty}}^1)^2du
-\frac12\int_{0}^{t}(\Phi_{u+S_{\infty}}^2)^2du\right)\right\}_{t\ge0}$ is a martingale. Hence,
\begin{align}\label{ccc}
\nonumber E\Big{[}\exp(\int_{0}^{\infty}\Phi_{u+T_{\infty}}^1d\tilde X_u
&+\int_{0}^{\infty}\Phi_{u+S_{\infty}}^2d\tilde Y_u)\Big{|}
\mathcal F^T_{\infty}\bigvee\mathcal F_{\infty}^{X_T}\bigvee
\mathcal F_{\infty}^{Y_S}\Big{]}\\
&=\exp\left(\frac12\int_{0}^{\infty}(\Phi_{u+T_{\infty}}^1)^2du
+\frac12\int_{0}^{\infty}(\Phi_{u+S_{\infty}}^2)^2du\right).
\end{align}
Substituting \eqref{ccc} into \eqref{aaa}, we have
\begin{align*}
&E\left[\left.\exp\left(\int_0^{\infty}\Phi_u^1dX_u
+\int_0^{\infty}\Phi_u^2dY_u\right)\right|\mathcal F^T_\infty\right]\\
\nonumber=&\exp\left(\frac12\int_{0}^{\infty}(\Phi_{u+T_{\infty}}^1)^2du
+\frac12\int_{0}^{\infty}(\Phi_{u+S_{\infty}}^2)^2du\right)
E\left[\left.\exp\left(\int_0^{\infty}\Phi_{T_u}^1
dX_{T_u}+\int_0^{\infty}\Phi_{S_u}^2dY_{S_u}\right)
\right|\mathcal F^T_{\infty}\right].
\end{align*}
Then from the condition (E), we obtain
\begin{align*}
&E\left[\left.\exp\left(\int_0^{\infty}\Phi_u^1dX_u
+\int_0^{\infty}\Phi_u^2dY_u\right)\right|\mathcal F^T_\infty\right]\\
\nonumber=&\exp\left(\frac12\int_{0}^{\infty}(\Phi_{u+T_{\infty}}^1)^2du
+\frac12\int_{0}^{\infty}(\Phi_{u+S_{\infty}}^2)^2du\right)
\exp\left(\frac12\int_0^{\infty}(\Phi_{T_u}^1)^2dT_u+
\frac12\int_0^{\infty}(\Phi_{S_u}^2)^2dS_u\right)\\
\nonumber=& \exp\left(\frac12\{\int_0^{T_{\infty}}(\Phi_{u}^1)^2d{u}+
\int_0^{S_{\infty}}(\Phi_{u}^2)^2d{u}+\int_{T_{\infty}}^{\infty}
(\Phi_{u}^1)^2du+\int_{S_{\infty}}^{\infty}(\Phi_{u}^2)^2du\}\right)\\
=&\exp\left(\frac12\int_0^{{\infty}}(\Phi_{u}^1)^2d{u}
+\frac12\int_0^{{\infty}}(\Phi_{u}^2)^2d{u}\right).
\end{align*}
By the definition of $\Phi^1$ and $\Phi^2$, the previous equation comes to
$$E\left[\exp(\sum_{i=1}^n\theta_i^1(X_{t_i}-X_{t_{i-1}})+\sum_{j=1}^m\theta_j^2(Y_{s_j}-Y_{s_{j-1}}))|\mathcal F_{\infty}^T\right]=\exp(\sum_{i=1}^n\frac12(\theta_i^1)^2({t_k}-{t_{k-1}})+\sum_{j=1}^m\frac12(\theta_j^2)^2({s_j^2}-{s_{j-1}^2})),$$
which implies $X$ and $Y$ are independent and $\mathcal F^T_{\infty}$ does not affect the distribution of $\{X_t,Y_t\}_{t\ge0}$. Hence, $\{X_t\}_{t\ge0}$, $\{Y_t\}_{t\ge0}$ and $\{T_t\}_{t\ge0}$ are mutually independent. 
\end{proof}
Note that the condition (E) is actually equivalence with the independency of $X$, $Y$ and $T$ under the conditions in Theorem \ref{decompose BM}. 

Lemma \ref{generalized girsanov} is a generalization of Girsanov Theorem, and it may be useful in the proof of Proposition \ref{girsanov}.
\begin{Lemma}\label{generalized girsanov}Suppose $\{X_t\}_{t\ge0}$ is a Brownian motion and $\{T_t\}_{t\ge0}$ is a nondecreasing stochastic process independent with $\{X_t\}_{t\ge0}$. Given $\{\phi_t\}_{t\ge0}$ and $\{\theta_t\}_{t\ge0}$, which are progressively measurable with $\{\mathcal F^T_t\}_{t\ge0}$ and
$$E\left[\exp\left(\frac12\int_0^t(\phi_u)^2dT_u\right)\right]<\infty,E\left[\exp\left(\frac12\int_0^t(\theta_u)^2dT_u\right)\right]<\infty,\forall t\ge0,$$
let
$$X^{\phi}_t=X_t-\int_0^t\phi_{\tau_u}du,\tau_t=\inf\{u:T_t\ge u\}.$$
Then we have
$$E\left[\exp\left(\int_0^{t\wedge T_t}\theta_{\tau_u}dX^{\phi}_u\right)\exp\left(\int_0^{t\wedge T_t}\phi_{\tau_u}dX_u-\frac12\int_0^{t\wedge T_t}(\phi_{\tau_u})^2du\right)|\mathcal F^T_{\infty}\right]=\exp\left(\frac12\int_0^{t\wedge T_t}(\theta_{\tau_u})^2du\right).$$
\end{Lemma}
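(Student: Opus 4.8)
The plan is to reduce the identity to a deterministic-integrand Cameron--Martin (Girsanov) computation by conditioning on $\mathcal F^T_{\infty}$. The point is that $\{X_t\}$ is independent of $\{T_t\}$, so enlarging the Brownian filtration of $X$ by the independent $\sigma$-field $\mathcal F^T_{\infty}$ still leaves $X$ a Brownian motion: $X$ is a Brownian motion for $\mathcal G_t\triangleq\mathcal F^X_t\vee\mathcal F^T_{\infty}$. On the other hand $\tau$, the terminal time $t\wedge T_t$, and the integrands $\phi_{\tau_u}$, $\theta_{\tau_u}$ are all $\mathcal F^T$-measurable, hence $\mathcal F^T_{\infty}$-measurable, so once $\mathcal F^T_{\infty}$ is frozen they behave as \emph{deterministic} objects; moreover the relevant $L^2$-norms are a.s.\ finite, since by the time-change formula for stochastic integrals (cf.\ \cite{karatzas2012brownian}[Chapter 3, Proposition 4.8] or \cite{revuz2013continuous}[Chapter V, Proposition 1.5]) one has $\int_0^{t\wedge T_t}(\phi_{\tau_u})^2\,du\le\int_0^{T_t}(\phi_{\tau_u})^2\,du=\int_0^t(\phi_u)^2\,dT_u<\infty$, and similarly for $\theta$.

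Granting this, I would compute directly. Since $dX^{\phi}_u=dX_u-\phi_{\tau_u}\,du$,
\[
\int_0^{t\wedge T_t}\theta_{\tau_u}\,dX^{\phi}_u=\int_0^{t\wedge T_t}\theta_{\tau_u}\,dX_u-\int_0^{t\wedge T_t}\theta_{\tau_u}\phi_{\tau_u}\,du,
\]
so the product inside the conditional expectation equals
\[
\exp\!\Big(\int_0^{t\wedge T_t}(\theta_{\tau_u}+\phi_{\tau_u})\,dX_u\Big)\,\exp\!\Big(-\int_0^{t\wedge T_t}\theta_{\tau_u}\phi_{\tau_u}\,du-\tfrac12\int_0^{t\wedge T_t}(\phi_{\tau_u})^2\,du\Big).
\]
The second factor is $\mathcal F^T_{\infty}$-measurable and pulls out of $E[\,\cdot\mid\mathcal F^T_{\infty}]$. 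For the first factor, conditionally on $\mathcal F^T_{\infty}$ the random variable $\int_0^{t\wedge T_t}(\theta_{\tau_u}+\phi_{\tau_u})\,dX_u$ is a centred Gaussian---a Wiener integral of a deterministic $L^2$ integrand against the $\mathcal G$-Brownian motion $X$---with variance $\int_0^{t\wedge T_t}(\theta_{\tau_u}+\phi_{\tau_u})^2\,du$, equivalently this is the conditional version of the martingale statement in Lemma~\ref{martingale lemma}. Hence its conditional Laplace transform is $\exp(\tfrac12\int_0^{t\wedge T_t}(\theta_{\tau_u}+\phi_{\tau_u})^2\,du)$; multiplying by the frozen factor and expanding the square, the $\int\theta_{\tau_u}\phi_{\tau_u}\,du$ term cancels the cross term and the $\tfrac12\int(\phi_{\tau_u})^2\,du$ terms cancel, leaving exactly $\exp(\tfrac12\int_0^{t\wedge T_t}(\theta_{\tau_u})^2\,du)$.

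The algebra and time-change manipulations are routine; the step needing genuine care---the main obstacle---is the measure-theoretic justification that conditioning on $\mathcal F^T_{\infty}$ legitimately turns $\int_0^{\cdot}\phi_{\tau_u}\,dX_u$ and $\int_0^{\cdot}\theta_{\tau_u}\,dX_u$ into Wiener integrals of frozen integrands with the asserted Gaussian conditional laws. This rests on (i) $\phi_{\tau_{\cdot}}$, $\theta_{\tau_{\cdot}}$ being $\mathcal G$-progressively measurable (indeed $\mathcal G_0$-measurable), so the It\^o integrals are well defined for $\mathcal G$, and (ii) a disintegration/Fubini argument for stochastic integrals against the independently enlarged Brownian motion $X$, obtained by passing to the limit from simple integrands. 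An equivalent packaging, perhaps cleaner to write up, is to first use Lemma~\ref{martingale lemma} to see that $\{\exp(\int_0^s\phi_{\tau_u}\,dX_u-\tfrac12\int_0^s(\phi_{\tau_u})^2\,du)\}_{s\ge0}$ is a true $\mathcal G$-martingale, define $\frac{dQ}{dP}=\exp(\int_0^{t\wedge T_t}\phi_{\tau_u}\,dX_u-\tfrac12\int_0^{t\wedge T_t}(\phi_{\tau_u})^2\,du)$, invoke Girsanov so that $X^{\phi}$ is a $\mathcal G$-Brownian motion under $Q$, and thereby reduce the left-hand side to $E^Q[\exp(\int_0^{t\wedge T_t}\theta_{\tau_u}\,dX^{\phi}_u)\mid\mathcal F^T_{\infty}]$, again a Gaussian Laplace transform.
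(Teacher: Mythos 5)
Your proposal is correct, and in fact your closing ``equivalent packaging'' is precisely the paper's proof: the paper applies Lemma \ref{martingale lemma} to the exponential martingale of $\int\phi_{\tau_u}\,dX_u$ with respect to $\{\mathcal F^X_s\vee\mathcal F^T_\infty\}_{s\ge0}$, changes measure to $\tilde Q$, invokes Girsanov to make $\tilde X^\phi$ a Brownian motion under $\tilde Q$, and then evaluates $E^{\tilde Q}[\exp(\int_0^{t\wedge T_t}\theta_{\tau_u}\,d\tilde X^\phi_u-\tfrac12\int_0^{t\wedge T_t}(\theta_{\tau_u})^2\,du)\mid\mathcal F^T_\infty]=1$ by optional stopping at the bounded, $\mathcal F^T_\infty$-measurable (hence $\mathcal G_0$-measurable) time $t\wedge T_t$. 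Your primary route --- regrouping the exponent as $\int(\theta_{\tau_u}+\phi_{\tau_u})\,dX_u$ minus a frozen correction and computing a conditional Gaussian Laplace transform --- is a genuinely more direct alternative. It buys something real: you never need a Novikov-type condition for the combined integrand $\theta+\phi$ (which does \emph{not} follow from the two separate hypotheses without Cauchy--Schwarz and stronger integrability), because the conditional Laplace transform of a centred Gaussian with a.s.\ finite variance is $e^{v/2}$ unconditionally; so do not lean on ``the conditional version of Lemma \ref{martingale lemma}'' for that step, since Lemma \ref{martingale lemma} would ask for exactly the integrability you lack. The price is the disintegration fact you correctly flag as the crux --- that for an $\mathcal F^T_\infty$-measurable integrand the Wiener integral against the independently enlarged Brownian motion $X$ has conditional law $N(0,\int f^2\,du)$ given $\mathcal F^T_\infty$ --- which is standard (approximation by simple integrands, or regular conditional probabilities) but must be stated. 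Either writing is acceptable; the paper's two-step Girsanov argument avoids the disintegration lemma at the cost of a second martingale/optional-stopping verification under $\tilde Q$.
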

\begin{proof}
Given $t$, from
$$E\exp\left(\frac12\int_0^{s\wedge T_t}(\phi_{{\tau_u}})^2du\right)\le E\exp\left(\frac12\int_0^{T_t}(\phi_{\tau_u})^2du\right)
=E\exp\left(\frac12\int_0^{t}(\phi_{u})^2dT_u\right)<\infty,
$$
and Lemma \ref{martingale lemma} we have $\left\{\exp\left(\int_0^{s\wedge T_t}\phi_{{\tau_u}}dX_{u}-\frac12\int_0^{s\wedge T_t}(\phi_{{\tau_u}})^2du\right)\right\}_{s\ge0}$ is a martingale with respect to $\{\mathcal F^X_{s}\bigvee\mathcal F^T_{\infty}\}_{s\ge0}$. Let
$$\frac{d\tilde Q}{dP}\Big|\mathcal F^X_{s}\bigvee\mathcal F^T_{\infty}=\exp\left(\int_0^{s\wedge T_t}\phi_{{\tau_u}}dX_{u}-\frac12\int_0^{s\wedge T_t}(\phi_{{\tau_u}})^2du\right).$$
Note that $X$ is a Brownian motion with respect to $\{\mathcal F^X_{s}\bigvee\mathcal F^T_{\infty}\}_{s\ge0}$, then by Girsanov theorem,
$$\tilde X^{\phi}_s\triangleq X_s-\int_0^{s\wedge T_t}\phi_{{\tau_u}}du, 0\le s\le t,$$
is a Brownian motion with $\{\mathcal F^X_{s}\bigvee\mathcal F^T_{\infty}\}_{s\ge0}$ under probability measure $\tilde Q$.
Hence $\left\{\exp\left(\int_0^{s}\theta_{\tau_u}d\tilde X_u^{\phi}-\frac12\int_0^s(\theta_{\tau_u})^2du\right)\right\}_{0\le s\le t}$ is a martingale under $\tilde Q$, then by optional stopping theorem\footnote{In Girsanov theorem, we need to determine an upper bound $t$ in advance, then $\tilde X^{\phi}$ is a Brownian motion with $\{\mathcal F^X_{s}\bigvee\mathcal F^T_{\infty}\}_{0\le s\le t}$ in $[0,t]$. Thanks to $0\le t\wedge T_t\le t$, optional stopping theorem for $t\wedge T_t$ remains valid.}, we obtain
$$E^{\tilde Q}\left[\exp\left(\int_0^{t\wedge T_t}\theta_{\tau_u}d\tilde X_u^{\phi}-\frac12\int_0^{t\wedge T_t}(\theta_{\tau_u})^2du\right)|\mathcal F^T_{\infty}\right]=1,$$
i.e.,
$$E^P\left[\exp\left(\int_0^{t\wedge T_t}\theta_{\tau_u}d\tilde X_u^{\phi}\right)\exp\left(\int_0^{t\wedge T_t}\phi_{{\tau_u}}dX_{u}-\frac12\int_0^{t\wedge T_t}(\phi_{{\tau_u}})^2du\right)|\mathcal F^T_{\infty}\right]=\exp\left(\frac12\int_0^{t\wedge T_t}(\theta_{\tau_u})^2du\right).$$
Note that $\tilde X_s^{\phi}=X_s^{\phi},\forall s\in[0,t\wedge T_t]$, thus we get desired result immediately.
\end{proof}
\subsection{Proofs of Results in Section \ref{Dependency Structure of Two Correlated Brownian Motions}}\label{proofs of section 2}
First we prove Theorem \ref{decompose BM}.

\begin{proofoftheorem}{\ref{decompose BM}}
We prove (\romannumeral1) first. Note that $\frac{B+W}2$ and $\frac{B-W}2$ are continuous martingales and
$$\left[X_T,Y_S\right]_t=\left[\frac{B+W}2,\frac{B-W}2\right]_t=\frac1{16}([2B,2B]_t-[2W,2W]_t)=0.$$
By the definitions of $\tau$ and $\varsigma$,
$${\tau_t}=\inf\left\{u:\left[\frac{B+W}2,\frac{B+W}2\right]_u>t\right\},\ {\varsigma_t}=\inf\left\{u:\left[\frac{B-W}2,\frac{B-W}2\right]_u>t\right\}.$$
Then according to \cite{revuz2013continuous}[Chapter V, Theorem 1.10], $\{X_t\}_{t\ge0}$ and $\{Y_t\}_{t\ge0}$ are two independent Brownian motions.

As for (\romannumeral2), \eqref{[B,W]_t-[B,W]_s} implies $[B,W]$ is absolutely continuous with respect to $t$, hence is derivable. Then \eqref{definition of T} leads to the result immediately.
\end{proofoftheorem}

Next, we prove Theorem \ref{condition c1} through Lemma \ref{lemma of XYT independent}.

\begin{proofoftheorem}{\ref{condition c1}}
For the ``if" part: since $\mathcal F^B_{\infty}\perp\mathcal F^T_{\infty}|\mathcal F_t^{B,W}$ and $\mathcal F_t^{B,W}\subset \mathcal F_t$,
\begin{equation}E\left[B_t-B_s|\mathcal F^T_{\infty}\bigvee\mathcal F^{B,W}_s\right]=E\left[B_t-B_s|\mathcal F^{B,W}_s\right]=0,\label{B_t-B_s}
\end{equation}
therefore the process $B$ is a martingale with respect to $\{\mathcal F_t^{B,W}\bigvee\mathcal F^T_{\infty}\}_{t\geq 0}$, so is the process $W$ by similar analysis.

As a consequence,
$X_{T_t}=\frac{B_t+W_t}2$ and $Y_{S_t}=\frac{B_t-W_t}2$ are martingales with respect to the same filtration. So for any $\mathcal F^T$-progressively measurable processes $\phi^1$, $\phi^2$ satisfying \eqref{condition in E},
$$D_t^{\phi}\triangleq\exp\left(\int_0^{t}\phi_u^1dX_{T_u}+\int_0^{t}
\phi_u^2dY_{S_u}-\frac12\int_0^{t}(\phi_u^1)^2dT_u-\frac12\int_0^{t}
(\phi_u^2)^2dS_u\right),\ t\in [0,+\infty)$$
is a martingale with respect to $\{\mathcal F_t^{B,W}\bigvee\mathcal F^T_{\infty}\}_{t\geq 0}$ by Lemma \ref{martingale lemma}. 
Moreover,
\begin{align*}E\left[\left(\frac12\int_0^{\infty}(\phi_u^1)^2dT_u+\frac12\int_0^{\infty}(\phi_u^2)^2dS_u\right)\right]<E\left[\exp\left(\frac12\int_0^{\infty}(\phi_u^1)^2dT_u+\frac12\int_0^{\infty}(\phi_u^2)^2dS_u\right)\right]<\infty\end{align*}
implies $D^{\phi}_{\infty}$ exists. Thus
$$E[D^{\phi}_{\infty}|\mathcal F_0^{B,W}\bigvee\mathcal F_{\infty}^T]=D^{\phi}_0=1,$$
i.e.
$$E\left[\exp\left(\int_0^{\infty}\phi_u^1dX_{T_u}+\int_0^{\infty}\phi_u^2dY_{S_u}\right)|\mathcal F^T_{\infty}\right]=\exp\left(\frac12\int_0^{\infty}(\phi_u^1)^2dT_u+\frac12\int_0^{\infty}(\phi_u^2)^2dS_u\right).$$
According to Lemma \ref{lemma of XYT independent}, the desired result is obtained.


For the``only if" part: if $\{X_t\}_{t\ge0}$, $\{Y_t\}_{t\ge0}$ and $\{T_t\}_{t\ge0}$ are independent, by Lemma \ref{lemma of XYT independent}, $\{X_{T_t}\}_{t\ge0}$ and $\{Y_{S_t}\}_{t\ge0}$ are martingales with respect to $\mathcal F^{B,W}_t\bigvee\mathcal F^T_{\infty}$.

Consequently, $B_t=X_{T_t}+Y_{S_t},W_t=X_{T_t}-Y_{S_t},$ are martingales with respect to $\mathcal F^{B,W}_t\bigvee\mathcal F^T_{\infty}$. Since $[B,B]_t=[W,W]_t=t$, $B_t$ and $W_t$ are Brownian motions with respect to $\mathcal F^{B,W}_t\bigvee\mathcal F^T_{\infty}$ according to L\'evy characterisation.

On the other hand, $B_t$ and $W_t$ are Brownian motions with respect to $\mathcal F^{B,W}_t$ as well. That is to say, for any $t\geq 0$, the conditional distribution of the process $B$ given $\mathcal F_t^{B,W}\bigvee\mathcal F^T_{\infty}$ is coincident with its conditional distribution given $\mathcal F_t^{B,W}$. Then we can conclude that $\mathcal F^B_{\infty}\perp\mathcal F^T_{\infty}|\mathcal F_t^{B,W}$. Similarly, $\mathcal F^W_{\infty}\perp\mathcal F^T_{\infty}|\mathcal F_t^{B,W}$.
%
%
\end{proofoftheorem}

In the following, we complete the proof of Proposition \ref{girsanov}.

\begin{proofofproposition}{\ref{girsanov}}We prove that the independency of $X$, $Y$ and $T$ is equivalent with the condition (C2), then from Theorem \ref{condition c1}, we have the condition (C1) is equivalent with the condition (C2).


%

For the "$\Rightarrow$" part: It is obvious that $D_t^{\phi}$ is a martingale from Lemma \ref{martingale lemma}.

Suppose $\theta^i_t,i=1,2$ are bounded determined processes, then
\begin{align}
\nonumber &E^Q\left[\exp\left(\int_0^t\theta^1_udX^{\phi}_{T_u}+\int_0^t\theta^2_udY^{\phi}_{S_u}\right)\right]=E^P\left[\exp\left(\int_0^t\theta^1_udX^{\phi}_{T_u}+\int_0^t\theta^2_udY^{\phi}_{S_u}\right)D_t^{\phi}\right]\\
\nonumber=&E^P\left[\exp\left(\int_0^t\theta^1_udX^{\phi}_{T_u}\right)\exp\left(\int_0^t\phi_u^1dX_{T_u}-\frac12\int_0^t(\phi_u^1)^2dT_u\right)\right.\\
&\left.E^P\left[\exp\left(\int_0^t\theta^2_udY^{\phi}_{S_u}\right)\exp\left(\int_0^t\phi_u^2dY_{S_u}-\frac12\int_0^t(\phi_u^2)^2dS_u\right)|\mathcal F^T_{\infty}\bigvee\mathcal F_{\infty}^X\right]\right].\label{aaa1}
\end{align}
According to the independency of $X,Y,T$, we have
\begin{align}\nonumber&E^P\left[\exp\left(\int_0^t\theta^2_udY^{\phi}_{S_u}\right)\exp\left(\int_0^t\phi_u^2dY_{S_u}-\frac12\int_0^t(\phi_u^2)^2dS_u\right)|\mathcal F^T_{\infty}\bigvee\mathcal F_{\infty}^X\right]\\
\nonumber=&E^P\left[\exp\left(\int_0^t\theta^2_udY^{\phi}_{S_u}\right)\exp\left(\int_0^t\phi_u^2dY_{S_u}-\frac12\int_0^t(\phi_u^2)^2dS_u\right)|\mathcal F^T_{\infty}\right]\\
=&E^P\left[\exp\left(\int_0^{S_t}\theta^2_{\varsigma_u}dY_u^{\phi}\right)\exp\left(\int_0^{S_t}\phi_{{\varsigma_u}}^2dY_{u}-\frac12\int_0^{S_t}(\phi_{{\varsigma_u}}^2)^2du\right)|\mathcal F^T_{\infty}\right],
\label{bbb1}\end{align}
where $Y^{\phi}_t=Y_{t}-\int_0^{{\varsigma_t}}\phi_u^2dS_u=Y_t-\int_0^t\phi_{{\varsigma_u}}^2du$.
Observe that $t\wedge S_t=S_t$, then from Lemma \ref{generalized girsanov} we have
\begin{align}
\nonumber&E^P\left[\exp\left(\int_0^{S_t}\theta^2_{\varsigma_u}dY_u^{\phi}\right)\exp\left(\int_0^{S_t}\phi_{{\varsigma_u}}^2dY_{u}-\frac12\int_0^{S_t}(\phi_{{\varsigma_u}}^2)^2du\right)|\mathcal F^T_{\infty}\right]\\
\nonumber=&\exp\left(\frac12\int_0^{S_t}(\theta^2_{\varsigma_u})^2du\right)
=E^P\left[\exp\left(\int_0^{S_t}\theta^2_{\varsigma_u}dY_u\right)|\mathcal F^T_{\infty}\right]
=E^P\left[\exp\left(\int_0^{S_t}\theta^2_{\varsigma_u}dY_u\right)|\mathcal F^T_{\infty}\bigvee\mathcal F^X_{\infty}\right]\\
=&E^P\left[\exp\left(\int_0^{t}\theta^2_{u}dY_{S_u}\right)|\mathcal F^T_{\infty}\bigvee\mathcal F^X_{\infty}\right].
\label{ccc1}\end{align}
Substituting \eqref{bbb1} and \eqref{ccc1} into \eqref{aaa1},
\begin{align*}
&E^Q\left[\exp\left(\int_0^t\theta^1_udX^{\phi}_{T_u}+\int_0^t\theta^2_udY^{\phi}_{S_u}\right)\right]\\
=&E^P\left[\exp\left(\int_0^t\theta^1_udX^{\phi}_{T_u}\right)\exp\left(\int_0^t\phi_u^1dX_{T_u}-\frac12\int_0^t(\phi_u^1)^2dT_u\right)E^P[\exp\left(\int_0^{t}\theta^2_{u}dY_{S_u}\right)|\mathcal F^T_{\infty}\bigvee\mathcal F^X_{\infty}]\right]\\
=&E^P\left[\exp\left(\int_0^t\theta^1_udX^{\phi}_{T_u}\right)\exp\left(\int_0^t\phi_u^1dX_{T_u}-\frac12\int_0^t(\phi_u^1)^2dT_u\right)\exp\left(\int_0^{t}\theta^2_{u}dY_{S_u}\right)\right]\\
=&E^P\left[\exp\left(\int_0^{t}\theta^2_{u}dY_{S_u}\right)E^P[\exp\left(\int_0^t\theta^1_udX^{\phi}_{T_u}\right)\exp\left(\int_0^t\phi_u^1dX_{T_u}-\frac12\int_0^t(\phi_u^1)^2dT_u\right)|\mathcal F^T_{\infty}\bigvee\mathcal F_{\infty}^Y]\right].
\end{align*}
Applying Lemma \ref{generalized girsanov} to the former equation again, we obtain
\begin{align*}E^Q\left[\exp\left(\int_0^t\theta^1_udX^{\phi}_{T_u}+\int_0^t\theta^2_udY^{\phi}_{S_u}\right)\right]
=&E^P\left[\exp\left(\int_0^{t}\theta^2_{u}dY_{S_u}\right)E^P[\exp\left(\int_0^t\theta^1_udX_{T_u}\right)|\mathcal F^T_{\infty}\bigvee\mathcal F_{\infty}^Y]\right]\\
=&E^P\left[\exp\left(\int_0^t\theta^1_udX_{T_u}+\int_0^{t}\theta^2_{u}dY_{S_u}\right)\right].
\end{align*}
If $\theta_t^i,i=1,2$ are complex, the proof remains valid, hence we have $(\tilde X^{\phi},\tilde Y^{\phi})_{Q}\overset{d}{=}(X_{T},Y_{S})_P$ immediately.

For the "$\Leftarrow$" part: Suppose $\{\phi^1_t\}_{t\ge0}$ and $\{\phi^2_t\}_{t\ge0}$ satisfy \eqref{condition in E}. Note that the range of $\{\phi^1_t\}_{t\ge0}$ and $\{\phi^2_t\}_{t\ge0}$ in \eqref{condition in E} is smaller than the condition (C2), then
\begin{align*}E\left[\left(\frac12\int_0^{\infty}(\phi_u^1)^2dT_u+\frac12\int_0^{\infty}(\phi_u^2)^2dS_u\right)\right]
 \le E\left[\exp\left(\frac12\int_0^{\infty}(\phi_u^1)^2dT_u+\frac12\int_0^{\infty}(\phi_u^2)^2dS_u\right)\right]<\infty,\end{align*}
accordingly $D^{\phi}_{\infty}$ exists. We first claim that
$$E^P[D^{\phi}_{\infty}|\mathcal F^T_{\infty}]=1\quad a.s..$$
To see this, we only need to prove for any $A\in\mathcal F^T_{\infty}$,
\begin{equation}E^P[D^{\phi}_{\infty}1_A]=P(A).\label{sufficient condition for conditional expectation}\end{equation}
Let
$$\mathcal D\triangleq\{A\in\mathcal F|E^P[D^{\phi}_{\infty}1_A]=P(A)\},\quad \mathcal P\triangleq\{\bigcap_{i=1}^nA_{t_i}|A_{t_i}\in\sigma(T_{t_i}),n\ge1,\forall t_1<t_2<\dots<t_n\},$$
note that $E^P[D^{\phi}_{\infty}]=1$, so $\mathcal D$ is a $\lambda$-system and obviously $\mathcal P$ is a $\pi$-system, moreover, $\sigma(\mathcal P)=\mathcal F^T_{\infty}$. Suppose $A_{t_i}=\{T_{t_i}\in\mathcal B_i\}$, where $\mathcal B_i$ is a Borel set, then for any $A=\bigcap_{i=1}^nA_{t_i}\in\mathcal P$ we have
\begin{equation}\label{D_infty^phi}E^P[D^{\phi}_{\infty}1_A]=E^P[1_AE^P[D^{\phi}_{\infty}|\mathcal{F}_{t_n}]]=E^P[D^{\phi}_{t_n}1_A].\end{equation}
Since $(\tilde X^{\phi},\tilde Y^{\phi})_{Q}\overset{d}{=}(X_{T},Y_{S})_P,$ and
$$[X_T]_t=X_{T_t}^2-\int_0^tX_{T_u}dX_{T_u},$$
so we have $([\tilde X^{\phi}],[\tilde Y^{\phi}])_{Q}\overset{d}{=}([X_{T}],[Y_{S}])_P,$ i.e., $(T,S)_{Q}\overset{d}{=}(T,S)_P$. Consequently,
\begin{equation}\label{P(A)}P(A)=P(T_{t_i}\in\mathcal B_i,i=1,2,\dots,n)=Q(T_{t_i}\in\mathcal B_i,i=1,2,\dots,n)=E^Q[1_A]=E^P[D^{\phi}_{t_n}1_A].\end{equation}
From \eqref{D_infty^phi} and \eqref{P(A)} we know that $\mathcal P\subset\mathcal D$. According to $\pi-\lambda$ theorem we can conclude
$$\mathcal F^T_{\infty}=\sigma(\mathcal P)\subset\mathcal D,$$
hence, we have proved our claim \eqref{sufficient condition for conditional expectation}. $E^P[D^{\phi}_{\infty}|\mathcal F^T_{\infty}]=1$ implies
$$E\left[\exp\left(\int_0^{\infty}\phi_u^1dX_{T_u}+\int_0^{\infty}\phi_u^2dY_{S_u}\right)|\mathcal F^T_{\infty}\right]=\exp\left(\frac12\int_0^{\infty}(\phi_u^1)^2dT_u+\frac12\int_0^{\infty}(\phi_u^2)^2dS_u\right),$$
we complete proof by Lemma \ref{lemma of XYT independent}.

\end{proofofproposition}

We prove Proposition \ref{independent local correlation imply independent time change} by the equivalence of the condition (C3) and the condition (C1).

\begin{proofofproposition}{\ref{independent local correlation imply independent time change}}

"(C1)$\Rightarrow$(C3)": According to $\mathcal F^B_{\infty}\perp\mathcal F^T_{\infty}|\mathcal F_t^{B,W}$ and \eqref{B_t-B_s}, we have $\{B_t\}_{t\ge0}$ is a martingale with respect to $\mathcal F^{B,W}_t\bigvee\mathcal F^T_{\infty}$. Because $\mathcal F_{\infty}^{\tilde Z}\perp\mathcal F^{B,W}_{\infty}\bigvee\mathcal F^T_{\infty}$ (actually, $\mathcal F^T_{\infty}\subset\mathcal F^{B,W}_{\infty}$), then for any $\xi\in\mathcal F_t^{\tilde Z}$,
$$E\left[\xi\Big|\mathcal F^{B,W}_{\infty}\bigvee\mathcal F^T_{\infty}\right]=E[\xi]=E\left[\xi\Big|\mathcal F^{B,W}_{t}\bigvee\mathcal F^T_{\infty}\right],$$
which is equivalent with $\mathcal F_t^{\tilde Z}\perp\mathcal F_{\infty}^{B,W}|\mathcal F^{B,W}_t\bigvee\mathcal F^T_{\infty}$. Hence,
$$E\left[B_t-B_s\Big|\mathcal F^{B,W}_s\bigvee\mathcal F^T_{\infty}\bigvee\mathcal F_s^{\tilde Z}\right]=E\left[B_t-B_s\Big|\mathcal F^{B,W}_s\bigvee\mathcal F^T_{\infty}\right]=0,$$
and equivalently, $\{B_t\}_{t\ge0}$ is a martingale with respect to $\{\mathcal F^{B,W}_t\bigvee\mathcal F^T_{\infty}\bigvee\mathcal F_t^{\tilde Z}\}_{t\ge0}$. With the same arguments, $\{W_t\}_{t\ge0}$ is a martingale with respect to $\mathcal F^{B,W}_t\bigvee\mathcal F^T_{\infty}\bigvee\mathcal F_t^{\tilde Z}$ as well. Obviously, $\{\tilde Z_t\}_{t\ge0}$ is a martingale with respect to $\mathcal F^{B,W}_t\bigvee\mathcal F^T_{\infty}\bigvee\mathcal F_t^{\tilde Z}$, so from the definition of $Z_t$, we know $\{Z_t\}_{t\ge0}$ is a martingale with respect to $\mathcal F^{B,W}_t\bigvee\mathcal F^T_{\infty}\bigvee\mathcal F_t^{\tilde Z}$ and $[Z]_t=t,[B,Z]_t=0$. According to L\'evy characterisation (see \cite{Shreve2004Stochastic}[Theorem 4.6.4]), $\{B_t\}_{t\ge0}$ and $\{Z_t\}_{t\ge0}$ are two independent Brownian motions with respect to $\mathcal F^{B,W}_t\bigvee\mathcal F^T_{\infty}\bigvee\mathcal F_t^{\tilde Z}$. Since $\{B_t\}_{t\ge0}$ and $\{Z_t\}_{t\ge0}$ are adapted with $\mathcal F^{B,W}_t\bigvee\mathcal F_t^{\tilde Z}\subset\mathcal F^{B,W}_t\bigvee\mathcal F^T_{\infty}\bigvee\mathcal F_t^{\tilde Z}$, so $\{B_t\}_{t\ge0}$ and $\{Z_t\}_{t\ge0}$ are also two independent Brownian motions with respect to $\mathcal F^{B,W}_t\bigvee\mathcal F_t^{\tilde Z}$. Consequently, the joint distribution of $\{B_t\}_{t\ge0}$ and $\{Z_t\}_{t\ge0}$ is the same under the condition of $\mathcal F_t^{B,W}\bigvee\mathcal F^T_{\infty}\bigvee\mathcal F^{\tilde Z}_{t}$ and $\mathcal F_t^{B,W}\bigvee\mathcal F^{\tilde Z}_{t}$, which implies
\begin{equation}\mathcal F^Z_{\infty}\bigvee\mathcal F^B_{\infty}\perp\mathcal F^T_{\infty}|\mathcal F^{B,W}_t\bigvee\mathcal F^{\tilde Z}_{t}.\label{conditional independent}\end{equation}
In \eqref{conditional independent}, let $t=0$ we obtain $\mathcal F^Z_{\infty}\bigvee\mathcal F^B_{\infty}\perp\mathcal F^T_{\infty}$. Note that $\{B_t\}_{t\ge0}$ is also independent with $\{Z_t\}_{t\ge0}$, hence we can conclude that $\{B_t\}_{t\ge0}$, $\{Z_t\}_{t\ge0}$ and $\{\rho_t\}_{t\ge0}$ are mutually independent.

"(C3)$\Rightarrow$(C1)": Note that $\mathcal F^{B,W}_t\subset\mathcal F_t^B\bigvee\mathcal F_t^Z\bigvee\mathcal F_t^T$, and obviously $\mathcal F^B_{\infty}$, $\mathcal F^T_{\infty}$ and $\mathcal F^Z_t$ are mutually independent given $\mathcal F^B_t$ by the condition (C3). Then for any $\xi\in\mathcal F^B_{\infty}$,
$$E\left[\xi\Big|\mathcal F_t^{B,W}\right]=E\left[E[\xi|\mathcal F_t^B\bigvee\mathcal F_t^Z\bigvee\mathcal F^T_t]\Big|\mathcal F_t^{B,W}\right]=E\left[E[\xi|\mathcal F_t^B]\Big|\mathcal F_t^{B,W}\right]=E\left[\xi\Big|\mathcal F_t^B\right],$$
with similar approach we can prove $E\left[\xi\Big|\mathcal F^{B,W}_t\bigvee\mathcal F^T_{\infty}\right]=E\left[\xi\big|\mathcal F_t^B\right]$ as well, immediately
$$E\left[\xi\Big|\mathcal F^{B,W}_t\right]=E\left[\xi\Big|\mathcal F^{B,W}_t\bigvee\mathcal F^T_{\infty}\right],\forall\xi\in\mathcal F^B_{\infty},$$
which is equivalent to $\mathcal F^B_{\infty}\perp\mathcal F^T_{\infty}|\mathcal F_t^{B,W}$.

As for $\mathcal F^W_{\infty}\perp\mathcal F^T_{\infty}|\mathcal F^{B,W}_t$, we first observe that $\{B_t\}_{t\ge0}$ and $\{Z_t\}_{t\ge0}$ are martingales with respect to $\mathcal F^{B,Z}_t\bigvee\mathcal F^T_{\infty}$ by the independecy of $\{\rho_t\}_{t\ge0}$, $\{B_t\}_{t\ge0}$ and $\{Z_t\}_{t\ge0}$. So according to
$$W_t=\int_0^t\rho_sdB_s+\int_0^t\sqrt{1-\rho_s^2}dZ_s,$$
$\{W_t\}_{t\ge0}$ is a martingale with respect to $\mathcal F^{B,Z}_t\bigvee\mathcal F^T_{\infty}$. Since $\mathcal F_t^{B,W}\subset\mathcal F^{B,Z}_t\bigvee\mathcal F^T_{\infty}$ and $\mathcal F_t^{B,W}\bigvee\mathcal F^T_{\infty}\subset\mathcal F^{B,Z}_t\bigvee\mathcal F^T_{\infty}$, and note that $\{W_t\}_{t\ge0}$ is adapted to $\mathcal F_t^{B,W}$ and $\mathcal F_t^{B,W}\bigvee\mathcal F^T_{\infty}$ respectively, so $\{W_t\}_{t\ge0}$ is a martingale with respect to $\mathcal F_t^{B,W}$ and $\mathcal F_t^{B,W}\bigvee\mathcal F^T_{\infty}$ respectively. Hence, by L\'evy characterisation, $\{W_t\}_{t\ge0}$ is a Brownian motion with respect to $\mathcal F_t^{B,W}$ and $\mathcal F_t^{B,W}\bigvee\mathcal F^T_{\infty}$ respectively. Thus, the distribution of $\{W_t\}_{t\ge0}$ is the same under the condition of $\mathcal F_t^{B,W}\bigvee\mathcal F^T_{\infty}$ and $\mathcal F_t^{B,W}$, which result in $\mathcal F^W_{\infty}\perp\mathcal F^T_{\infty}|\mathcal F_t^{B,W}$.

\end{proofofproposition}

We prove Proposition \ref{convergence in distribution} by comparing the distribution of discrete local correlation model and discrete common decomposition model.

\begin{proofofproposition}{\ref{convergence in distribution}}Given the partition $\Pi$, let
\begin{align*}\rho_u^\Pi\triangleq&\rho_{t_i},\ t_i\le u<t_{i+1},\\
W_s^{\Pi}\triangleq&\int_0^s\rho_u^{\Pi}dB_u+\int_0^s\sqrt{1-(\rho_u^{\Pi})^2}dZ_u.\end{align*}
Then
$$W_s^{\Pi}=\sum_{k=0}^i(\rho_{t_k}\Delta B_{t_k}+\sqrt{1-\rho_{t_k}^2}\Delta Z_{t_k})+\rho_{t_i}(B_s-B_{t_i})+\sqrt{1-\rho_{t_i}^2}(Z_s-Z_{t_i}),t_i\le s<t_{i+1},$$
where $\Delta B_{t_k}=B_{t_{k+1}}-B_{t_k},
\Delta Z_{t_k}=Z_{t_{k+1}}-Z_{t_k}$.

Observe that given $\mathcal F^T_{\infty}$, the conditional distribution of $(\Delta B_{t_i},
\Delta W^\Pi_{t_i})$ is
$$(\Delta B_{t_i},\Delta W^\Pi_{t_i})\sim N\left(\begin{pmatrix}0\\0\end{pmatrix},\begin{pmatrix}\Delta t_i&\rho_{t_i}\Delta t_i\\
 \rho_{t_i}\Delta t_i&\Delta t_i\end{pmatrix}\right).$$
 which is just the same as the conditional distribution of
  $(\Delta B_{t_i},\Delta\tilde W^{\Pi}_{t_{i}})$.
If the condition (C3) holds, $\{B_t\}_{t\ge0}$ and $\{Z_t\}_{t\ge0}$ are independent Brownian motions with respect to $\mathcal F^{B,Z}_t\bigvee\mathcal F^T_{\infty}$. Hence, by the independent property of increments, given $\mathcal F^T_{\infty}$, we have
$$(\Delta B_{t_0},\Delta B_{t_1},\dots,\Delta B_{t_{n-1}},\Delta W_{t_0}^\Pi,\Delta W_{t_1}^\Pi,\dots,\Delta W_{t_{n-1}}^\Pi)\overset{d}{=}(\Delta B_{t_0},\Delta B_{t_1},\dots,\Delta B_{t_{n-1}},\Delta\tilde W_{t_0}^\Pi,\Delta\tilde W_{t_1}^\Pi,\dots,\Delta\tilde W_{t_{n-1}}^\Pi).$$
Consequently,
\begin{equation}(B_{t_0},B_{t_1},\dots,B_{t_{n-1}},W_{t_0}^\Pi,W_{t_1}^\Pi,\dots,W_{t_{n-1}}^\Pi)\overset{d}{=}(B_{t_0},B_{t_1},\dots,B_{t_{n-1}},\tilde W_{t_0}^\Pi,\tilde W_{t_1}^\Pi,\dots,\tilde W_{t_{n-1}}^\Pi).\label{identical distribution of W^Pi}\end{equation}
Next, for any $K,L\in\mathbb N$ given $u_k,v_l,k=1,2,\dots,K,l=1,2,\dots,L$, we consider the difference between the distribution of $(B_{u_1},B_{u_2},\dots,B_{u_K},W^\Pi_{v_1},W^\Pi_{v_2},\dots,W^\Pi_{v_L})$ and $(B_{u_1},B_{u_2},\dots,B_{u_K},\tilde W^\Pi_{v_1},\tilde W^\Pi_{v_2},\dots,\tilde W^\Pi_{v_L})$. Let
$$i_k=\sup\{z\in\mathbb Z:t_z<u_k\},j_l=\sup\{z\in\mathbb Z:t_z<v_l\},k=1,2,\dots,K,l=1,2,\dots,L.$$
For any $\epsilon>0$, we first give a $\delta$ small enough\footnote{Since we focus on the properties when $||\Pi||\to0$, we can only consider the case that $||\Pi||<\min(u_1,v_1)/2$. Then $t_{i_k}>u_1/2$, $t_{j_l}>v_1/2$, $\forall k,l$, hence there always exists a $\delta$ satisfy the condition.} such that for any $a_k,k=1,2,\dots,K$ and $b_l,l=1,2,\dots,L$,
\begin{equation}\sum_{k=1}^KP(|B_{t_{i_k}}-a_k|\le\delta)+\sum_{l=1}^LP(|W^\Pi_{t_{j_l}}-b_l|\le\delta)<\frac{\epsilon}2.\label{delta small enough}\end{equation}
Observe that
\begin{align*}P(B_{t_{i_1}}\le a_1&-\delta,\dots,B_{t_{i_K}}\le a_K-\delta,W^\Pi_{t_{j_1}}\le b_1-\delta,\dots,W^\Pi_{t_{j_L}}\le b_L-\delta,\\
&B_{u_k}-B_{t_{i_k}}\le\delta,W^\Pi_{v_l}-W^\Pi_{t_{j_l}}\le\delta,k=1,\dots,K,l=1,\dots,L)\\
\le P(B_{u_1}\le a_1&,\dots,B_{u_K}\le a_K,W^\Pi_{v_1}\le b_1,\dots,W^\Pi_{v_L}\le b_L)\\
\le P(B_{t_{i_1}}\le a_1&+\delta,\dots,B_{t_{i_K}}\le a_K+\delta,W^\Pi_{t_{j_1}}\le b_1+\delta,\dots,W^\Pi_{t_{j_L}}\le b_L+\delta)\\
&+\sum_{k=1}^KP(B_{u_k}-B_{t_{i_k}}\le-\delta)+\sum_{l=1}^LP(W^\Pi_{v_l}-W^\Pi_{t_{j_l}}\le-\delta),\end{align*}
similar inequality holds for $P(B_{u_1}\le a_1,\dots,B_{u_K}\le a_K,\tilde W^\Pi_{v_1}\le b_1,\dots,\tilde W^\Pi_{v_L}\le b_L)$. Let $$H=\{B_{u_k}-B_{t_{i_k}}\le\delta,\tilde W^\Pi_{v_l}-\tilde W^\Pi_{t_{j_l}}\le\delta,k=1,\dots,K,l=1,\dots,L\},$$ then
\begin{align}\nonumber&P(B_{u_1}\le a_1,\dots,B_{u_K}\le a_K,W^\Pi_{v_1}\le b_1,\dots,W^\Pi_{v_L}\le b_L)-P(B_{u_1}\le a_1,\dots,B_{u_K}\le a_K,\tilde W^\Pi_{v_1}\le b_1,\dots,\tilde W^\Pi_{v_L}\le b_L)\\
\nonumber\le&P(B_{t_{i_1}}\le a_1+\delta,\dots,B_{t_{i_K}}\le a_K+\delta,W^\Pi_{t_{j_1}}\le b_1+\delta,\dots,W^\Pi_{t_{j_L}}\le b_L+\delta)+\sum_{k=1}^KP(B_{u_k}-B_{t_{i_k}}\le-\delta)\\
&+\sum_{l=1}^LP(W^\Pi_{v_l}-W^\Pi_{t_{j_l}}\le-\delta)-P(B_{t_{i_1}}\le a_1-\delta,\dots,B_{t_{i_K}}\le a_K-\delta,\tilde W^\Pi_{t_{j_1}}\le b_1-\delta,\dots,\tilde W^\Pi_{t_{j_L}}\le b_L-\delta,H)\label{diff prob}.
\end{align}
Note that \eqref{identical distribution of W^Pi} implies
\begin{align*}&P(B_{t_{i_1}}\le a_1-\delta,\dots,B_{t_{i_K}}\le a_K-\delta,\tilde W^\Pi_{t_{j_1}}\le b_1-\delta,\dots,\tilde W^\Pi_{t_{j_L}}\le b_L-\delta)\\
=&P(B_{t_{i_1}}\le a_1-\delta,\dots,B_{t_{i_K}}\le a_K-\delta,W^\Pi_{t_{j_1}}\le b_1-\delta,\dots,W^\Pi_{t_{j_L}}\le b_L-\delta),\end{align*}
and compared the first term and last term in the right hand of \eqref{diff prob}, we have
\begin{align}
\nonumber&P(B_{t_{i_1}}\le a_1+\delta,\dots,B_{t_{i_K}}\le a_K+\delta,W^\Pi_{t_{j_1}}\le b_1+\delta,\dots,W^\Pi_{t_{j_L}}\le b_L+\delta)\\
\nonumber&-P(B_{t_{i_1}}\le a_1-\delta,\dots,B_{t_{i_K}}\le a_K-\delta,\tilde W^\Pi_{t_{j_1}}\le b_1-\delta,\dots,\tilde W^\Pi_{t_{j_L}}\le b_L-\delta,H)\\
\nonumber=&P(B_{t_{i_1}}\le a_1+\delta,\dots,B_{t_{i_K}}\le a_K+\delta,W^\Pi_{t_{j_1}}\le b_1+\delta,\dots,W^\Pi_{t_{j_L}}\le b_L+\delta)\\
\nonumber&-P(B_{t_{i_1}}\le a_1-\delta,\dots,B_{t_{i_K}}\le a_K-\delta,\tilde W^\Pi_{t_{j_1}}\le b_1-\delta,\dots,\tilde W^\Pi_{t_{j_L}}\le b_L-\delta)\\
\nonumber&+P(B_{t_{i_1}}\le a_1-\delta,\dots,B_{t_{i_K}}\le a_K-\delta,\tilde W^\Pi_{t_{j_1}}\le b_1-\delta,\dots,\tilde W^\Pi_{t_{j_L}}\le b_L-\delta,H^c)\\
\nonumber\le&P(B_{t_{i_1}}\le a_1+\delta,\dots,B_{t_{i_K}}\le a_K+\delta,W^\Pi_{t_{j_1}}\le b_1+\delta,\dots,W^\Pi_{t_{j_L}}\le b_L+\delta)\\
\nonumber&-P(B_{t_{i_1}}\le a_1-\delta,\dots,B_{t_{i_K}}\le a_K-\delta,W^\Pi_{t_{j_1}}\le b_1-\delta,\dots,W^\Pi_{t_{j_L}}\le b_L-\delta)+P(H^c)\\
\le&\sum_{k=1}^KP(|B_{t_{i_k}}-a_k|\le\delta)+\sum_{l=1}^LP(|W^\Pi_{t_{j_l}}-b_l|\le\delta)+P(H^c).\label{inequality of distribution}
\end{align}
Substituting \eqref{inequality of distribution} into \eqref{diff prob}, we obtain
\begin{align}\nonumber&P(B_{u_1}\le a_1,\dots,B_{u_K}\le a_K,W^\Pi_{v_1}\le b_1,\dots,W^\Pi_{v_L}\le b_L)-P(B_{u_1}\le a_1,\dots,B_{u_K}\le a_K,\tilde W^\Pi_{v_1}\le b_1,\dots,\tilde W^\Pi_{v_L}\le b_L)\\
\nonumber\le&\sum_{k=1}^KP(B_{u_k}-B_{t_{i_k}}\le-\delta)+\sum_{l=1}^LP(W^\Pi_{v_l}-W^\Pi_{t_{j_l}}\le-\delta)+\sum_{k=1}^KP(|B_{t_{i_k}}-a_k|\le\delta)+\sum_{l=1}^LP(|W^\Pi_{t_{j_l}}-b_l|\le\delta)+P(H^c)\\
\nonumber\le&\sum_{k=1}^KP(B_{u_k}-B_{t_{i_k}}\le-\delta)+\sum_{l=1}^LP(W^\Pi_{v_l}-W^\Pi_{t_{j_l}}\le-\delta)+\sum_{k=1}^KP(|B_{t_{i_k}}-a_k|\le\delta)+\sum_{l=1}^LP(|W^\Pi_{t_{j_l}}-b_l|\le\delta)\\
\nonumber&+\sum_{k=1}^KP(B_{u_k}-B_{t_{i_k}}\ge\delta)+\sum_{l=1}^LP(\tilde W^\Pi_{v_l}-\tilde W^\Pi_{t_{j_l}}\ge\delta)\\
=&2\sum_{k=1}^K\Phi(-\frac{\delta}{\sqrt{u_k-t_{i_k}}})+2\sum_{l=1}^L\Phi(-\frac{\delta}{\sqrt{v_l-t_{j_l}}})+\sum_{k=1}^KP(|B_{t_{i_k}}-a_k|\le\delta)+\sum_{l=1}^LP(|W^\Pi_{t_{j_l}}-b_l|\le\delta),
\label{difference of two distributions}\end{align}
where $\Phi$ denotes the standard normal distribution. For given $K,L,\delta,\epsilon$, it is not difficult to verify that, if 
$$||\Pi||<\frac{\delta^2}{\left(\Phi^{-1}(\frac{\epsilon}{4(K+L)})\right)^2},$$
we have
$$\Phi(-\frac{\delta}{\sqrt{||\Pi||}})<\frac{\epsilon}{4(K+L)},$$
thus
 \begin{equation}2\sum_{k=1}^K\Phi(-\frac{\delta}{\sqrt{u_k-t_{i_k}}})+2\sum_{l=1}^L\Phi(-\frac{\delta}{\sqrt{v_l-t_{j_l}}})\le2(K+L)\Phi(-\frac{\delta}{\sqrt{||\Pi||}})<\frac{\epsilon}2.\label{Pi small enough}\end{equation}
 As a consequence of \eqref{delta small enough}, \eqref{difference of two distributions} and \eqref{Pi small enough},
 \begin{align*}
 P(B_{u_1}\le a_1,\dots,B_{u_K}\le a_K,W^\Pi_{v_1}\le b_1,\dots,W^\Pi_{v_L}\le b_L)-P(B_{u_1}\le a_1,\dots,B_{u_K}\le a_K,\tilde W^\Pi_{v_1}\le b_1,\dots,\tilde W^\Pi_{v_L}\le b_L)\le\epsilon,
 \end{align*}
 similarly,
 \begin{align*}
P(B_{u_1}\le a_1,\dots,B_{u_K}\le a_K,\tilde W^\Pi_{v_1}\le b_1,\dots,\tilde W^\Pi_{v_L}\le b_L)-P(B_{u_1}\le a_1,\dots,B_{u_K}\le a_K,W^\Pi_{v_1}\le b_1,\dots,W^\Pi_{v_L}\le b_L)\le\epsilon,
 \end{align*}
 i.e.
 \begin{align}
 \nonumber|P(B_{u_1}\le a_1,\dots,B_{u_K}\le a_K,&W^\Pi_{v_1}\le b_1,\dots,W^\Pi_{v_L}\le b_L)\\
 &-P(B_{u_1}\le a_1,\dots,B_{u_K}\le a_K,\tilde W^\Pi_{v_1}\le b_1,\dots,\tilde W^\Pi_{v_L}\le b_L)|\le\epsilon,
 \label{tilde W^Pi=W^Pi}\end{align}
 From the definition of It\^{o}'s integral, we have
 \begin{equation}\label{W^Pi convergence}(B_{u_1},B_{u_2},\dots,B_{u_K},W^\Pi_{v_1},W^\Pi_{v_2},\dots,W^\Pi_{v_L})\xrightarrow[\quad]{d}(B_{u_1},B_{u_2},\dots,B_{u_K},W_{v_1},W_{v_2},\dots,W_{v_L}).\end{equation}
 Combining \eqref{tilde W^Pi=W^Pi} and \eqref{W^Pi convergence}, as $||\Pi||\to0$, we have
 \begin{align*}(B_{u_1},B_{u_2},\dots,B_{u_K},\tilde W^\Pi_{v_1},\tilde W^\Pi_{v_2},\dots,\tilde W^\Pi_{v_L})\xrightarrow[\quad]{d}(B_{u_1},B_{u_2},\dots,B_{u_K},W_{v_1},W_{v_2},\dots,W_{v_L}).
 \end{align*}

\end{proofofproposition}

\subsection{Proofs of Results in Section \ref{Construction of Two Correlated Brownian Motions}}\label{proof of section 3}
\begin{proofoftheorem}{\ref{combine BM}}
By optional stopping theorem, $X_T$, $Y_S$ are martingales under $\mathcal F^{X_T}$, $\mathcal F^{Y_S}$ respectively and $\mathcal{F}_t^{Y_S}\perp\mathcal{F}^{X_T}_{\infty}|\mathcal{F}_t^{X_T}$, $\mathcal{F}_t^{X_T}\perp\mathcal{F}^{Y_S}_{\infty}|\mathcal{F}_t^{Y_S}$ guarantee that
$$E\left[\left.X_{T_u}\right|\mathcal F^{X_T,Y_S}_t\right]=E\left[\left.X_{T_u}\right|\mathcal F^{X_T}_t\right]=X_{T_t},\ E\left[\left.Y_{S_u}\right|\mathcal F^{X_T,Y_S}_t\right]=E\left[\left.Y_{S_u}\right|\mathcal F^{Y_S}_t\right]=Y_{S_t},\ u\ge t,$$
which give the martingale properties of $X_T$ and $Y_S$.

If $T$ and $S$ are strictly increasing and $T_t+S_t=t,\forall t$, then \eqref{T_t-T_s} holds. With the same discussion in the beginning of Section \ref{model setup}, $T$ and $S$ are derivable with respect to $t$, let
$$\lambda_t\triangleq\frac{dT_t}{dt},\mu_t\triangleq\frac{dS_t}{dt},$$
and $\tau$, $\varsigma$ be defined as in \eqref{tau,varsigma}. Then, $\lambda_t+\mu_t=1,\forall t$ and $\tau$, $\varsigma$ are continuous and strictly increasing processes.

Next, we claim that $[X_T,Y_S]_t=0$.
We first consider the case that $E{\tau_t},E{\varsigma_t}<\infty$ for any $t>0$. Observe that
\begin{equation}\label{tau_t}\int_0^t\frac{1_{\{\lambda_{\tau_s}\neq0\}}}{\lambda_{\tau_s}}ds+\int_0^t1_{\{\lambda_{\tau_s}=0\}}d\tau_s=\int_0^{\tau_t}\frac{1_{\{\lambda_{s}\neq0\}}}{\lambda_{s}}dT_s+\int_0^{\tau_t}1_{\{\lambda_{s}=0\}}ds=\int_0^{\tau_t}1_{\{\lambda_{s}\neq0\}}ds+\int_0^{\tau_t}1_{\{\lambda_{s}=0\}}ds=\tau_t,\end{equation}
\begin{equation}\label{varsigma_t}\int_0^t\frac{1_{\{\mu_{\varsigma_s}\neq0\}}}{\mu_{\varsigma_s}}ds+\int_0^t1_{\{\mu_{\varsigma_s}=0\}}d\varsigma_s=\int_0^{\varsigma_t}\frac{1_{\{\mu_{s}\neq0\}}}{\mu_{s}}dT_s+\int_0^{\varsigma_t}1_{\{\mu_{s}=0\}}ds=\int_0^{\varsigma_t}1_{\{\mu_{s}\neq0\}}ds+\int_0^{\varsigma_t}1_{\{\mu_{s}=0\}}ds=\varsigma_t,\end{equation}
then
\begin{equation}E\left[\int_0^t\frac{1_{\{\lambda_{\tau_s}\neq0\}}}{\lambda_{\tau_s}}ds+\int_0^t1_{\{\lambda_{\tau_s}=0\}}d\tau_s\right]<\infty,E\left[\int_0^t\frac{1_{\{\mu_{\varsigma_s}\neq0\}}}{\mu_{\varsigma_s}}ds+\int_0^t1_{\{\mu_{\varsigma_s}=0\}}d\varsigma_s\right]<\infty,\forall t>0.\label{Ealpha_t}\end{equation}
Since $\{T_t\}_{t\ge0}$ is a time change of $\mathcal F_t$, so $T_t$ is adapted to $\mathcal{F}_{T_t}$, and $\lambda_t$ is adapted to $\mathcal{F}_{T_t}$, consequently $\lambda_{{\tau_t}}$ is adapted to $\mathcal{F}_{t}$. Similarly, $\mu_{{\varsigma_t}}$ is adapted to $\mathcal{F}_{t}$ as well. According to \eqref{Ealpha_t}, the stochastic processes 
$$M_t\triangleq\int_0^t\frac{1_{\{\lambda_{\tau_s}\neq0\}}}{\sqrt{\lambda_{\tau_s}}}dX_s+\int_0^t1_{\{\lambda_{\tau_s}=0\}}d\tilde X_{\tau_s},N_t\triangleq\int_0^t\frac{1_{\{\mu_{\varsigma_s}\neq0\}}}{\sqrt{\mu_{\varsigma_s}}}dY_s+\int_0^t1_{\{\mu_{\varsigma_s}=0\}}d\tilde Y_{\varsigma_s}$$
are well-defined, where $(\tilde X,\tilde Y)$ is a 2-dimension Brownian motion independent with $\mathcal F_{\infty}$. From \eqref{tau_t} and \eqref{varsigma_t}, we have
\begin{equation}[M]_t={\tau_t},[N]_t={\varsigma_t}.\label{[M] and [N]}\end{equation}
By the independency, it is not difficult to verify that$\{X_tY_t\}_{t\ge0}$, $\{X_t\tilde Y_{\varsigma_t}\}_{t\ge0}$, $\{\tilde X_{\tau_t}Y_t\}_{t\ge0}$ and $\{\tilde X_{\tau_t}\tilde Y_{\varsigma_t}\}_{t\ge0}$ are continuous martingales respectively\footnote{To speak specifically, for example, we could first prove that $\{X_{S_t}\tilde Y_t\}_{t\ge0}$ is a martingale by independency. Note that $\{X_t\tilde Y_{\varsigma_t}\}_{t\ge0}$ can be seen as the time-changed process of $\{X_{S_t}\tilde Y_t\}_{t\ge0}$, and thus $\{X_t\tilde Y_{\varsigma_t}\}_{t\ge0}$ is a martingale. The arguments for $\{\tilde X_{\tau_t}Y_t\}_{t\ge0}$ and $\{\tilde X_{\tau_t}\tilde Y_{\varsigma_t}\}_{t\ge0}$ are similar. The continuity of these processes come from the continuity of $X$, $Y$, $\tilde X$, $\tilde Y$, $\tau$ and $\varsigma$.}, thus
\begin{equation}\label{[X,Y]=0}[X,Y]_t=<X,Y>_t=[X,\tilde Y_{\varsigma}]_t=<X,\tilde Y_{\varsigma}>_t=[\tilde X_{\tau},Y]_t=<\tilde X_{\tau},Y>_t=[\tilde X_{\tau},\tilde Y_{\varsigma}]_t=<\tilde X_{\tau},\tilde Y_{\varsigma}>_t=0.\end{equation}
Consequently,
\begin{align}\nonumber[M,N]_t=&\int_0^t\frac{1_{\{\lambda_{\tau_s}\neq0,\mu_{\varsigma_s}\neq0\}}}{\sqrt{\lambda_{\tau_s}\mu_{\varsigma_s}}}d[X,Y]_s+\int_0^t\frac{1_{\{\lambda_{\tau_s}\neq0,\mu_{\varsigma_s}=0\}}}{\sqrt{\lambda_{\tau_s}}}d[X,\tilde Y_{\varsigma}]_s\\
&+\int_0^t\frac{1_{\{\lambda_{\tau_s}=0,\mu_{\varsigma_s}\neq0\}}}{\sqrt{\mu_{\varsigma_s}}}d[\tilde X_{\tau},Y]_s+\int_0^t1_{\{\lambda_{\tau_s}=0,\mu_{\varsigma_s}=0\}}d[\tilde X_{\tau},\tilde Y_{\varsigma}]_s=0.\label{construct zero quadratic covariation}\end{align}
By the continuity of $\tau$ and $\varsigma$, $M$ and $N$ are continuous as well. Hence according to \cite{revuz2013continuous}[Chapter V, Theorem 1.10] and \eqref{[M] and [N]}, 
$M_T$ and $N_S$ are two independent Brownian motions. As a consequence,
\begin{equation}[M_T,N_S]_t=0,\forall t.\label{[M_T,N_S]=0}\end{equation}
On the other hand, by the definition of $M$ and $N$,
$$M_{T_t}=\int_0^{T_t}\frac{1_{\{\lambda_{\tau_s}\neq0\}}}{\sqrt{\lambda_{\tau_s}}}dX_s+\int_0^{T_t}1_{\{\lambda_{\tau_s}=0\}}d\tilde X_{\tau_s}=\int_0^{t}\frac{1_{\{\lambda_{s}\neq0\}}}{\sqrt{\lambda_{s}}}dX_{T_s}+\int_0^{t}1_{\{\lambda_{s}=0\}}d\tilde X_{s},$$
$$N_{S_t}=\int_0^{S_t}\frac{1_{\{\mu_{\varsigma_s}\neq0\}}}{\sqrt{\mu_{\varsigma_s}}}dY_s+\int_0^{S_t}1_{\{\mu_{\varsigma_s}=0\}}d\tilde Y_{\varsigma_s}=\int_0^{t}\frac{1_{\{\mu_{s}\neq0\}}}{\sqrt{\mu_{s}}}dY_{S_s}+\int_0^{t}1_{\{\mu_{s}=0\}}d\tilde Y_{s},$$
thus
\begin{align}\nonumber[M_T,N_S]_t=&\int_0^t\frac{1_{\{\lambda_{s}\neq0,\mu_{s}\neq0\}}}{\sqrt{\lambda_{s}\mu_{s}}}d[X_T,Y_S]_s+\int_0^t\frac{1_{\{\lambda_{s}\neq0,\mu_{s}=0\}}}{\sqrt{\lambda_{s}}}d[X_T,\tilde Y]_s\\
&+\int_0^t\frac{1_{\{\lambda_{s}=0,\mu_{s}\neq0\}}}{\sqrt{\mu_{s}}}d[\tilde X,Y_S]_s+\int_0^t1_{\{\lambda_{s}=0,\mu_{s}=0\}}d[\tilde X,\tilde Y]_s.\label{[M_T,N_S]}\end{align}
With the similar discussions of \eqref{[X,Y]=0}, we have $[X_T,\tilde Y]_t=[\tilde X,Y_S]_t=0$. Comparing \eqref{[M_T,N_S]=0} and \eqref{[M_T,N_S]}, we obtain
$$\int_0^t\frac{1_{\{\lambda_{s}\neq0,\mu_{s}\neq0\}}}{\sqrt{\lambda_{s}\mu_{s}}}d[X_T,Y_S]_s=0,\forall t\ge0,$$
and immediately,
$$\int_0^t1_{\{\lambda_{s}\neq0,\mu_{s}\neq0\}}d[X_T,Y_S]_s=0,\forall t\ge0.$$
Note that $\int_0^t1_{\{\lambda_s=0\}}dT_s=\int_0^t1_{\{\mu_s=0\}}dS_s=0,$ which implies (\cite{revuz2013continuous}[Chapter IV, Proposition 1.12])
$$\int_0^t1_{\{\lambda_s=0\}}dX_{T_s}=\int_0^t1_{\{\mu_s=0\}}dY_{S_s}=0.$$
Therefore,
$$X_{T_t}=\int_0^t1_{\{\lambda_s\neq0\}}dX_{T_s},Y_{S_t}=\int_0^t1_{\{\mu_s\neq0\}}dY_{S_s},$$
and
$$[X_{T},Y_S]_t=\int_0^t1_{\{\lambda_{s}\neq0,\mu_{s}\neq0\}}d[X_T,Y_S]_s=0.$$

%
%
%
%

If there is a $t>0$ subject to $E{\tau_t}=\infty$ or $E{\varsigma_t}=\infty$, then we define
\begin{alignat*}{2}
\lambda_t^n&\triangleq\begin{cases}\lambda_t,\quad &t\le n\\ \frac12,\quad &t>n\end{cases},&\mu_t^n&\triangleq\begin{cases}\mu_t,\quad &t\le n\\ \frac12,\quad &t>n\end{cases},\\
T_t^n&\triangleq\int_0^t\lambda_u^ndu,&S_t^n&\triangleq\int_0^t\mu_u^ndu,\\
\tau_t^n&\triangleq\inf\{u:T_u^n>t\},&\varsigma_t^n&\triangleq\inf\{u:S_u^n>t\},
\end{alignat*}
hence $E{\tau_t}^n<\infty,E{\varsigma_t}^n<\infty$ and according to the previous proof, we have $[X_{T^n},Y_{S^n}]_t=0$. When $t<n$, $(T_t^n,S_t^n)=(T_t,S_t)$, so we have $[X_T,Y_S]_t=0$. Let $n\to\infty$, we complete the proof of our claim.


Since $[X_T,Y_S]_t=0$, we have
$$[B]_t=[X_T+Y_S]_t=[X_T]_t+[Y_S]_t+2[X_T,Y_S]_t=T_t+S_t=t,$$
similarly,
$$[W]_t=[X_T-Y_S]_t=t.$$
Hence $B$ and $W$ are Brownian motions with respect to $\mathcal F^{B,W}$ (which is equal to $\mathcal F^{X_T,Y_S}$). And $[B,W]_t=[X_T+Y_S,X_T-Y_S]_t=T_t-S_t, t\geq 0$.
%

\end{proofoftheorem}

Through Theorem \ref{combine BM}, the proof of Corollary \ref{XYT independent} is straightforward.

\begin{proofofcorollary}{\ref{XYT independent}}
Let
$$\tilde{\mathcal{F}}_t\triangleq\sigma\{X_u,Y_u,\{T_v\le u\},\{S_v\le u\}:u\le t,\forall v\}.$$
Then from the independency of $\{X_t\}_{t\ge0},\{Y_t\}_{t\ge0},\{T_t\}_{t\ge0}$, we know that $\{X_t\}_{t\ge0},\{Y_t\}_{t\ge0}$ are two standard Brownian motions with respect to $\tilde{\mathcal{F}}_t$. By definition of $\tilde{\mathcal F}$, $\{T_u\le t\},\{S_u\le t\}\in\tilde{\mathcal F}_t$ for any $u>0$, hence $T_u,S_u$ are stopping times, and $\{T_t\}_{t\ge0},\{S_t\}_{t\ge0}$ are time changes of $\tilde{\mathcal F}$.

Then by Lemma \ref{lemma of XYT independent}, the conditions in Theorem \ref{combine BM} are satisfied, and we get the desired result.
\end{proofofcorollary}
\subsection{Proof of Results in Section \ref{Pricing Financial Derivatives by Decomposition of Two Correlated Brownian Motions}}
\begin{proofofproposition}{\ref{fourier transform}}
By the definition of $\hat G$,
\begin{equation}\hat{G}(\lambda_1,\lambda)=\int_{-\infty}^{\infty}\int_{-\infty}^{\infty}
e^{i\lambda_1x_1+i\lambda x}G(x_1,x)dx_1dx.\label{fourier transform for G}\end{equation}
According to Fubini theorem,
\begin{align}\nonumber\int_{-\infty}^{\infty}e^{i\lambda_1x_1}G(x_1,x)dx_1=&E\left[\int_{-\infty}^{\infty}e^{i\lambda_1x_1}(\gamma_1+\gamma_2e^{\boldsymbol{\gamma}_3^\top\boldsymbol{M}_{\tau}})1_{\{\boldsymbol{\gamma}_4^\top\boldsymbol{M}_{\tau}\le x_1\}}1_{\{\boldsymbol{\gamma}_5^\top\boldsymbol{M}_{\tau}\le x\}}dx_1\right]\\
\nonumber=&E\left[(\gamma_1+\gamma_2e^{\boldsymbol{\gamma}_3^\top\boldsymbol{M}_{\tau}})1_{\{\boldsymbol{\gamma}_5^\top\boldsymbol{M}_{\tau}\le x\}}\int_{-\infty}^{\infty}e^{i\lambda_1x_1}1_{\{\boldsymbol{\gamma}_4^\top\boldsymbol{M}_{\tau}\le x_1\}}dx_1\right]\\
=&\frac1{i\lambda_1}E\left[e^{i\lambda_1\boldsymbol{\gamma}_4^\top\boldsymbol{M}_{\tau}}(\gamma_1+\gamma_2e^{\boldsymbol{\gamma}_3^\top\boldsymbol{M}_{\tau}})1_{\{\boldsymbol{\gamma}_5^\top\boldsymbol{M}_{\tau}\le x\}}\right],\label{calculate fourier transform}\end{align}
where the last equality comes from the fact that the imaginary part of $\lambda_1$ is positive. 
Substituting \eqref{calculate fourier transform} into \eqref{fourier transform for G}, then with similar calculation for $x$, we have 
\begin{align*}\hat{G}(\lambda_1,\lambda)=&\int_{-\infty}^{\infty}\frac1{i\lambda_1}e^{i\lambda x}E\left[e^{i\lambda_1\boldsymbol{\gamma}_4^\top\boldsymbol{M}_{\tau}}(\gamma_1+\gamma_2e^{\boldsymbol{\gamma}_3^\top\boldsymbol{M}_{\tau}})1_{\{\boldsymbol{\gamma}_5^\top\boldsymbol{M}_{\tau}\le x\}}\right]dx\\
=&-\frac1{\lambda\lambda_1}E\left[e^{i\lambda_1\boldsymbol{\gamma}_4^\top\boldsymbol{M}_{\tau}+i\lambda\boldsymbol{\gamma}_5^\top\boldsymbol{M}_{\tau}}(\gamma_1+\gamma_2e^{\boldsymbol{\gamma}_3^\top\boldsymbol{M}_{\tau}})\right]\\
=&-\frac{\gamma_1}{\lambda\lambda_1}\Phi_{\boldsymbol{M}_{\tau}}(\lambda_1\boldsymbol{\gamma}_4+\lambda\boldsymbol{\gamma}_5)-\frac{\gamma_2}{\lambda\lambda_1}\Phi_{\boldsymbol{M}_{\tau}}(\lambda_1\boldsymbol{\gamma}_4+\lambda\boldsymbol{\gamma}_5-i\boldsymbol{\gamma}_3),
\end{align*}
where $\Phi_{\boldsymbol{M}_{\tau}}$ denotes the characteristic function of $\boldsymbol{M}_{\tau}$. Thus the proof of \eqref{hat G} is completed.

As for \eqref{Phi M}, note that $X,Y$ and $T$ are mutually independent, it can be calculated by conditional expectation
\begin{align*}
\Phi_{\boldsymbol{M}_{\tau}}(z_1,z_2)=&Ee^{iz_1X_{T_\tau}+iz_2Y_{S_\tau}}=E\left[E[e^{iz_1X_{T_\tau}+iz_2Y_{S_\tau}}|T_\tau,S_\tau]\right]=Ee^{-\frac12T_\tau z_1^2-\frac12S_\tau z_2^2}\\
=&e^{-\frac12\tau z_2^2}Ee^{-\frac12(z_1^2-z_2^2)T_\tau}=e^{-\frac12\tau z_2^2}L_\tau(-\frac12(z_1^2-z_2^2)),
\end{align*}
where $L_t$ represents the generalized fourier transform of $T_t$ at time $t$.
\end{proofofproposition}

\section{Conclusion}
By applying time-change technique, we propose a new method so called common decomposition to study dependency structure for two correlated Brownian motions $(B,W)$. The common decomposition triplet of $(B,W)$ is denoted by $(X,Y,T)$.

We find that $X$ and $Y$ are two independent Brownian motions, $T$ is a time-change process, and we give three equivalence conditions (C1), (C2) and (C3) for the mutual independency of $X$, $Y$ and $T$. The condition (C1) is given from the aspect of filtration. The condition (C2) gives a generalization of Girsanov theorem and we give an example to show that the invariance property of $T$ under the change of measure by applying the condition (C2). The condition (C3) give connections between common decomposition and local correlation.

Conversely, we construct two correlated Brownian motions based on the common decomposition. Furthermore, the simulation method is given from the common decomposition and may have some advantages compared with the Euler-Maruyama scheme under some conditions.

Pricing covariance swap, covariance option and Quanto option show the direct usage of the common decomposition. Moreover, the price and Greeks of 2-color rainbow options is given by combining common decomposition and Fourier transform.

Finally, a numerical experiment is designed to show the difference between stochastic correlation and constant correlation for the price of rainbow options. We find that the results are truly different for Call on Min, Put on Min and Put on Max options in the out-of-the-money case but have little differences for in-the-money case. As for the Call on Max option, the results are always similar for stochastic correlation and constant correlation. We also analyze the pricing error in theoretical and interpret the phenomenon discovered in previous.

\bibliographystyle{elsarticle-harv}

\bibliography{deco-BM_reference}

\end{document}